\documentclass{article}

\usepackage[letterpaper,left=1in, right=1in, top=1in, bottom=1in]{geometry}

\usepackage{graphicx} % Required for inserting images

\usepackage{algorithm}
\usepackage{algorithmic}

\usepackage{amssymb,amsmath,amsfonts,enumitem,amsthm,physics,mathtools,wrapfig}
\usepackage{thm-restate}
\usepackage{thmtools}
\usepackage{xcolor}
\usepackage{hyperref}
\usepackage{newfloat}
\usepackage{listings}
\usepackage{wrapfig}
\usepackage{subcaption}

\usepackage[utf8]{inputenc} % allow utf-8 input
\usepackage[T1]{fontenc}    % use 8-bit T1 fonts
\usepackage{url}            % simple URL typesetting
\usepackage{booktabs}       % professional-quality tables
\usepackage{nicefrac}       % compact symbols for 1/2, etc.
\usepackage{microtype}      % microtypography

\newcommand{\E}{\mathbb{E}}

\newcommand{\N}{\mathbb{N}}

\newcommand{\plog}{\text{polylog}}

\newcommand{\D}{\mathcal{D}}

\newcommand{\Ess}{\mathbf{S}}

\newcommand{\ahat}{{\widehat{a}}}
\newcommand{\atil}{{\widetilde{a}}}

\newcommand\degr[1]{\text{deg(#1)}}

\newcommand{\ddp}{{\delta_{\operatorname{p}}}}
\newcommand{\eps}{{\epsilon}}
\newcommand{\kcl}{{k\mathbb{C}}}

\newcommand{\fkcl}{{f_\kcl}}

\newcommand{\algK}{\textsc{ERS}}

\newcommand{\Xhat}{\widehat{X}}
\newcommand{\Khat}{\widehat{K}}
\newcommand{\Xtil}{\widetilde{X}}
\newcommand{\Itil}{\widetilde{I}}
\newcommand{\LStil}{\widetilde{LS}_{\kcl}}
\newcommand{\Ktil}{\widetilde{K}}

\newcommand{\Kone}{K^{-1}}
\newcommand{\Konetil}{\widetilde{K^{-1}}}
\newcommand{\Konehat}{\widehat{K^{-1}}}

\newcommand{\Sample}{\textsc{Sample}}
\newcommand{\CliqueAlg}{\textsc{Unbounded.PairWise.Clique.Approx}}
\newcommand{\NearCliqueAlg}{\textsc{Unbounded.PairWise.NearClique.Approx}}
\newcommand{\Turan}{\text{Tur\'anShadow}}

\newtheorem{theorem}{Theorem}[section]
\newtheorem{lemma}[theorem]{Lemma}

\newtheorem{corollary}[theorem]{Corollary}
\newtheorem{definition}[theorem]{Definition}

\title{Scalable K-clique Estimation with Differential Privacy}
\author{Dung Nguyen\thanks{Department of Computer Science, Haverford College, PA, USA; Correspondence to: dnguyen1@haverford.edu}\;, Ritwick Mishra\thanks{Department of Computer Science and Biocomplexity Institute, University of Virginia, VA, USA}\;, Anil Vullikanti\footnotemark[2]}

\date{}

\begin{document}

\maketitle
% \keywords{Subgraph Counting, Differential Privacy}

\begin{abstract}
Counts of $k$-cliques are commonly used metrics in subgraph mining.
Since graphs often have sensitive data, there also has been a lot of work on $k$-clique counts with differential privacy.
However, these metrics have very high global sensitivity, and so more sophisticated techniques have been developed for counting $k$-cliques with privacy.
Smooth sensitivity and
ladder functions were developed for reducing the noise magnitude for private estimates of these metrics.
However, these are computationally very inefficient to estimate.
No polynomial time algorithms are known for smooth sensitivity of $k$-cliques for $k>3$, while the time complexity of ladder functions is lower bounded by the time for exact counts, which does not scale very well.

In this paper, we develop a new highly scalable algorithm for estimating $k$-clique counts with differential privacy.
Our algorithm adapts the ladder function to serve as a smooth upper bound on its local sensitivity, and utilizes the approximation sensitivity framework to calibrate noise with magnitude proportional to an approximation of the bound.
This gives us a significant improvement in the running time.
Experiments show that our method is several orders of magnitude faster than the ladder function based estimates of $k$-clique counts, while the accuracy is similar.
Our algorithm is the first to scale to graphs with millions of edges, and for larger $k$, for which the ladder function algorithm doesn't complete. 

%%%%%%%
% and has been extensively studied.
% They are important from a theoretical perspective, since most of them are hard problems, and in applications in various fields.
% Counting $k$-cliques in a graph is a $W[1]$-hard problem with respect to $k$, thus, it has motivated a long line of studies to approximate it efficiently.
%  Unlike some other small subgraph counting problems, the smooth sensitivity, an alternative to the global sensitivity, of $k$-clique count is intractable in its current form. 
% We instead adapt its ladder function to serve as a smooth upper bound on its local sensitivity, and utilize the approximation sensitivity framework to calibrate noise with magnitude proportional to some approximation of the bound.
%  The approximation of the bound can be efficiently computed, leading to an efficient private algorithm for counting $k$-cliques.
%  By this technique, we obtained a polynomial-time algorithm to estimate the smooth upper bound on the local sensitivity.
% Experiments show that our method is several orders of magnitude faster than previous methods of estimating $k$-clique counts with differential privacy.
\end{abstract}

%%% Local Variables:
%%% mode: latex
%%% TeX-master: "main"
%%% End:

\section{Introduction}

Statistics of different kinds of subgraphs, such as triangles and $k$-cliques, more generally, have been studied extensively in graph mining and network science, e.g., \cite{pagh2012colorful, tsourakakis2011counting, kolountzakis2012efficient, jain2017fast}; 
these have been found useful in a number of applications, such as characterizing and comparing networks, e.g., \cite{bonato2024clique} and identifying cohesive structures, e.g., \cite{pan2023simplifying} (see \cite{ribeiro2021survey} for a survey).
Highly efficient algorithms have been designed for counting $k$-cliques, which scale easily to massive networks, e.g., \cite{jain2017fast}.
Many networks involve sensitive components, and there has been a lot of work on such analyses with privacy protections.
While many privacy models have been proposed,  Differential Privacy (DP) \cite{dwork:fttcs14} has become a standard, since it does not make any assumptions about the adversaries.
Two privacy models have been considered in most prior work on DP for graphs--- edge and node DP. 
Private algorithms have been developed for many graph problems, such as counting small subgraphs, including $k$-cliques, e.g.,~\cite{Kasiviswanathan:2013:AGN:2450206.2450232,zhang:sigmod15, karwa2014private, nissim2007smooth}.

It is challenging to develop accurate, efficient and private algorithms for graph problems such as counting $k$-cliques.
These problems  have very high \emph{global sensitivity}, the maximum change in the output function due to the change in one edge/vertex (see Section~\ref{sec:prelims} for definitions), and DP algorithms such as the Laplacian mechanism, which add noise based on global sensitivity can become very inaccurate.
For instance, under edge DP, the global sensitivity for counts of $k$-cliques is $\Theta(n^{k-2})$, or more specifically $GS_{\kcl} = \binom{n-2}{k-2}$.
The number of 5-cliques in the com-orkut network (with $\sim 10^6$ nodes) is $\sim 10^{10}$ \cite{jain2017fast};
this would make a private estimate of the number of 5-cliques in this network with noise scaled by global sensitivity $\sim 10^{17}$ completely noisy and useless.
% The problem is that global sensitivity considers the \emph{worst case} change in the statistic over all graphs.
A number of DP techniques have been developed using alternatives to global sensitivity, e.g., smooth, restricted and multi-level local sensitivity, and ladder functions~\cite{zhang:sigmod15, karwa2014private, blocki:itcs13, nissim2007smooth}---these add significantly lower level of noise than mechanisms based on global sensitivity, and can lead to significantly better accuracy.
However, these techniques are computationally very expensive.
While smooth sensitivity for triangle counting with edge DP can be computed in polynomial time \cite{karwa:tds13}, no efficient algorithms are known for computing the smooth sensitivity for counts of $k$-cliques for $k>3$ (though no formal NP-hardness results are known).
The complexity of ladder functions for counting $k$-cliques is as high as exact counting (non-private) \cite{zhang:sigmod15}.

There has been a lot of work in graph mining for fast computation of $k$-clique count and other graph statistics, e.g.,  \cite{jain2017fast, eden2018approximating}.
These works have used careful sampling techniques for approximate counts, which are much faster than the naive algorithms, e.g., $\Omega(n^k)$ for $k$-clique counts.
Unfortunately, approximations to smooth sensitivity or ladder functions do not immediately translate to private counts, as shown in \cite{nguyen2023faster, blocki22}.
Nguyen et al. \cite{nguyen2023faster} designed a specific form of approximate smooth sensitivity, and showed that it can be computed efficiently for counts of triangles, leading to a faster private algorithm, with a $1000$-fold speedup over the smooth sensitivity technique \cite{karwa:tds13, nissim2007smooth}, without much impact on accuracy.
However, it is open whether the approximate smooth sensitivity approach of \cite{nguyen2023faster} for triangle counts can be extended extended for $k$-clique counts, $k>3$.
In particular, the privacy analysis of the ladder function \cite{zhang:sigmod15} doesn't work if it is computed with an approximation. 
\smallskip

\begin{table*}[h]
\centering
\footnotesize
\caption{
\small
Total running time of private and non-private $k$-clique counting
methods;
see Table \ref{tab:notation} for summary of notation.
The running times listed for all algorithms other than \textsc{FastApproxCliqueDP} is only for the noise computation;
for \textsc{FastApproxCliqueDP} the bound is the complete time, including the non-private count.
% $\D(G)$ is the degeneracy of $G$ (Section~\ref{sec:turan-shadow});
% $\ahat = \max_{u,v} a_{uv}$;
% $\Khat$ and $\Konehat$ are the maxima, over pairs $(u,v)$,
% of $k$-cliques containing the edge $(u,v)$ and of near-$k$-cliques missing
% the non-edge $(u,v)$, respectively;
% $w(\Ess_k)$ is the Tur\'an-shadow weight (Section~\ref{sec:turan-shadow});
$\|W\|_1$ is as in Definition~\ref{def:W}.
Polylogarithmic factors are suppressed in $\tilde{O}$.
% $T_{\text{exact}}(G, k)$ denotes the running time of an exact
% non-private $k$-clique algorithm.
}
\label{tab:runtime-comparison}
\renewcommand{\arraystretch}{1.6}
\begin{tabular}{@{}lll@{}}
\toprule
Method & Privacy & Total running time \\
\midrule
Approximate, no privacy
$T_{\textsc{Eden}}(G, k) $  \cite{eden2018approximating}
  & none
  & 
$O\!\left(
  \left(
    \frac{n}{f_\kcl(G)^{1/k}} \;+\; \frac{m^{k/2}}{f_\kcl(G)}
  \right)
  \cdot \plog(n,\, 1/\alpha,\, k,\, 1/\delta)
\right)$ \\
Ladder function, exact
\cite{zhang:sigmod15} 
  & $\eps$-DP
  & $O\!\bigl(n \cdot T_{\text{exact}}(G,\,k)\bigr)$ \\
\textsc{FastCliqueDP}, black-box (Section \ref{sec:blackbox})
  & $(\eps,\delta)$-DP
  & $\tilde{O}\!\left(
      \displaystyle\sum_{u,v \in V(G)} T_{\textsc{Eden}}\bigl(G(A_{uv}),\,k-2\bigr)
      \;+\; m + n + \tfrac{\|W\|_1}{\ahat^{2}}\right)$ \\
\textsc{FastCliqueDP}, direct (Section \ref{sec:direct})
  & $(\eps,\delta)$-DP
  & $\tilde{O}\!\left(
      \tfrac{w(\Ess_{k})}{\eps^{2}\,\Khat}
      \;+\; \tfrac{w(\Ess_{k-1})}{\eps^{2}\,\Konehat}
      \;+\; n\D(G)^{k} + m + n + \tfrac{\|W\|_1}{\ahat^{2}}\right)$ \\
\textsc{FastApproxCliqueDP}, black-box & $(\eps,\delta)$-DP & Same as \textsc{FastCliqueDP}, black-box, but for complete computation\\
\textsc{FastApproxCliqueDP}, direct & $(\eps,\delta)$-DP & Same as \textsc{FastCliqueDP}, direct, but for complete computation\\
\bottomrule
\end{tabular}
\end{table*}

\noindent
\textbf{Our contributions.}\\
(1) We develop a new algorithm (\textsc{FastCliqueDP}) for private estimation of $k$-clique counts, which \emph{combines approximations to three different concepts: the ladder function, extending \cite{zhang:sigmod15}, approximate smooth sensitivity \cite{nguyen2023faster}, and non-private clique counts \cite{jain2017fast}}.
We prove rigorous bounds on the approximation to smooth sensitivity (Theorem \ref{theorem:tildeS}) and the running time (Theorem~\ref{theorem:total-time-direct} and~\ref{theorem:total-time-blackbox}) for our algorithm.
The running time of \textsc{FastCliqueDP} is orders of magnitude better than the one based on ladder functions \cite{zhang:sigmod15} (Table \ref{tab:runtime-comparison}), with similar accuracy, which is  significantly better than using global sensitivity.

Transforming the ladder function to handle approximation is technically challenging.
We modified the ladder function by adding specific constraints to create a smooth upper bound on the local sensitivity of the $k$-clique count.
However, the smooth upper bound still includes the count of smaller-sized cliques.
We also adapt a fast algorithm to approximate the $(k-2)$-clique statistic~\cite{eden2018approximating}.
The original algorithm, although it achieves any-constant approximation, has a constant failure probability, which is insufficient for privacy.
Instead, we use a median estimation, to get an inverse polynomial failure probability needed by the framework.
% We therefore invoked multiple instances of it, obtaining a median estimation, to get an inverse polynomial failure probability needed by the framework.
As a result, we obtained a polynomial-time algorithm to estimate the smooth upper bound on the local sensitivity, which we can extend our framework to utilize it to calibrate the noise needed for privacy.

\noindent
(2) All prior work on subgraph counting  assumes the non-private count is exact, to which noise is added.
As a result, the total running time (exact count + noise estimation) is at least as much as the time needed for exact counts
(except for \cite{nguyen2023faster}, who show an expression for triangle counting with approximate counts using smooth sensitivity).
We design \textsc{FastApproxCliqueDP}, which uses an approximate $k$-clique count, and is thus significantly faster than the other methods (Table \ref{tab:runtime-comparison}).

\noindent
(3) We evaluate \textsc{FastCliqueDP} on a diverse set of datasets, and show that it scales to very large networks.
\textsc{FastCliqueDP} has significantly better time complexity over ladder functions, while maintaining high accuracy, within 1\% relative error.
The ladder function implementation does not complete in 48 hours for $k=6$ on a network with little over million edges.
The networks we use here are significantly larger than those used in all prior works on private graph analysis--- the largest network we consider has over 16 million edges, which is 2 orders of magnitude larger than those in \cite{zhang:sigmod15} and an order of magnitude larger than those in \cite{nguyen2023faster}.

Our results suggest that careful approximations are the only approach for scaling  private graph analyses to massive networks. 
However, we find that \textsc{FastApproxCliqueDP} sometimes has low accuracy.
Therefore, a better kind of approximation has to be figured out for private estimates.

%\dungnote{Differential privacy, network science, and graph problems with privacy.}

% \dungnote{Subgraph count definition, with privacy. Issues with global. Alternatives have been proposed, but not efficient.}

% \dungnote{Some efforts in fast subgraph counts with privacy: Blocki et al., Nguyen et al.}

% \dungnote{K-clique count, without and with privacy}

% \dungnote{Technical highlights. Our approaches to k-clique counts, using ladder function, transforming to smooth upper bound of local sensitivity, fast estimation.}

% \dungnote{THeoretical bounds, empirical speedup, etc.}

%%% Local Variables:
%%% mode: latex
%%% TeX-master: "main"
%%% End:

\section{Related work}
\label{sec:related}

For brevity, we focus only on related work on private subgraph counting.
The straightforward method to release private subgraph count (of some query $f(G), G=(E, V)$) is using the Laplace mechanism with Global sensitivity ($GS_f$), defined as $GS_f = \max_{G\sim G'}|f(G) - f(G')|$ for any pair of neighbor graphs $G\sim G'$.
We note that there are several definitions of neighboring, which leads to different privacy models on graphs.
However, Global sensitivity is high for many subgraph counting problems.
Hence, adding noise based on $GS_f$ leads to low accuracy, while noise based on just $LS_f$, the local sensitivity, does not guarantee privacy~\cite{Vadhan2017}.
\cite{nissim2007smooth} develops the notion of smooth sensitivity $S^*_{f,\,\beta}(G) = \max_{G'}\Big(LS_f(G')\cdot e^{-\beta \cdot d(G, G')}\Big)$, where $d(G, G')$ is the swap distance between graphs $G$ and $G'$, and show that adding noise based on $S^*_{f,\,\beta}(G)$ is private.
However, smooth sensitivity becomes computationally much more challenging, since its definition involves considering local sensitivity at distance $t$ for all $t$, which doesn't seem like a polynomial time computation.
\cite{nissim2007smooth} develops polynomial time algorithms for computing the smooth sensitivity exactly for counting \#triangles, which was improved slightly to $\min\{md_{max}, M(n)\}$~\cite{karwa2014private}, where $M(n)$ denotes the time complexity of matrix multiplication of two $n\times n$ matrices.
Polynomial time smooth sensitivity bounds were also shown for a small number of other subgraphs, but no polynomial time algorithms are known beyond that.
Interestingly, no hardness bounds are known for smooth sensitivity; the existing hardness bounds, e.g.,~\cite{karwa2014private, zhang:sigmod15} only give hardness for computing local sensitivity at distance $t$.
In order to handle other subgraph counts,
\cite{karwa2014private} develops a different technique involving local sensitivity of local sensitivity (motivated by the propose-test-release technique~\cite{dwork2014algorithmic}), and use it for private counts of $k$-cliques.
\cite{zhang:sigmod15} develops the technique of \emph{ladder functions} to handle other kinds of subgraphs, and use it to privately count \#$k$-cliques in the graph in time $O(nT(n))$, where $T(n)$ is the time needed to count \#$k$-cliques non-privately.
A few other techniques, such as inverse sensitivity~\cite{asi:neurips20} and propose-test-release~\cite{dwork2014algorithmic} are known.
However, private algorithms for counting most subgraphs including paths and trees are not known. 
The recent work of~\cite{blocki2022make} is related, but only considers approximation in the queries, but not in sensitivity.
As a result, our methods give significantly higher efficiency.
We also note that~\cite{blocki2022make} develop a black-box approach to make certain approximation algorithms differentially private. However, their work requires the function being computed to have ``small'' global sensitivity.

%%% Local Variables:
%%% mode: latex
%%% TeX-master: "main"
%%% End:

\section{Preliminaries}
\label{sec:prelims}

% \begin{wraptable}[19]{r}{3.0in}
% % \begin{footnotesize}
% \begin{scriptsize}
\begin{table}
  \centering
\begin{tabular}{|p{0.7in}|p{2.2in}|}
\hline
\textbf{Notation} & \textbf{Description}\\
\hline
$G(S)$ & Subgraph of $G$ induced by the set $S$\\
$f_H(G)$ &  \#non-induced embeddings of subgraph $H$ in $G$\\
$G^n$ & Set of graphs with $n$ nodes\\
$A_{uv}$ & Common neighbors of $u, v$\\
$a_{uv}$ & $|A_{uv}|$\\
$\ahat$ & $\max_{u,v} a_{uv}$\\
$\atil$ & approximation for $\ahat$ \\
$\D(G)$ & Degeneracy of $G$ (Section~\ref{sec:turan-shadow})\\
$\Khat$ & $\max_{u,v}$ \#$k$-cliques containing $(u,v)$\\
$\Konehat$ & $\max_{u,v}$ \#near-$k$-cliques without $(u,v)$\\
$LS_f(G)$ & Local sensitivity of $f$\\
$GS_f$ & Global sensitivity of $f$\\
$S^{\beta}_f(G)$ & $\beta$-smooth upper bound on $LS_f(G)$\\
$S^{*}_{f,\beta}(G)$ & Smooth sensitivity\\
$\tilde{S}^{\beta}_\kcl(G)$ & Approximate smooth sensitivity\\
$I_t(G)$ & Ladder function of $\fkcl$ (Lemma~\ref{lemma:ladder})\\
$\widetilde{I_t}(G)$ & Approximation to $I_t(G)$\\
$\LStil(G)$ & Approximation to $LS_\kcl(G)$\\
$w(\Ess_k)$ & Tur\'an-shadow weight (Section~\ref{sec:turan-shadow})\\
$T_{\text{exact}}(G, k)$ & Running time of 
non-private $k$-clique algorithm\\
\hline
\end{tabular}
%\end{scriptsize}
\caption{
\small
Notation used in the paper}
\label{tab:notation}
\end{table}

We define some of the notions and definitions used in the rest of the paper; some of the notation is summarized in  Table \ref{tab:notation}. 
A non-induced embedding of a graph $H=(V_H, E_H)$ into a graph $G=(V, E)$ is a mapping $\phi:V_H\rightarrow V$ such that $(\phi(u), \phi(v))\in E$ whenever $(u, v)\in E_H$.
We use $f_H(G)$ to denote the number of non-induced embeddings of a $H$ in $G$. We drop the subscript $H$ when it is clear from the context.
We use $G^n$ to denote the set of all graphs with $|V(G)| = n$.
Let $\fkcl(G)$ be the number of $k$-cliques in a graph $G$.
We say that a function $\tilde{f}$ is a $(\eta,\delta)$-approximation to $f$ if $\tilde{f}(G) \in (1\pm \eta)f(G)$ with probability at least $1-\delta$.
We say that a function $\tilde{f}$ is a $(\gamma,\delta)$-upper approximation to $f$ if $f(G) \leq \tilde{f}(G) \leq e^{\gamma}f(G)$ with probability at least $1-\delta$.
Let $A_{uv}$ denote the common neighbors of $u$ and $v$.
We use $G(S)$ to denote the subgraph of $G$ induced by the set of vertices $S$.

%Figure~\ref{fig:subgraphs} shows several patterns of $H$ whose counts are popularly studied in graph mining, including stars ($2$-star (wedge), $3$-star), triangles (triangle, $2$-triangle), cliques ($3$-clique (triangle), $4$-clique).

%\dungnote{DP models on graphs. PRoblem statements.}

\textbf{Differential privacy} guarantees the output of a randomized mechanism $M$ is similar on similar input.
In graphs, there are several privacy models.
In this work, we consider the edge-privacy model, in which two graphs $G$ and $G'$ are neighbors, denoted $G\sim G'$, if $V(G) = V(G')$ and $E(G) = E(G') \pm \{e\}$, i.e., $G$ and $G'$ differ by exactly one edge $e$.
A randomized algorithm (or mechanism) $M$ is $(\epsilon,\delta)$-differentially private in the edge-privacy model (edge-DP for short) if $\Pr[M(G)\in S] \leq e^{\epsilon}\Pr[M(G')\in S] + \delta$ for any $S\subseteq \text{Range}(M)$ \cite{dwork:fttcs14}.

\textbf{Problem Statement (Private $k$-clique counting).} 
Given a graph $G$, parameters $k, \epsilon, \delta$, design an $(\epsilon,\delta)$-edge DP, polynomial time mechanism $M_\fkcl: G^n \rightarrow \mathbb{N}$ that minimizes $|M_\fkcl(G) - \fkcl(G)|$.

%\dungnote{Smooth sensitivity. Ladder function.}

The local sensitivity of function $f(\cdot)$ is defined as $LS_f(G) = \max_{G':G\sim G'}|f(G) - f(G')|$.
Noise calibrated to $LS_f$ doesn't always ensure DP.
One standard method to design DP mechanisms is the Laplace mechanism, adding noise calibrated by the Global sensitivity $GS_f = \max_{G}LS_f(G)$ to the output of $f$, i.e., $M_f(G) = f(G) + Lap(GS_f/\epsilon)$, which is $\epsilon$-DP \cite{dwork:fttcs14}.
$GS_{\fkcl}$ can be very large (as large as $\binom{n-2}{k-2}$ for graphs with $n$ vertices), and so the global sensitivity based mechanism is highly inaccurate, in general. 
The smooth sensitivity~\cite{nissim2007smooth} is an alternative technique, and can lead to much higher accuracy.
Let $LS^{(t)}_f(G) = \max_{G': d(G,G')\leq t}LS_f(G')$ denote the local sensitivity at distance $t$. 
Also note that $LS_\kcl(G) = \max_{u,v}f_{(k-2)\mathbb{C}}(G(A_{uv}))$.

% using the local variant of sensitivity (LS) $LS_f(G) = max_{G':G\sim G'}|f(G) - f(G')|$. 

\begin{definition}
  \label{def:smooth-upper-bound}
% ($\beta$-smooth bound on LS)
(Smooth sensitivity)
A function $S^{\beta}_f(G)$ is $\beta$-smooth upper bound on the LS if (1) $S^{\beta}_f(G) \geq LS_f(G)$ for all $G\in G^n$ and (2) $S^{\beta}_f(G) \leq e^{\beta}S^{\beta}_f(G')$ for any $G\sim G'\in G^n$.
$S^{*}_{f,\beta}(G) = \max_{t=0,\ldots,{n \choose 2}}e^{-t\beta}LS^{(t)}_f(G)$, the smallest $\beta$-smooth bound function on $LS_f$, is referred to as the smooth sensitivity,
where $\beta$ depends on $\epsilon, \delta$.
\end{definition}

An alternative technique, referred to as ladder functions was developed by~\cite{zhang:sigmod15}.
One can get an $\epsilon$-DP estimate using it, and it is computationally more efficient than smooth sensitivity.

\begin{definition}
  \label{def:ladder-func}
(Ladder function \cite{zhang:sigmod15})
$I_{t}(\cdot)$ is a ladder function of $f$ if (1) $I_{0}(G) \geq LS_f(G)$  for all $G\in G^n$ and (2) $I_{t}(G) \leq I_{t+1}(G')$ for any pair of neighbors $G\sim G'\in G^n$, and any integer $t$.
\end{definition}

\begin{lemma}
\label{lemma:ladder}
(\cite{zhang:sigmod15})
$I_t(G) = \min\left(LS_\kcl(G) + {\ahat + t \choose k -2 } - {\ahat \choose k-2}, GS_\kcl\right)$
is a ladder function for $f_\kcl$, where $\ahat, GS_\kcl$ are defined in Table \ref{tab:notation}.
% where $\ahat = \max_{u,v}a_{uv}$, and $a_{uv}$ is the number of common neighbors of node $u$ and $v$, and $GS_\kcl$ is the global sensitivity of $f_\kcl$.
\end{lemma}

% Defining the local sensitivity at distance $t$ as $LS^{(t)}_f(G) = \max_{G': d(G,G')\leq t}LS_f(G')$, 
% the \emph{smooth sensitivity} $S^{*}_{f,\beta}(G) = \max_{t=0,\ldots,{n \choose 2}}e^{-t\beta}LS^{(t)}_f(G)$ is the smallest $\beta$-smooth bound function on $LS_f$, with $\beta$ being a factor depending on $\epsilon, \delta$.
% Using (exact) smooth sensitivity to calibrate noise, the mechanism $M(G) = f(G) + Lap(2S^{*}_{f,\beta}(G)/\epsilon)$ is $(\epsilon,\delta)$-DP.

\textbf{Approximate notions.}
Standard multiplicative approximation of smooth sensitivity and ladder functions do not ensure DP guarantees~\cite{nguyen2023faster, blocki2022make}.
Nguyen et al.~\cite{nguyen2023faster} showed that an approximation to $S^{*}_{f,\beta}(G)$ can give a private solution. 
In fact, the approximation framework generalizes to all $\beta$-smooth upper bound on local sensitivity. 

% \dungnote{Nguyen et al. approx. framework}

\begin{theorem}
\label{theorem:approx-smooth}
(Privacy via approximate smooth upper bound on local sensitivity)
  Let $\tilde{S}_{f,\beta}$ be a $(\gamma,\delta')$-upper approximation to $S^\beta_f$. For $\gamma=\beta=\frac{16\epsilon}{\log(2/\delta)}$, $M(G) = f(G) + Lap(2\tilde{S}_{f,\beta}/\epsilon)$ is $(\epsilon,\frac{e^{\epsilon/2}+1}{2}\delta + 2 \delta')$-DP.
\end{theorem}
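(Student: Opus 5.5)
Since $\tilde{S}_{f,\beta}$ is only an approximation, it is not itself $\beta$-smooth. The plan is therefore to couple it with the exact smooth upper bound $S^\beta_f$ and reduce to the standard smooth-sensitivity privacy argument for Laplace noise, with $\delta'$ absorbing the failure events of the approximation.

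First, fix neighbors $G\sim G'$ and let $\mathcal{E}$ (resp.\ $\mathcal{E}'$) be the event that $S^\beta_f(G)\le \tilde{S}(G)\le e^{\gamma}S^\beta_f(G)$ (resp.\ the same for $G'$). Each event fails with probability at most $\delta'$. This failure mass contributes the additive $2\delta'$ term after a union-bound style split:
\[
\Pr[M(G)\in S]\le \Pr[M(G)\in S,\mathcal{E}]+\delta',
\]
and symmetrically for $G'$.

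Second, on the good events I would treat $M$ as a two-stage mechanism. Stage one releases the scale $\tilde{S}$; stage two adds $Lap(2\tilde{S}/\epsilon)$. I would view $\ln \tilde{S}(G)$ as lying in the interval $[\ln S^\beta_f(G),\ \ln S^\beta_f(G)+\gamma]$. By $\beta$-smoothness, $\ln S^\beta_f(G)$ and $\ln S^\beta_f(G')$ differ by at most $\beta$, so the two feasible scale intervals overlap up to a shift of $\beta$. The trouble is that $\tilde{S}$'s distribution is arbitrary within its interval. So rather than comparing the distributions of $\tilde S$ directly, I would compare, for any \emph{fixed} scale value $s$ with $s\ge S^\beta_f(G)$ and $s\le e^{\gamma+\beta}S^\beta_f(G')$ say, the output densities of $f(G)+Lap(2s/\epsilon)$ and $f(G')+Lap(2s'/\epsilon)$ where $s'$ is any value in $G'$'s interval.

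The key density estimate has two parts. The shift in location $|f(G)-f(G')|\le LS_f(G)\le s$ costs a factor $e^{\epsilon/2}$. The change of scale between $s$ and $s'$ satisfies $|\ln s-\ln s'|\le \gamma+\beta$. This changes the Laplace density by at most a factor $e^{\gamma+\beta}$ in the normalizer. In the exponent, the change is $(e^{\gamma+\beta}-1)\epsilon|z|/(2s)$, which is small except on the tail $|z|\gtrsim s\log(2/\delta)/\epsilon$. With $\gamma=\beta=\frac{16\epsilon}{\log(2/\delta)}$, the scale-change cost on the bulk is at most $e^{\epsilon/2}$, giving total factor $e^{\epsilon}$. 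The tail has Laplace mass $\le\delta/2$ under each of the two distributions. Accounting for it, with one copy scaled by $e^{\epsilon/2}$ when transported across, yields exactly the $\frac{e^{\epsilon/2}+1}{2}\delta$ term.

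The main obstacle is the first stage: the distribution of $\tilde S(G)$ versus $\tilde S(G')$ is not controlled, only its support. I would handle this by showing the per-scale bound holds uniformly over all pairs $(s,s')$ in the respective intervals, so that mixing over any distributions of scales preserves the inequality. Concretely, I would bound $\Pr[M(G)\in S\mid\tilde S=s]\le e^{\epsilon}\inf_{s'}\Pr[M(G')\in S\mid \tilde S'=s']+\text{tail}$, and then average over $s$ and $s'$. This uniform-in-scale step is where I expect care with the constants (the 16 and the tail threshold) to be needed.
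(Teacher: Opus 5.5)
There is a genuine problem, and it lies in the constants rather than in your architecture. The architecture is the standard reduction behind this statement; the paper imports the statement from Nguyen et al.\ without reproving it. On the good events, $LS_f(G)\le S^\beta_f(G)\le\tilde S(G)$ and $\tilde S(G)\le e^{\gamma}S^\beta_f(G)\le e^{\gamma+\beta}S^\beta_f(G')\le e^{\gamma+\beta}\tilde S(G')$. So $\tilde S$ acts as a $(\gamma+\beta)$-smooth upper bound on $LS_f$, and the sliding/dilation argument for Laplace noise gives the $\frac{e^{\epsilon/2}+1}{2}\delta$ term. Conditioning on the realized scale and then mixing is also sound, because the Laplace noise is independent of $\tilde S$ and the per-scale bound holds for every pair of admissible scales.

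The step that fails is your claim that $\gamma=\beta=\frac{16\epsilon}{\log(2/\delta)}$ keeps the dilation cost on the bulk below $e^{\epsilon/2}$. The bulk must be $|z|\le\frac{2s}{\epsilon}\ln(2/\delta)$ for the tail to have mass $\delta/2$. On it, the exponent you must control is $(e^{\gamma+\beta}-1)\ln(2/\delta)\ge(\gamma+\beta)\ln(2/\delta)=32\epsilon$, not something at most $\epsilon/2$. Moving the threshold does not help, and in fact the statement with this constant is false (reading $\log$ as $\ln$). Here is a counterexample:
\begin{itemize}
\item Take $f\equiv 0$ and $S^\beta_f(G)=e^{-\beta d(G,G_0)}$, which is a valid $\beta$-smooth upper bound on $LS_f\equiv 0$.
\item Take neighbors with $d(G',G_0)=d(G,G_0)+1$, and the deterministic approximations $\tilde S(G)=e^{\gamma}S^\beta_f(G)$, $\tilde S(G')=S^\beta_f(G')$, so that $\delta'=0$.
\item Then $M(G)\sim Lap(b)$ and $M(G')\sim Lap(be^{-(\gamma+\beta)})$. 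For $U=\{y:|y|>bx\}$ with $x=\ln(2/\delta)/16$,
\[
\Pr[M(G)\in U]-e^{\epsilon}\Pr[M(G')\in U]\;\ge\;e^{-x}-e^{\epsilon-x(1+\gamma+\beta)}\;=\;(\delta/2)^{1/16}\bigl(1-e^{-\epsilon}\bigr).
\]
\item For $\epsilon=1$, $\delta=10^{-5}$ this gap is about $0.29$, far above $\frac{e^{\epsilon/2}+1}{2}\delta$.
\end{itemize}

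What your argument actually needs is roughly $(e^{\gamma+\beta}-1)\ln(2/\delta)\le\epsilon/2$ together with $e^{\gamma+\beta}\le e^{\epsilon/2}$, i.e.\ $\gamma+\beta\lesssim\frac{\epsilon}{2\ln(2/\delta)}$. So the constant must be read as something like $\gamma=\beta=\frac{\epsilon}{16\ln(2/\delta)}$. That reading is consistent with the budget $4\eta+\gamma+\beta\le\frac{\epsilon}{2\ln(2/\delta)}$ in Theorem~\ref{theorem:approx-smooth-approx-query}, and with that correction your outline goes through.

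A smaller point concerns the $2\delta'$ term. Splitting off $\delta'$ on each side as you describe yields $(1+e^{\epsilon})\delta'$, not $2\delta'$. The reason is that turning $\Pr[M(G')\in U]\ge(1-\delta')\inf_{s'}\Pr[f(G')+Lap(2s'/\epsilon)\in U]$ into an additive bound puts a $\delta'$ inside the factor $e^{\epsilon}$. To get exactly $2\delta'$, note that the factor $1/(1-\delta')$ contributes at most $\delta'$ whenever $e^{\epsilon}\Pr[M(G')\in U]\le 1-\delta'$. Otherwise the right-hand side exceeds $1$ and the inequality is trivial.
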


Theorem~\ref{theorem:approx-smooth} applies to functions $f$ with well-defined local sensitivity, i.e., exact, deterministic functions.
For randomized, approximate queries, their local sensitivities may be unbounded, we need to utilize the following mechanism:

\begin{theorem}
  \label{theorem:approx-smooth-approx-query}
Let $A_f$ be an $(\eta,\delta_A)$-approximation to $f$ with $\eta\in(0,1/2)$ , and let $\tilde S$ be a $(\gamma,\delta')$-upper approximation to $S^\beta_{f}$. Suppose $\delta\le 2/e$ and the admissibility budget holds:
\[
  4\eta+\gamma+\beta\;\le\;\frac{\epsilon}{2\ln(2/\delta)}. %\tag{$\star$}
\]
Then
\[
  A(D)=A_f(D)+\mathrm{Lap}\!\left(\frac{1}{\epsilon}\Bigl[\tfrac{4\eta}{1-\eta}A_f(D)+4\tilde S(D)\Bigr]\right)
\]
is $(\epsilon,\ \tfrac{e^{\epsilon/2}+1}{2}\delta+2\delta'+2\delta_A)$-differentially private.
\end{theorem}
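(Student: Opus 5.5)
We will prove Theorem~\ref{theorem:approx-smooth-approx-query} by conditioning on good events and applying a ``sliding plus dilation'' argument for the Laplace distribution. The approach is the same as for Theorem~\ref{theorem:approx-smooth}, except that the center of the noise is now $A_f(D)$, which is itself random.

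\textbf{Step 1: good events.} Fix neighbors $D\sim D'$. Let $\mathcal{E}_D$ be the event that
\[
A_f(D)\in(1\pm\eta)f(D) \quad\text{and}\quad S^\beta_f(D)\le \tilde S(D)\le e^{\gamma}S^\beta_f(D),
\]
and define $\mathcal{E}_{D'}$ analogously. By a union bound, $\Pr[\neg\mathcal{E}_D]\le\delta_A+\delta'$. The same holds for $D'$, and these two failure probabilities contribute the additive $2\delta'+2\delta_A$ term.

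To handle the randomness, we condition on the internal coins. We view $A(D)$ as a mixture over realizations $(a,s)$ of $(A_f(D),\tilde S(D))$, and similarly $(a',s')$ for $D'$. For any good pair, i.e. one satisfying $\mathcal{E}_D$ and $\mathcal{E}_{D'}$, we need to show
\[
\Pr[a+\mathrm{Lap}(b)\in S]\le e^{\epsilon}\Pr[a'+\mathrm{Lap}(b')\in S]+\tfrac{e^{\epsilon/2}+1}{2}\delta,
\]
where $b=\frac1\epsilon\bigl[\tfrac{4\eta}{1-\eta}a+4s\bigr]$ and $b'$ is defined analogously. Once this holds for every good pair, it holds for the mixtures. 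Then adding the bad-event mass gives the stated $\delta$ term.

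\textbf{Step 2: deterministic bounds on a good pair.}
\begin{itemize}
\item \emph{Shift.} We have
\[
|a-a'|\le|a-f(D)|+|f(D)-f(D')|+|f(D')-a'|\le \eta f(D)+\eta f(D')+LS_f(D).
\]
Using $f(D)\le a/(1-\eta)$, $LS_f(D)\le S^\beta_f(D)\le s$, and $f(D')\le f(D)+LS_f(D)$, this is at most $\frac{2\eta}{1-\eta}a+(1+\eta)s\lesssim \epsilon b/2$.
\item \emph{Scale.} We have $a/a'\le \frac{1+\eta}{1-\eta}\cdot\frac{f(D)}{f(D')}$, and $s/s'\le e^{\gamma+\beta}$ by smoothness of $S^\beta_f$ and the upper-approximation guarantee. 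Since $\frac{1+\eta}{1-\eta}\le e^{4\eta}$ for $\eta<1/2$, we get $b/b'\le e^{4\eta+\gamma+\beta}$ up to the $f(D)/f(D')$ factor. That factor must be absorbed using the additive $4s$ term: when $f$ changes by $LS_f$, the change in the first term of $b$ is dominated by $\frac{4\eta}{1-\eta}LS_f\le 4s\cdot O(\eta)$. So the ratio is $e^{O(4\eta+\gamma+\beta)}$.
\end{itemize}

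\textbf{Step 3: compose.} We invoke the standard lemma from \cite{nissim2007smooth} for Laplace noise, which covers the case where the shift is at most $\alpha b$ with $\alpha=\epsilon/2$ and the dilation is $\ln(b/b')\le\lambda$ with $\lambda\le \epsilon/(2\ln(2/\delta))$. The admissibility budget is exactly this dilation condition. The shift contributes $e^{\epsilon/2}$, and the dilation contributes $e^{\epsilon/2}$ multiplicatively plus $\delta/2$ additively. Splitting into sliding and then dilation, with the dilation's additive term multiplied by $e^{\epsilon/2}$, yields the $\frac{e^{\epsilon/2}+1}{2}\delta$ term. The condition $\delta\le2/e$ ensures $\ln(2/\delta)\ge1$, which the dilation tail bound needs.

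The main obstacle is Step 2's scale bound. Specifically, we must control $b/b'$ when both $a$ and $f$ differ between $D$ and $D'$, and fit the constants $4\eta/(1-\eta)$ and $4$ so that both the shift $\le\epsilon b/2$ and the dilation $\le 4\eta+\gamma+\beta$ hold simultaneously. I would first bound $b'\ge e^{-(4\eta+\gamma+\beta)}b$ by treating the two summands separately and using $\max$ of their ratios, checking whether the $LS_f$ perturbation of $a$ can be charged to $s$. If it cannot, I would weaken to a slightly larger constant in the exponent and absorb it via the budget's factor $2$.
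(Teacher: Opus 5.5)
Your architecture is the right one: condition on the good events, then slide and dilate the Laplace noise, with the shift measured against $\epsilon b/2=\frac{2\eta}{1-\eta}a+2s$. Your shift bound $|a-a'|\le\frac{2\eta}{1-\eta}a+(1+\eta)s$ is correct and has room to spare. (The paper states this theorem without proof, so the comparison here is with what a complete argument needs.)

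The gap is the scale step, which you leave open, and both routes you propose for closing it fail.
\begin{itemize}
\item Bounding the two summands of $b$ separately and taking the larger ratio cannot work, because $a/a'$ is unbounded on good pairs: if $f(D')=0<f(D)$ then $a'=0<a$.
\item Your fallback is to accept $|\ln(b/b')|\le C(4\eta+\gamma+\beta)$ for some $C>1$ and absorb $C$ into ``the budget's factor 2''. That is not available. The quantity $\frac{\epsilon}{2\ln(2/\delta)}$ is exactly the dilation parameter in the Laplace admissibility lemma of \cite{nissim2007smooth} that you invoke in Step 3. The bound $\Pr[Z\in T]\le e^{\epsilon/2}\Pr[e^{\lambda}Z\in T]+\delta/2$ is only guaranteed for $|\lambda|\le\frac{\epsilon}{2\ln(2/\delta)}$, so there is no spare factor. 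With $C>1$ you would prove the theorem only under a stronger budget.
\end{itemize}
The slack built into the constants $\frac{4\eta}{1-\eta}$ and $4$ lives in the shift, not in the dilation.

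What is missing is the cross-charging computation with the exact constant; this is where your remark about charging the $LS_f$ perturbation to $s$ must be made precise. On good realizations,
$a\le(1+\eta)f(D)\le(1+\eta)\bigl(f(D')+LS_f(D')\bigr)\le\frac{1+\eta}{1-\eta}a'+(1+\eta)s'$
and $s\le e^{\gamma}S^\beta_f(D)\le e^{\gamma+\beta}S^\beta_f(D')\le e^{\gamma+\beta}s'$. Hence
\[
\epsilon b\le\frac{1+\eta}{1-\eta}\cdot\frac{4\eta}{1-\eta}\,a'+4s'\Bigl(e^{\gamma+\beta}+\frac{\eta(1+\eta)}{1-\eta}\Bigr)\le e^{4\eta+\gamma+\beta}\,\epsilon b',
\]
using, for $\eta<1/2$, $\ln\frac{1+\eta}{1-\eta}\le\frac{8}{3}\eta\le 4\eta$ and $\frac{\eta(1+\eta)}{1-\eta}\le 3\eta\le e^{\gamma+\beta}(e^{4\eta}-1)$. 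By symmetry $|\ln(b/b')|\le 4\eta+\gamma+\beta\le\frac{\epsilon}{2\ln(2/\delta)}$, which is precisely the budget, and your Step 3 then goes through.

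A smaller point in Step 1: naively averaging the per-pair inequality over good $(a',s')$ leaves an additive $e^{\epsilon}\Pr[\neg\mathcal{E}_{D'}]$. To get coefficient $1$, and hence $2\delta'+2\delta_A$, proceed as follows. Let $q_*$ be the infimum of $\Pr[a'+\mathrm{Lap}(b')\in S]$ over good $(a',s')$, and write $\Pr[A(D)\in S]\le\min\{1,e^{\epsilon}q_*\}+\tfrac{e^{\epsilon/2}+1}{2}\delta+\Pr[\neg\mathcal{E}_D]$. Then combine $q_*\le\Pr[A(D')\in S]/\Pr[\mathcal{E}_{D'}]$ with $\min\{1,x/(1-p)\}\le x+p$.
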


However, it is not clear how to compute a good approximation to $S^{*}_{\fkcl,\beta}(G)$.
No approximate versions of ladder functions that preserve privacy are known, and is our focus here.
% We use another technique, referred to as ladder functions \cite{zhang:sigmod15}.

%\dungnote{Nguyen et al. approx. framework}

%A function $\tilde{f}$ is a $(\gamma,\delta)$-upper approximation to $f$ if $f(G) \leq \tilde{f}(G) \leq e^{\gamma}f(G)$ with probability at least $1-\delta$.

%%% Local Variables:
%%% mode: latex
%%% TeX-master: "main"
%%% End:

% \input{approach}
% \input{goal_timeline}
% \input{conclusion}

% \section{$k$-clique Estimation}

\section{Approach}
\label{sec:approach}

% Using the ladder function, one can release the $\eps$-DP variant of $f$ via a noise sampling mechanism~\cite{zhang:sigmod15}.

We start with the ladder function $I_t$ of $\fkcl$, constructed by~\cite{zhang:sigmod15}, defined in Lemma.~\ref{lemma:ladder}.
We then construct an (exact) smooth upper bound on the local sensitivity $S^{\beta}_{\fkcl} = \max_{t=1\ldots T}e^{-t\beta}I_t(G)$, with a carefully chosen threshold $T$, depending on $k$ and $\beta$.
Without it, the range of $t$ must be up to $\binom{n}{2}$, which requires extensive calculation.
We prove that $S^{\beta}_{\fkcl}$ satisfies the smoothness required by the smooth-sensitivity framework, i.e., proving that quantities of two neighbor graphs differ by at most a factor of $e^\beta$ (Lemma~\ref{lemma:smooth-clique}).
% The smoothness property is straightforward for any $t < T$, due to the ladder property ($I_t(G') \leq I_{t+1}(G)$), we can show $e^{-t\beta} I_t(G') \leq e^{-t\beta} I_{t+1}(G) \leq e^{\beta} \cdot e^{-(t+1)\beta} I_{t+1}(G)$, i.e., the $t^{th}$ component of $S^{\beta}_{\fkcl}(G')$ is upper bounded by the $(t+1)^{th}$ component of $S^{\beta}_{\fkcl}(G')$ times $e^\beta$.
% It is more challenging for the last component, i.e., $t=T$, as there is no 

Computing $S^{\beta}_{\fkcl}$ requires two heavily calculated components: $LS_\kcl(G)$ and $\ahat$, the largest count of common neighbors of any two nodes in $G$.
We invoke the approximate smooth sensitivity framework by~\cite{nguyen2023faster}, using an approximation to $S^{\beta}_{\fkcl}$ to calibrate the noise for privacy.
For that purpose, we approximate $LS_\kcl(G)$ and $\ahat$.
Even though an approximation to $\ahat$ has been proposed by~\cite{nguyen2023faster}, we need a stronger guarantee, since in $S^{\beta}_{\fkcl}$, $\ahat$ involves in several binomials in the form of $\binom{\ahat + t}{k-2}-\binom{\ahat}{k-2}$, in which a constant approximation to $\ahat$ is not directly translated to a constant approximation of the binomials.
The other component, $LS_\kcl(G)$, is more challenging to estimate.
It involves counting the number of $k$-cliques that can be changed by adding/deleting an edge in the input graph.
We propose two methods to estimate $LS_\kcl(G)$.
The first method invokes the sublinear approximate count of $k-2$-cliques on each pair-induced subgraph $G(A_{uv})$, then takes the maximum among these counts.
The second method directly estimates the maximum of $k$-cliques involves an edge $(u,v)$ (that can be potentially deleted) and the maximum of $k$-near-cliques missing an edge $(u,v)$ (that can be potentially added), by using a technique called \Turan.
Combining the smooth upper bound construction with the approximation of its components, we design \textsc{FastCliqueDP} that satisfies $(\epsilon,\ddp)$-edge differential privacy with rigorous privacy and runtime guarantees.

% \begin{definition}
%   \label{def:LS-clique}
%   Let $A_{uv}$ be the set of common neighbors of $u$ and $v$.
%   The local sensitivity of $f_\kcl(G)$, $LS_\kcl(G) = \max_{u,v}f_{(k-2)\mathbb{C}}(G(A_{uv}))$, where $G(S)$ is the subgraph induced on $G$ by the node subset $S$, i.e., the count of $(k-2)$-clique in a graph constructed by the common neighbors of $u$ and $v$ together with their edges ($u,v$ are excluded).
% \end{definition}

%\noindent
\textbf{Smooth upper bound on the local sensitivity $S^{\beta}_{\fkcl}$}.

We first start with the ladder function $I_t$ and use it to construct $S^\beta(G)$, which we show is a valid (exact) $\beta$-smooth upper bound on local sensitivity.
While the upper bound on $LS_\kcl(G)$ property is straightforward, due to the appearance of $LS_\kcl$ in $I_t$, the smoothness ($S^\beta(G')\leq e^\beta S^\beta(G)$) requires carefully chosen value of $T$.
In fact, using the property $I_t(G') < I_{t+1}(G)$ for $G\sim G'$, one can prove that the $t^{th}$ component of $S^\beta(G')$ is at most the $(t+1)^{th}$ component of $S^\beta(G)$.
It requires a special analysis at $t=T$, since there is no $(T+1)^{th}$ component in $S^\beta(G)$.
At that point, we have to use $I_T(G)$ as a proxy to $I_{T+1}(G)$, arguing that a hypothetical $I_{T+1}(G)$ can be upper-bounded by $e^{\beta}I_T(G)$.
This involves a careful analysis of the binomials ${\ahat + t \choose k -2 } - {\ahat \choose k-2}$, in particular, the ratio of these at $t=T+1$ over $t=T$ (this is conservative, consider the worst case $LS_\kcl(G)=0$).
We find that when $T$ is sufficiently large, i.e., $T > \frac{(k'-1)e^\beta + 1}{e^\beta - 1}$, the ratio is upper bounded by $e^\beta$ regardless the value of $LS_\kcl(G)$.
When $\beta$ becomes smaller, $T$ grows since the smoothness is more rigid.
$T$ can be upper bounded at $\binom{n}{2}$, as it is the largest distance of two graphs with $n$ nodes.

\begin{restatable}[]{lemma}{smoothclique}
Let $k' = k -2$.
  Choose any constant $\beta$, let $T = \max \left(\left\lceil \frac{(k'-1)e^\beta + 1}{e^\beta - 1} \right \rceil, \binom{n}{2}\right)$ and $S^\beta(G) = \max_{t:0\ldots T}e^{-t\beta}I_t(G)$. $S^\beta(G)$ is a $\beta$-smooth upper bound on $LS_\kcl(G)$. Specifically, for any $G$,  $S^\beta(G) \geq LS_\kcl(G)$, and for any pair of neighbors $G\sim G'$, $S^\beta(G) \leq e^\beta S^\beta(G')$.
 \label{lemma:smooth-clique}
\end{restatable}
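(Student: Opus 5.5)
The plan has two parts: the upper-bound property, then smoothness via a shift in the index $t$. Throughout, $\ahat$ always denotes the quantity computed on the graph that is the argument of $I_t$.

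\textbf{Upper bound.} Take the $t=0$ term. By Lemma~\ref{lemma:ladder}, $I_0(G)=\min(LS_\kcl(G),GS_\kcl)=LS_\kcl(G)$, since $LS_\kcl\le GS_\kcl$. Hence $S^\beta(G)\ge e^{0}I_0(G)=LS_\kcl(G)$.

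\textbf{Smoothness.} Fix neighbors $G\sim G'$. It suffices to show that $e^{-t\beta}I_t(G)\le e^\beta S^\beta(G')$ for every $t\in\{0,\dots,T\}$.

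For $t<T$, Definition~\ref{def:ladder-func}(2), applied with the roles of $G$ and $G'$ swapped, gives $I_t(G)\le I_{t+1}(G')$. Therefore
\[
e^{-t\beta}I_t(G)\le e^{\beta}\,e^{-(t+1)\beta}I_{t+1}(G')\le e^\beta S^\beta(G'),
\]
because $t+1\le T$ is an admissible index.

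For $t=T$, the same step gives $e^{-T\beta}I_T(G)\le e^{-T\beta}I_{T+1}(G')$, but index $T+1$ does not appear in $S^\beta(G')$. The task is to show $I_{T+1}(G')\le e^\beta I_T(G')$, which yields
\[
e^{-T\beta}I_T(G)\le e^{\beta}\,e^{-T\beta}I_T(G')\le e^\beta S^\beta(G').
\]

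\textbf{The $t=T$ step (main obstacle).} I would handle it as follows. If the minimum in $I_T(G')$ is attained by $GS_\kcl$, then $I_{T+1}(G')\le GS_\kcl=I_T(G')$ and we are done. Otherwise write $L=LS_\kcl(G')\ge 0$, $a=\ahat(G')$ and $g(t)=\binom{a+t}{k'}-\binom{a}{k'}$. Since
\[
\frac{L+g(T+1)}{L+g(T)}\le \max\!\Big(1,\frac{g(T+1)}{g(T)}\Big),
\]
the worst case is $L=0$, as the text anticipates, and it suffices to bound $g(T+1)/g(T)\le e^\beta$.

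Write $g(T+1)=g(T)+\binom{a+T}{k'-1}$ by Pascal's rule, so the target is $\binom{a+T}{k'-1}\le (e^\beta-1)g(T)$. I would first try to show that this ratio is maximized at $a=0$. Heuristically, increasing $a$ makes $g$ relatively flatter in $t$: $g(T)$ is a telescoping sum $\sum_{j=0}^{T-1}\binom{a+j}{k'-1}$, and its terms become more uniform as $a$ grows. At $a=0$ the ratio is
\[
\frac{g(T+1)}{g(T)}=\frac{\binom{T+1}{k'}}{\binom{T}{k'}}=\frac{T+1}{T+1-k'},
\]
and the inequality $\frac{T+1}{T+1-k'}\le e^\beta$ rearranges to $T\ge \frac{(k'-1)e^\beta+1}{e^\beta-1}$, exactly the threshold in the statement.

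To justify that $a=0$ is the worst case, compare termwise: for $j\le T-1$ one wants $\binom{a+j}{k'-1}/\binom{a+T}{k'-1}\ge \binom{j}{k'-1}/\binom{T}{k'-1}$. This holds because $\binom{x+j}{m}/\binom{x+T}{m}=\prod_{i}\frac{x+j-i}{x+T-i}$ and each factor is increasing in $x$. This monotonicity is the delicate algebraic point I expect to need the most care. Small cases such as $T<k'$ are excluded by the choice of $T$, and edge cases like $g(T)=0$ are absorbed by the $\max$ with $\binom{n}{2}$ in the definition of $T$, so that $T$ is large.
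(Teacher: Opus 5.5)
Your proposal is correct and follows the paper's proof essentially step for step. It uses the $t=0$ term for the upper bound and the ladder-property index shift for $t<T$. At $t=T$ it bounds the hypothetical ratio $I_{T+1}/I_T$ by dropping $LS_\kcl\ge 0$ and applying Pascal's rule, then shows $\binom{\ahat+T}{k'-1}/\bigl(\binom{\ahat+T}{k'}-\binom{\ahat}{k'}\bigr)\le k'/(T-k'+1)$, which is exactly your $a=0$ value and gives the same threshold on $T$.

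The differences are minor. You verify this binomial inequality by termwise comparison of the telescoping (hockey-stick) sums, whereas the paper rearranges it algebraically to $\binom{\ahat}{k'}\le\frac{\ahat}{\ahat+T-k'+1}\binom{\ahat+T}{k'}$. You also explicitly handle the $GS_\kcl$ cap in the minimum, which the paper's ratio computation silently omits.
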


\begin{proof}

  Throughout the proof we drop the subscript $\kcl$ since the context of calculating the sensitivity for $\fkcl$ is clear. 

  \textbf{Upper bound of $LS_\kcl(G)$.} By definition, $I_0(G) = LS_\kcl(G)$, and $S^\beta(G) \geq e^{0} I_0(G) = LS_\kcl(G)$.

  \textbf{Smoothness.} We must show $S^\beta(G') \leq e^\beta S^\beta(G)$ for every pair of neighbors $G \sim G'$. Fix any $t \in \{0, 1, \ldots, T\}$, if we can show

  \begin{equation}
    \label{eq:per-t}
    e^{-t\beta} I_t(G') \leq e^{\beta} S^\beta(G),
  \end{equation}

  then $S^\beta(G') = \max_{t=0,1,\ldots T} e^{-t\beta} I_t(G') \leq e^{\beta} S^\beta(G)$.

  Since $I_t$ is a ladder function for $\fkcl$ (Theorem~\ref{theorem:clique-ladder}), $I_t$ satisfies the following property (as Definition~\ref{def:ladder-func}):
  \begin{equation}
    \label{eq:ladder}
    I_t(G') \leq I_{t+1}(G) \quad \text{for all } t \geq 0,
  \end{equation}
  note that the reversed $I_t(G) \leq I_{t+1}(G')$ also holds due to the symmetry of $G$ and $G'$ ($G\sim G' \iff G' \sim G$).

  We consider two cases: (1) $t$ is not at the edge of the range and (2) $t = T$ at the edge.

  \emph{Case 1: $0 \leq t \leq T - 1$.} By Equation~\ref{eq:ladder},
  \begin{align*}
    e^{-t\beta} I_t(G') \leq e^{-t\beta} I_{t+1}(G) 
     = e^{\beta} \cdot e^{-(t+1)\beta} I_{t+1}(G)
    \leq  e^{\beta} S^\beta(G),
  \end{align*}
  where the last inequality holds because $t+1 \leq T$, so $e^{-(t+1)\beta} I_{t+1}(G)$ appears in the max defining $S^\beta(G)$.

  \emph{Case 2: $t = T$.} By Equation~\ref{eq:ladder}, $I_T(G') \leq I_{T+1}(G)$, so it suffices to bound $I_{T+1}(G) / I_T(G) \leq e^\beta$ directly.
  We note that $I_{T+1}(G)$ will not appear in the max defining $S^{\beta}(G)$, and we use it for the purpose of analysis only.
  Writing out the ratio and applying Pascal's rule:
  \begin{align*}
    \frac{I_{T+1}(G)}{I_T(G)}
    &= \frac{LS_\kcl(G) + \binom{\hat a + T + 1}{k'} - \binom{\hat a}{k'}}{LS_\kcl(G) + \binom{\hat a + T}{k'} - \binom{\hat a}{k'}} \\
    &= 1 + \frac{\binom{\hat a + T}{k' - 1}}{LS_\kcl(G) + \binom{\hat a + T}{k'} - \binom{\hat a}{k'}}\\
    &\leq 1 + \frac{\binom{\hat a + T}{k' - 1}}{\binom{\hat a + T}{k'} - \binom{\hat a}{k'}} \text{, as $LS_\kcl(G) \geq 0$,}\\
    &\overset{(a)}\leq  1 + \frac{k'}{T - k' + 1}.
  \end{align*}

  The inequality at $(a)$ is true because using the identity $\binom{\hat a+T}{k'-1} = \frac{k'}{\hat a+T-k'+1}\,\binom{\hat a+T}{k'}$, inequality $(a)$ is equivalent to
  \begin{align*}
    \frac{\binom{\hat a+T}{k'}}{\binom{\hat a+T}{k'} - \binom{\hat a}{k'}}
    \leq
    \frac{\hat a + T - k' + 1}{T - k' + 1}, &\\
    \binom{\hat a+T}{k'}\leq \left({\binom{\hat a+T}{k'} - \binom{\hat a}{k'}} \right)
    \frac{\hat a + T - k' + 1}{T - k' + 1}, &\\
   \binom{\hat a}{k'}\frac{\hat a + T - k' + 1}{T - k' + 1} \leq \binom{\hat a+T}{k'}
    \left(\frac{\hat a + T - k' + 1}{T - k' + 1} - 1\right), &\\
  \binom{\hat a}{k'}\leq \frac{\hat a}{\hat a+T-k'+1}\,
   \binom{\hat a+T}{k'}.&
  \end{align*}
  For $\hat a < k'$ this holds trivially ($\binom{\hat a}{k'} = 0$).
  For $\hat a \geq k'$, it is equivalent to
  $\frac{\binom{\hat a+T}{k'}}{\binom{\hat a}{k'}}
   \geq \frac{\hat a+T-k'+1}{\hat a}$,
  which follows because
  \begin{align*}
    \frac{\binom{\hat a+T}{k'}}{\binom{\hat a}{k'}}
    = \prod_{i=0}^{k'-1}\frac{\hat a+T-i}{\hat a-i}
    \geq \frac{\hat a+T}{\hat a}
    \geq \frac{\hat a+T-k'+1}{\hat a},
  \end{align*}
  where the first inequality uses that every factor with $i \geq 1$ is at least~$1$. 

  \begin{align*}
    \frac{I_{T+1}(G)}{I_T(G)}
    &\leq  1 + \frac{k'}{T - k' + 1}\\
    &\leq  1 + \frac{k'}{\frac{(k'-1)e^\beta + 1}{e^\beta - 1} - k' + 1}, \text{(Lemma's condition for $T$)}\\
    &\leq 1 + \frac{k'(e^\beta-1)}{(k'-1)e^\beta + 1 - (k'-1)(e^\beta-1)}\\
    &= 1 + \frac{k'(e^\beta-1)}{k'}=e^\beta.
  \end{align*} 

  Finally, we have $e^{-T\beta}I_T(G') \leq e^{-T\beta}I_{T+1}(G) \leq e^{-T\beta}e^{\beta}I_T(G) \leq e^{\beta}S^{\beta}(G)$, where the first inequality is due to the ladder property of $I_t(G)$, the second inequality as above, and the last inequality is due to $e^{-T\beta}I_T(G) \leq S^{\beta}(G)$ by its definition, and the Lemma follows.

\end{proof}

Applying framework \textbf{Approximate smooth sensitivity for exact queries} above, we propose an efficient design of $\tilde{S}^\beta_\kcl(G)$ (we drop the subscript $\kcl$ when the context is clear). 

\begin{restatable}[]{theorem}{approxsmooth}
  With $k' = k-2$, let $\tilde{a}$ be a $(\frac{\gamma}{2k'}, \delta_{\hat{a}})$-upper approximation of $\hat{a}$, $\LStil(G)$ be a $(\gamma, \delta_{\kcl})$-upper approximation of $LS_\kcl(G)$ . 
  Let $\Itil_t(G)$ be defined as:
  \begin{align}
    \widetilde{I_t}(G) = \min\left(\LStil(G) + {{\tilde{a}+ t}\choose{k-2}} - {\tilde{a} \choose {k-2}} , GS\right)
  \end{align}
The $(\gamma,\delta_{\ahat}+\delta_\kcl)$-approximate smooth upper-bound $\tilde{S}^\beta(G)$ of the local sensitivity is defined as:
$\tilde{S}^\beta(G) = \max_{t:0\ldots T}e^{-t\beta}\tilde{I}_t(G)$
  %\label{def:smooth-bound-approx}
\label{theorem:tildeS}
\end{restatable}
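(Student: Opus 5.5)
We need to show that, with probability at least $1-\delta_{\ahat}-\delta_\kcl$, the defined $\tilde{S}^\beta(G)$ satisfies
\[
S^\beta(G)\le \tilde{S}^\beta(G)\le e^{\gamma}S^\beta(G),
\]
where $S^\beta$ is the exact smooth upper bound of Lemma~\ref{lemma:smooth-clique}. Then $\tilde S^\beta$ is a $(\gamma,\delta_{\ahat}+\delta_\kcl)$-upper approximation of a $\beta$-smooth upper bound on $LS_\kcl$, which is what Theorem~\ref{theorem:approx-smooth} requires.

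\textbf{Good event.} By a union bound, with probability at least $1-\delta_{\ahat}-\delta_\kcl$ both approximations succeed simultaneously:
\[
\ahat\le\atil\le e^{\gamma/(2k')}\ahat, \qquad LS_\kcl(G)\le\LStil(G)\le e^{\gamma}LS_\kcl(G).
\]
I condition on this event. Since $\max_t e^{-t\beta}(\cdot)$ and $\min(\cdot,GS)$ are monotone and commute with multiplicative sandwiching, it suffices to show, for every fixed $t\in\{0,\dots,T\}$, that $I_t(G)\le \Itil_t(G)\le e^{\gamma}I_t(G)$. Here the $GS$ cap is harmless because $\min(x,GS)\le e^\gamma\min(y,GS)$ whenever $x\le e^\gamma y$.

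\textbf{Lower bound.} Write $g_t(a)=\binom{a+t}{k'}-\binom{a}{k'}=\sum_{j=0}^{t-1}\binom{a+j}{k'-1}$ by Pascal's rule. This is nondecreasing in $a$, so $\atil\ge\ahat$ gives $g_t(\atil)\ge g_t(\ahat)$. Together with $\LStil\ge LS_\kcl$ this yields $\Itil_t\ge I_t$. (If $\atil$ is non-integer, I interpret the binomials via the generalized polynomial $\binom{x}{m}=x(x-1)\cdots(x-m+1)/m!$, clipped at $0$ for $x<m-1$; monotonicity still holds on the relevant range. Alternatively, round $\atil$ up to an integer and absorb the rounding into the approximation.)

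\textbf{Upper bound — the main obstacle.} We must show $g_t(\atil)\le e^{\gamma}g_t(\ahat)$. The LS term contributes at most a factor $e^\gamma$, and a mediant-type argument, $(x'+y')/(x+y)\le\max(x'/x,\,y'/y)$, then combines the two terms. For each summand, with $a'=\atil\le e^{\gamma/(2k')}a$,
\[
\frac{\binom{a'+j}{k'-1}}{\binom{a+j}{k'-1}}=\prod_{i=0}^{k'-2}\frac{a'+j-i}{a+j-i}.
\]
Each factor is at most $\frac{a'-i}{a-i}$ (for $j\ge0$). That ratio exceeds $a'/a$ when $i>0$, which is the delicate point: factors with $a-i$ small can blow up. The plan is to bound $\frac{a'+j-i}{a+j-i}$ by $(a'/a)^{2}$ or a similar power in the regime where $a+j-i\ge (a+j)/2$, and to handle small-$a$ or near-zero summands separately, since those summands vanish when $a+j<k'-1$. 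The goal is a total product factor of at most $(e^{\gamma/(2k')})^{2(k'-1)}\le e^{\gamma}$; this is exactly why the precision $\gamma/(2k')$ was chosen, i.e.\ a doubling of the naive exponent $k'$. If the per-factor bound fails for $a$ close to $k'$, I would fall back on requiring $\atil$ to be integer-rounded, so that $\atil=\ahat$ whenever $e^{\gamma/(2k')}\ahat<\ahat+1$, i.e.\ whenever $\ahat$ is small; that case is then trivial.

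With the per-$t$ sandwich established, taking $\max_t e^{-t\beta}$ completes the argument.
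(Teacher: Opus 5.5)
Your reduction to a per-$t$ sandwich $I_t(G)\le\Itil_t(G)\le e^{\gamma}I_t(G)$ on the good event matches the paper's Lemma~\ref{lemma:I-approx}, as do your remark that the $GS$ cap is harmless and your lower bound via the telescoping identity. The gap is the upper bound $g_t(\atil)\le e^{\gamma}g_t(\ahat)$. This is the real content of the theorem, and you leave it as a plan. Your own per-factor observation does finish it in one regime. Write $c=e^{\gamma/(2k')}$. If $\ahat\ge 2(k'-2)$, then for all $j\ge 0$ and $0\le i\le k'-2$,
\[
\frac{\atil+j-i}{\ahat+j-i}\le\frac{c\,\ahat-i}{\ahat-i}=c+\frac{(c-1)\,i}{\ahat-i}\le 2c-1\le c^{2},
\]
so every summand ratio is at most $c^{2(k'-1)}\le e^{\gamma}$. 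That is the paper's Lemma~\ref{lemma:ahat-atil}, and $\ahat\ge 2(k'-2)$ is exactly its hypothesis.

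What does not work is handling small $\ahat$ \emph{separately}. Below that threshold the per-$t$ inequality you reduced to is false, so no argument can close it. Take $k'=3$, $\ahat=1$, $\atil=2$ (admissible once $\gamma\ge 6\ln 2$). No pair has $k'$ common neighbours, so $LS_\kcl(G)=\LStil(G)=0$. Then $I_1(G)=\binom{2}{3}-\binom{1}{3}=0$, while $\Itil_1(G)=\binom{3}{3}-\binom{2}{3}=1$.

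In general, two things go wrong:
\begin{itemize}
\item Summands with $\ahat+j<k'-1\le\atil+j$ vanish on the exact side but not on the approximate side, so your mediant step has a zero denominator there.
\item Summands with $k'-1\le\ahat+j<2(k'-2)$ do not vanish at all, yet the factor $\frac{\atil-i}{\ahat-i}$ is no longer controlled by $c^2$.
\end{itemize}
Your rounding fallback forces $\lfloor\atil\rfloor=\ahat$ only when $\ahat<1/(c-1)$. That covers every $\ahat<2(k'-2)$ only if $e^{\gamma/(2k')}<1+\frac{1}{2k'-5}$ (for $k'\ge 3$), which is a restriction on $\gamma$ the statement does not impose. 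Rounding \emph{up}, which you also suggest, can break $\atil\le c\,\ahat$.

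The paper closes the gap by assuming $\ahat\ge 2(k'-2)$. In the algorithm it tests whether $\atil e^{-\gamma/(2k')}\ge 2(k'-2)$ and otherwise computes $\ahat$ exactly, so that only $\LStil$ is approximate and the sandwich is immediate. Your proof goes through only with one of these added: that hypothesis, that fallback, or an explicit bound on $\gamma$ combined with flooring $\atil$.
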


\begin{proof}
Lemma~\ref{lemma:I-approx} shows that for each $t: \widetilde{I_t}(G)$ is a $(\gamma, \delta_\kcl + \delta_{\hat{a}})$-upper approximation of $I_t(G)$. Taking the union bound on the failure probabilities of approximating $LS_\kcl(G)$ and $\ahat$, $\delta_\kcl$ and $\delta_\ahat$ respectively as they are computed once, and are reused across values of $t$, the Theorem follows.
\end{proof}

\begin{lemma}
$\widetilde{I_t}(G)$ is a $(\gamma, \delta_\kcl + \delta_{\hat{a}})$-upper approximation of $I_t(G)$.
\label{lemma:I-approx}
\end{lemma}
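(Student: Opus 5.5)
We condition on the event that both approximations succeed. By a union bound this event has probability at least $1-\delta_\kcl-\delta_{\ahat}$, and on it
\[
LS_\kcl(G)\le \LStil(G)\le e^{\gamma}LS_\kcl(G)
\qquad\text{and}\qquad
\ahat\le \atil\le e^{\gamma/(2k')}\ahat .
\]
We treat the argument of the $\min$ first and then deal with the cap at $GS$. The map $x\mapsto \min(x,GS)$ is monotone. Also, for $x\le y\le e^{\gamma}x$ we get $\min(x,GS)\le \min(y,GS)\le e^{\gamma}\min(x,GS)$, because $\min(y,GS)\le \min(e^{\gamma}x, e^{\gamma}GS)$. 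So it suffices to show
\[
J_t \le \tilde J_t\le e^{\gamma}J_t ,
\]
where $J_t=LS_\kcl(G)+B_t(\ahat)$, $\tilde J_t=\LStil(G)+B_t(\atil)$, and $B_t(a)=\binom{a+t}{k'}-\binom{a}{k'}$.

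\textbf{Lower bound.} Iterating Pascal's rule gives $B_t(a)=\sum_{j=0}^{t-1}\binom{a+j}{k'-1}$. This is nondecreasing in $a$, since each summand is. (We interpret binomials at non-integer $\atil$ via the falling-factorial polynomial, or we round $\atil$ up; in either case monotonicity on $[k'-1,\infty)$ holds, and the small-$a$ range is handled separately.) Hence $\atil\ge\ahat$ gives $B_t(\atil)\ge B_t(\ahat)$, and together with $\LStil\ge LS$ this yields $\tilde J_t\ge J_t$.

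\textbf{Upper bound.} It suffices to show $B_t(\atil)\le e^{\gamma}B_t(\ahat)$ termwise; then $\tilde J_t\le e^{\gamma}LS+e^{\gamma}B_t(\ahat)$. For each summand, $\binom{a+j}{k'-1}=\frac{1}{(k'-1)!}\prod_{i=0}^{k'-2}(a+j-i)$, which is a product of $k'-1$ factors. Scaling $a$ by $c=e^{\gamma/(2k')}\ge 1$ scales each factor by at most $c$, since $ca+j-i\le c(a+j-i)$ whenever $j-i\ge0$, i.e. $c(j-i)\ge j-i$. This gives
\[
\binom{\atil+j}{k'-1}\le c^{k'-1}\binom{\ahat+j}{k'-1}\le e^{\gamma/2}\binom{\ahat+j}{k'-1}.
\]
The slack between $e^{\gamma/2}$ and $e^{\gamma}$ is available to absorb rounding of $\atil$ to an integer.

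\textbf{Main obstacle.} The delicate case is $j<i$, i.e. small $j$. There the factor $a+j-i$ can be small or even negative, so multiplicative scaling fails. Two sub-cases arise.
\begin{itemize}
\item If $\ahat+j-i\ge 1$ for all relevant $i$, I would instead bound $\frac{ca+j-i}{a+j-i}\le c\cdot\frac{a}{a-(i-j)}$. This is not uniformly bounded, so I would switch to comparing the whole binomials $\binom{a+t}{k'}$ and $\binom{a}{k'}$ as ratios.
\item Otherwise a term vanishes for $\ahat$ but possibly not for $\atil$. Integrality helps here: if $\atil$ is rounded so that $\lceil\atil\rceil$ lies within factor $e^{\gamma/(2k')}$ of $\ahat$, then for small $\ahat$ the two values coincide, so such terms agree.
\end{itemize}
I expect to resolve this by taking $\atil$ to be an integer with $\ahat\le\atil\le e^{\gamma/(2k')}\ahat$. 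When $\atil=\ahat$ the bound is trivial. When $\atil>\ahat$ we have $\ahat\ge 2k'/\gamma$ roughly, which makes every factor $\ahat+j-i\ge \ahat-k'$ large. Each factor ratio is then at most $c\cdot\frac{\ahat}{\ahat-k'}\le e^{\gamma/(2k')}e^{\gamma/(2k')}$ for $\gamma$ small. This gives an overall factor of $e^{\gamma}$ over the $k'-1$ factors, and completes the upper bound.
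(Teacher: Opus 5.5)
Your skeleton matches the paper's proof: the union bound, pushing $\min(\cdot,GS)$ through, monotonicity of $B_t$ via the telescoped Pascal sum, and a termwise comparison $\binom{\atil+j}{k'-1}\le e^{\gamma}\binom{\ahat+j}{k'-1}$. The gap is in your final step for the factors with $j<i$, where the upper bound is not established.

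Write $d=i-j\in\{1,\dots,k'-2\}$ and $c=e^{\gamma/(2k')}$. You bound $\frac{c\ahat-d}{\ahat-d}\le c\cdot\frac{\ahat}{\ahat-k'}$ and then claim $\frac{\ahat}{\ahat-k'}\le c$. That second inequality holds only when $\ahat\ge \frac{ck'}{c-1}\approx 2k'^2/\gamma$. Integrality gives you only $\ahat\ge\frac{1}{c-1}\approx 2k'/\gamma$, which is a factor $k'$ short no matter how small $\gamma$ is. At $\ahat\approx 2k'/\gamma$ your per-factor bound is about $e^{\gamma/(2k')+\gamma/2}$. Over $k'-1$ factors this exceeds $e^{\gamma}$ once $k'\ge 3$. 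For example, with $k'=4$, $\gamma=0.4$, $\ahat=20$, $\atil=21$, your bound gives about $2.27$, against the target $e^{0.4}\approx 1.49$.

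The loss comes from discarding the $-d$ in the numerator. Keep it: $\frac{c\ahat-d}{\ahat-d}=c+\frac{(c-1)d}{\ahat-d}$. Once $\ahat\ge 2(k'-2)\ge 2d$, the correction term is at most $c-1$, so every factor is at most $2c-1\le c^2$. This gives $(2c-1)^{k'-1}\le e^{\gamma(k'-1)/k'}\le e^{\gamma}$. This is the paper's computation in Lemma~\ref{lemma:ahat-atil}. There $\ahat\ge 2(k'-2)$ is an explicit hypothesis, with a fallback to exact $\ahat$ when it cannot be certified, and the case $j=0$ is shown to be the worst by proving the ratio is non-increasing in $j$.

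Your integrality idea is a genuinely different way to supply that hypothesis. If $\atil$ is an integer and $\atil>\ahat$, then $\ahat\ge 1/(c-1)$, and this is at least $2(k'-2)$ whenever $c-1\le 1/(2(k'-2))$. The integer must come from rounding \emph{down}: $\lfloor\atil\rfloor\ge\ahat$ because $\ahat\in\mathbb{Z}$, whereas rounding up, as you suggest earlier, can push $\atil$ above $c\ahat$.

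Some such restriction is unavoidable. Take $k'=3$, $\ahat=1$, $\atil=3$ (allowed when $\gamma\ge 6\ln 3$), $LS_\kcl(G)=0$, $t=1$, with $n\ge 6$ so that $GS\ge 3$. Then $\Itil_1(G)=3$ while $I_1(G)=0$, so the upper approximation fails.
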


\begin{proof}
  Let $k' = k - 2$.
  Let $B(x,t) = \binom{x + t}{k'} - \binom{x}{k'}$.

  \textbf{First}, we show that $I_t(G) \leq \Itil(G)$, or $LS(G) + B(\ahat, t) \leq \LStil(G) + B(\atil, t) $.
  As we have $LS(G) \leq \LStil(G)$ (stated by Theorem~\ref{theorem:tildeS}), we need to show $B(\ahat, t)\leq B(\atil, t)$, knowing that $\ahat \leq \atil$ (stated by Theorem~\ref{theorem:tildeS}).
  Applying the telescoping identity, we have:
  \begin{align*}
    B(x,t) = \binom{x + t}{k'} - \binom{x}{k'} = \sum_{i=0}^{t-1}\binom{x + i}{k' -1},
  \end{align*}
  where each $\binom{x+i}{k'-1}$ is non-decreasing in $x$, so $B(x,t)$ is non-decreasing in $x$.
  It yields $B(\ahat, t) \leq B(\atil, t)$ since $\ahat \leq \atil$.

  \textbf{Second}, we show that $\Itil_t(G) \leq e^\gamma I_t(G)$.
  Using Lemma~\ref{lemma:ahat-atil}, we have the bound $\LStil(G) + B(\atil, t)\leq e^\gamma LS(G) + e^\gamma B(\ahat, t)$.
  We have:
  \begin{align*}
    \Itil_t(G) &= \min(\LStil(G) + B(\atil, t), GS_\kcl)\\
               &\leq \min(e^\gamma(LS(G) + B(\ahat, t)), GS_\kcl) \\
               &\leq \min(e^\gamma(LS(G) + B(\ahat, t)), e^\gamma GS_\kcl)
    = e^\gamma I_t(G).
  \end{align*}
  Taking the union bound on the failure probabilities of $\atil$ and $\LStil(G)$, the arguments above hold simultaneously with probability at least $1-\delta_\ahat - \delta_\kcl$, and the Lemma follows.
\end{proof}

In~\textsc{FastCliqueDP} (Algorithm~\ref{alg:clique-dp-fast}), for simplicity, we use $\ddp$ (the $\delta$-part of privacy) for both the failure probabilities of the approximation (Step 1) and the privacy (Step 2).
This makes the $\delta$-part of the final privacy guarantee slightly larger ($\frac{e^{\eps/2}+5}{2}\ddp$), with negligible effects for typical values of $\ddp = O(n^{-1})$.
Practical algorithms may want to use separate parameters with proper scaling to achieve desired values of $\ddp$.

%\begin{wrapfigure}{L}{0.6\textwidth}
%  \begin{minipage}{0.6\textwidth}
    \begin{algorithm}[H]
      \caption{\textsc{FastCliqueDP} \\ 
        \textbf{Input:} $G, k, \eps, \ddp$ \\
        \textbf{Output:} An $(\eps,\frac{e^{\eps/2}+5}{2}\ddp)$-DP estimation of $k$-clique in $G$}
      \begin{algorithmic}[1]
        \STATE $\delta_\ahat = \delta_\kcl = \ddp/2$ 
        % \STATE $x = \frac{\log{(1+k)}\log{2/\delta}}{2\eps}$
        \STATE $\gamma = \beta = \frac{16\eps}{\log{(2/\ddp)}}$ 
        \STATE $k' = k-2$
        \STATE $\atil = \textsc{FastA}(G, \gamma/(2k'), \delta_\ahat)$
        \STATE Estimate $LS_\kcl(G)$ for $(\gamma,\delta_\kcl)$-upper approximation using Algorithm \textsc{Clique.Approx} (Section~\ref{sec:blackbox}) or \textsc{Max.Clique.Pair} (Section \ref{sec:direct})
        %\STATE Estimate $LS_\kcl(G)$ for $(\gamma,\delta_\kcl)$-upper approximation // Either by Section~\ref{sec:blackbox} or~\ref{sec:direct}
        \STATE Calculate $\tilde{I}_t(G)$ as defined in Thm.~\ref{theorem:tildeS} using $LS_\kcl(G)$ 
        \STATE $T =\left\lceil \frac{(k'-1)\,e^\beta + 1}{e^\beta - 1} \right\rceil$
        \STATE $\tilde{S}^\beta(G) = \max_{t:0\ldots T}e^{-t\beta}\tilde{I}_t(G)$
        \STATE \textbf{Return} $f_\kcl(G) + Lap(\frac{2\tilde{S}^\beta(G)}{\eps})$
      \end{algorithmic}
      \label{alg:clique-dp-fast}
    \end{algorithm} 
%   \end{minipage}
% \end{wrapfigure}

The key analysis of the proof of Theorem~\ref{theorem:tildeS} is to prove $\Itil_t(G)$ is a $(\gamma,\ddp)$-upper approximation of $I_t(G)$ for all values of $t$.
While $t$ and its upper bound $T$ have no effect on $\LStil(G)$, they dictate the approximation factor of $\atil$.
Informally, since ${\tilde{a} \choose {k-2}} = {\tilde{a} \choose {k'}} \sim \frac{\tilde{a}^{k'}}{k'!}$, we can choose the approximation factor for $\ahat$ to be $\gamma/{2k'}$, such that $\hat{a} \leq \tilde{a} \leq e^{\gamma/{(2k')}}\hat{a}$ .
It yields ${\hat{a} \choose {k-2}} \leq{\tilde{a} \choose {k-2}} \leq e^\gamma{\hat{a} \choose {k-2}}$.
In Lemma~\ref{lemma:I-approx}, we prove a stronger, rigorous guarantee for $I_t$.
Combining with the result of Lemma~\ref{lemma:local-approx} (approximation guarantee for $\LStil(G)$),  $\widetilde{I_t}(G)$ is a $(\gamma, \delta_\kcl + \delta_{\hat{a}})$-upper approximation of $I_t(G)$.

There is a subtle detail for the approximation of $\ahat$.
Lemma~\ref{lemma:ahat-atil} requires a mild assumption that $\ahat \geq 2(k'-2)$, which almost all realistic graphs satisfy for practical $k$-cliques.
To handle this case rigorously, the algorithm can use $\atil$ as a proxy to $\ahat$ and compare to $2(k'-2)$ as a fallback mechanism.
If $\atil/e^{\gamma/(2k')} \geq 2(k'-2)$, then $\ahat \geq 2(k'-2)$ and the algorithm proceeds as described.
If $\atil/e^{\gamma/(2k')} < 2(k'-2)$, we compute $\ahat$ exactly for $\Itil_t(G)$, which takes $O(kn^2)$, and lose the efficiency of the quick estimation $\atil$.
In this case, we show that $\Itil_t(G)$ is a valid upper-approximation of $I_t(G)$ (Lemma \ref{lemma:I-approx}), as $\LStil(G)$ is the only approximation components of $\Itil(G)$ and the rest is calculated exactly, and the privacy is guaranteed.

With $\LStil(G)$ and $\atil$, we are ready to construct $\Itil_t(G)$, and subsequently, $\tilde{S}^{\beta}_\kcl(G)$.
We only need to compute $\LStil(G)$ and $\atil$ once, since the differences between $I_t$s are the binomials constructed by $\atil$ and $t$.
We utilize the approximation smooth sensitivity, using $\tilde{S}^{\beta}_\kcl(G)$ to calibrate a Laplacian noise for $(\epsilon,\ddp)$-DP.
The failure probability of $\tilde{S}^{\beta}_\kcl(G)$ to approximate $S^\beta_\kcl(G)$ will be incorporated into the $\delta$-part of the final DP guarantee.

%\begin{theorem}
%\label{theorem:tildeS}
%$\tilde{S}_{\beta}(G) = \max_{t:0\ldots T}e^{-t\beta}\tilde{I}_t(G)$ is a $(\gamma,\ddp)$-upper approximation to $S_{\beta}(G)$, with $\ddp = \delta_\kcl + \delta_{\hat{a}}$.
%\end{theorem}

\begin{restatable}[]{theorem}{privacy}
  \label{theorem:privacy}
  Algorithm~\ref{alg:clique-dp-fast} is $(\eps,\frac{e^{\eps/2}+5}{2}\ddp)$-differentially private.
\end{restatable}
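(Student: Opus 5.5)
The plan is to obtain the privacy guarantee by direct composition of results already stated. Theorem~\ref{theorem:approx-smooth} does the main work: it needs an exact $\beta$-smooth upper bound on $LS_\kcl$ together with a $(\gamma,\delta')$-upper approximation of it.

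\textbf{Step 1: the exact smooth bound.} Lemma~\ref{lemma:smooth-clique} shows that $S^\beta(G)=\max_{t:0\ldots T}e^{-t\beta}I_t(G)$ is a $\beta$-smooth upper bound on $LS_\kcl(G)$ in the sense of Definition~\ref{def:smooth-upper-bound}. The lemma allows any admissible $T$, and Algorithm~\ref{alg:clique-dp-fast} uses $T=\lceil((k'-1)e^\beta+1)/(e^\beta-1)\rceil$. I will check that the smoothness proof only needs $T$ to be at least this threshold, since that is where the condition enters Case~2. The algorithm's $T$ and the $T$ used for $\tilde S^\beta$ then define the same bound $S^\beta$.

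\textbf{Step 2: the approximation guarantee.} The algorithm calls \textsc{FastA} with parameters $(\gamma/(2k'),\delta_\ahat)$. It calls the $LS_\kcl$ estimator with parameters $(\gamma,\delta_\kcl)$, where $\delta_\ahat=\delta_\kcl=\ddp/2$. By Theorem~\ref{theorem:tildeS} and Lemma~\ref{lemma:I-approx}, $\tilde S^\beta(G)$ is then a $(\gamma,\ddp)$-upper approximation of $S^\beta(G)$. This holds because each $\Itil_t$ satisfies $I_t\le\Itil_t\le e^\gamma I_t$ on a single good event of probability at least $1-\ddp$, and taking the max of $e^{-t\beta}(\cdot)$ over $t$ preserves these inequalities. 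I will also handle the fallback case where $\ahat<2(k'-2)$, in which $\ahat$ is computed exactly. Exact computation only shrinks the failure event, so the bound still holds.

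\textbf{Step 3: apply the framework.} The algorithm sets $\gamma=\beta=16\eps/\log(2/\ddp)$ and returns $\fkcl(G)+Lap(2\tilde S^\beta(G)/\eps)$, with an exact count. This is exactly the mechanism of Theorem~\ref{theorem:approx-smooth} with $\delta=\ddp$ and $\delta'=\ddp$. That theorem gives $(\eps,\frac{e^{\eps/2}+1}{2}\ddp+2\ddp)=(\eps,\frac{e^{\eps/2}+5}{2}\ddp)$-DP.

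The main obstacle is the bookkeeping around the approximation guarantee. The approximation must hold on every input graph (both $G$ and its neighbor $G'$) with the stated probability, because Theorem~\ref{theorem:approx-smooth} charges $2\delta'$ for failures on the two neighbors. The estimators for $\LStil$ and $\atil$ must therefore be upper approximations with failure probability at most $\ddp/2$ each, uniformly over all inputs. I would verify this against the guarantees of the chosen $LS_\kcl$ estimator (Lemma~\ref{lemma:local-approx}) and of \textsc{FastA} (Lemma~\ref{lemma:ahat-atil}).
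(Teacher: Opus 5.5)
Your proposal is correct and follows the paper's own proof. Lemma~\ref{lemma:smooth-clique} supplies the exact $\beta$-smooth bound. Theorem~\ref{theorem:tildeS}, via Lemma~\ref{lemma:I-approx} with $\delta_\ahat+\delta_\kcl=\ddp$, makes $\tilde S^\beta$ a $(\gamma,\ddp)$-upper approximation of it. Theorem~\ref{theorem:approx-smooth} with $\delta=\delta'=\ddp$ then gives $\frac{e^{\eps/2}+1}{2}\ddp+2\ddp=\frac{e^{\eps/2}+5}{2}\ddp$.

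Your extra checks are sound refinements of points the paper glosses over: that Case~2 of the smoothness proof only needs $T$ above the threshold, so the algorithm's $T$ is admissible, and that the fallback branch preserves the bound. The one slip is a citation. The accuracy guarantee of \textsc{FastA} is its output specification, imported from Nguyen et al., not Lemma~\ref{lemma:ahat-atil}, which is the binomial comparison used inside Lemma~\ref{lemma:I-approx}.
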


\begin{proof}
 With $\widetilde{I_t}(G)$ as being defined in Theorem~\ref{theorem:tildeS}, it is an $(\gamma,\delta)$-upper approximation of $I_t(G)$, with $\gamma = \frac{16\eps}{\log{2/\ddp}}$, and the failure probability $\ddp = \delta_{k\mathbb{C}} + \delta_{\ahat}$, due to the union bound of the failure probabilities of estimating $\ahat$ and $LS_\kcl(G)$.
 Let $S^{\beta}(G)$ being as defined in Lemma~\ref{lemma:smooth-clique}, it is clear that $\tilde{S}^{\beta}(G)$ is a $(\gamma,\ddp)$-upper approximation of $S^{\beta}(G)$.
 Applying the mechanism of Theorem~\ref{theorem:approx-smooth} with the approximation of $S^{\beta}(G)$ and $\delta' = \ddp$(we set $\delta'$ in the context of Theorem~\ref{theorem:approx-smooth} to $\ddp$ for brevity, practical algorithms may choose a different value), we have $f_\kcl(G) + Lap(\frac{2\tilde{S}^\beta(G)}{\eps})$ satisfies $(\eps, \frac{e^{\eps/2}+5}{2}\ddp )$-differential privacy.
\end{proof}

\textbf{For approximate count of $k$-clique.}

Let $(\eta,\delta_A)$-Approximation $A_\fkcl$ to $\fkcl$, i.e., with probability at least $1-\delta_A$, $A_\fkcl(G) \in (1\pm \eta)\fkcl(G)$, with the probability space over the randomness of $A_\fkcl$.

\begin{algorithm}[H]
  \caption{\textsc{FastApproxCliqueDP} \\ 
    \textbf{Input:} $G, k, \eps, \ddp$, $(\eta,\delta_A)$-Approximation $A_\fkcl$ to $\fkcl$ \\
    \textbf{Output:} An $(\eps,\frac{e^{\eps/2}+5}{2}\ddp)$-DP estimation of $k$-clique in $G$}
  \begin{algorithmic}[1]
    \STATE $\delta_\ahat = \delta_\kcl = \ddp/2$ 
    % \STATE $x = \frac{\log{(1+k)}\log{2/\delta}}{2\eps}$
    \STATE $\gamma = \beta = \frac{\eps}{8\log{(2/\ddp)}}$ 
    \STATE $k' = k-2$
    \STATE $\atil = \textsc{FastA}(G, \gamma/(2k'), \delta_\ahat)$
    \STATE Estimate $LS_\kcl(G)$ for $(\gamma,\delta_\kcl)$-upper approximation using Algorithm \textsc{Clique.Approx} (Section~\ref{sec:blackbox}) or \textsc{Max.Clique.Pair} (Section \ref{sec:direct})
    %\STATE Estimate $LS_\kcl(G)$ for $(\gamma,\delta_\kcl)$-upper approximation // Either by Section~\ref{sec:blackbox} or~\ref{sec:direct}
    \STATE Calculate $\tilde{I}_t(G)$ as defined in Thm.~\ref{theorem:tildeS} using $LS_\kcl(G)$ 
    \STATE $T =\left\lceil \frac{(k'-1)\,e^\beta + 1}{e^\beta - 1} \right\rceil$
    \STATE $\tilde{S}^\beta(G) = \max_{t:0\ldots T}e^{-t\beta}\tilde{I}_t(G)$
    \STATE $\eta = \frac{\eps}{16\log{(2/\ddp)}}$
    \STATE \textbf{Return} $A_\fkcl(G) + Lap(\frac{\frac{4\eta}{1-\eta}\eta A_\fkcl(G) + 4\tilde{S}^\beta(G)}{\eps})$
  \end{algorithmic}
  \label{alg:clique-approx-dp-fast-approx}
\end{algorithm} 

\begin{restatable}[]{theorem}{privacy-approx}
  \label{theorem:privacy-approx}
  Algorithm~\ref{alg:clique-approx-dp-fast-approx} is $(\eps,\frac{e^{\eps/2}+5}{2}\ddp + 2\delta_A)$-differentially private.
\end{restatable}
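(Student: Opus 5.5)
The plan is to reduce the statement to Theorem~\ref{theorem:approx-smooth-approx-query}, instantiated with $f=\fkcl$, $A_f=A_\fkcl$, the smooth bound $S^\beta(G)=\max_{t=0,\ldots,T}e^{-t\beta}I_t(G)$ of Lemma~\ref{lemma:smooth-clique}, and $\tilde S=\tilde S^\beta(G)$ as computed in Algorithm~\ref{alg:clique-approx-dp-fast-approx}. This means verifying three hypotheses: that $S^\beta$ is a $\beta$-smooth upper bound on $LS_\kcl$, that $\tilde S^\beta$ is a $(\gamma,\delta')$-upper approximation to $S^\beta$ with $\delta'=\ddp$, and that the parameters satisfy the side conditions and the admissibility budget.

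\textbf{Step 1 (smoothness with the algorithm's $T$).} The algorithm uses $T=\lceil ((k'-1)e^\beta+1)/(e^\beta-1)\rceil$. Re-reading the proof of Lemma~\ref{lemma:smooth-clique}, Case~1 uses only the ladder property, and Case~2 uses only $T\ge ((k'-1)e^\beta+1)/(e^\beta-1)$. Hence the lemma holds for every $T$ at or above this threshold, and in particular for the algorithm's choice of $T$. I would state this explicitly, because the lemma as written takes a maximum with $\binom{n}{2}$.

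\textbf{Step 2 (upper approximation).} By Lemma~\ref{lemma:I-approx}, on the single event $\mathcal E$ on which both $\atil$ and $\LStil(G)$ succeed, we have $I_t(G)\le \Itil_t(G)\le e^\gamma I_t(G)$ for all $t$ simultaneously. This is because $\atil$ and $\LStil$ are computed once and reused for every $t$. By a union bound, $\Pr[\mathcal E]\ge 1-\delta_\ahat-\delta_\kcl=1-\ddp$. Multiplying by $e^{-t\beta}$ and taking the maximum over the common index set $t\in\{0,\ldots,T\}$ preserves both inequalities. So $\tilde S^\beta$ is a $(\gamma,\ddp)$-upper approximation of $S^\beta$, exactly as in Theorem~\ref{theorem:tildeS}. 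I would also invoke the fallback described before Theorem~\ref{theorem:privacy}: if $\atil e^{-\gamma/(2k')}<2(k'-2)$, compute $\ahat$ exactly, so that the hypothesis of Lemma~\ref{lemma:ahat-atil} is never violated.

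\textbf{Step 3 (parameter budget).} With $\gamma=\beta=\frac{\eps}{8\log(2/\ddp)}$ and $\eta=\frac{\eps}{16\log(2/\ddp)}$, the budget is
\[4\eta+\gamma+\beta=\frac{\eps}{4\log(2/\ddp)}+\frac{\eps}{4\log(2/\ddp)}=\frac{\eps}{2\log(2/\ddp)},\]
so it is met with equality. The remaining side conditions are $\eta<1/2$, which holds for $\eps\le 1$ and $\ddp<1$, and $\ddp\le 2/e$; both are assumed as usual.

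\textbf{Step 4 (conclusion).} Theorem~\ref{theorem:approx-smooth-approx-query} then yields $(\eps,\frac{e^{\eps/2}+1}{2}\ddp+2\ddp+2\delta_A)=(\eps,\frac{e^{\eps/2}+5}{2}\ddp+2\delta_A)$-DP.

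The main obstacle is matching the released noise to the theorem. The return line of the algorithm contains an extra factor $\eta$ in the scale, $\frac{4\eta}{1-\eta}\eta A_\fkcl(G)$, and this would shrink the noise below what Theorem~\ref{theorem:approx-smooth-approx-query} requires. I would read it as the intended $\frac{4\eta}{1-\eta}A_\fkcl(G)+4\tilde S^\beta(G)$, i.e.\ exactly the mechanism of the theorem. If the literal version were intended, I expect the argument would not go through as stated.
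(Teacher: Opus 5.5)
Your proposal is correct and follows the paper's own route. The paper proves this exactly as it proves Theorem~\ref{theorem:privacy}, using Lemma~\ref{lemma:smooth-clique} and Theorem~\ref{theorem:tildeS} to obtain a $(\gamma,\ddp)$-upper approximation of a $\beta$-smooth bound, and then invoking Theorem~\ref{theorem:approx-smooth-approx-query} in place of Theorem~\ref{theorem:approx-smooth}; this yields $\frac{e^{\eps/2}+1}{2}\ddp+2\ddp+2\delta_A$.

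The paper leaves three details implicit, and your handling of each is correct: the algorithm's smaller $T$ still satisfies Case~2 of Lemma~\ref{lemma:smooth-clique}; $4\eta+\gamma+\beta=\frac{\eps}{2\log(2/\ddp)}$ meets the budget with equality; and the stray factor $\eta$ in the returned noise scale has to be read as a typo.
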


The proof of Theorem~\ref{theorem:privacy-approx} is similar to proof of Theorem~\ref{theorem:privacy}, except that we invoke Theorem~\ref{theorem:approx-smooth-approx-query} for the approximate count $A_\fkcl$ of $k$-clique.

\textbf{Remark.}
We define and analyze $S^{\beta}_{\fkcl}$ as a valid smooth upper bound on the local sensitivity of $\fkcl$.
The exact calculation of $S^{\beta}_{\fkcl}$ involves the local sensitivity $LS_\kcl(G) =\max_{u,v}f_{(k-2)\mathbb{C}}(G(A_{uv}))$, which in turn calculates the counts of clique of smaller size ($k-2$).
Algorithm~\ref{alg:clique-dp-fast} specifies the required steps to calculate $\tilde{S}^{\beta}_{\kcl}$ approximate $S^{\beta}_{\kcl}$.
The major component we need to approximate to $LS_\kcl(G)$ (See Section~\ref{sec:fastA} for the approximation of $\ahat$).
There are two possible approaches.
In the first approach (Section~\ref{sec:blackbox}), we apply a fast, accurate clique counting algorithm that can give us the utility and failure bound in a black-box manner for each local neighborhood of a pair of vertices $u, v$.
In the second approach (Section~\ref{sec:direct}), we design a novel dedicated algorithm to estimate $\max_{u,v}f_{(k-2)\mathbb{C}}(G(A_{uv}))$ directly.
The different approaches give different runtime complexities, but they achieve the same required approximation guarantee for the estimation of $LS_\kcl(G)$, hence the privacy of Theorem~\ref{theorem:privacy} applies to both.

\noindent
\subsection{Algorithm \textsc{Clique.Approx} for black-box approximation of  $LS_\kcl(G)$.}
\label{sec:blackbox}

By definition, $LS_\kcl(G) = \max_{G': G'\sim G}|\fkcl(G') - \fkcl(G)|$, it indicates the largest difference in $\#k$-cliques of a neighbor $G'$ from $G$, in which $G$ and $G'$ differ in exactly one edge.
Assume that edge $(u,v)$ exists, a specific $k$-clique that involves $(u,v)$ contains $k-2$ nodes that are common neighbors of $u$ and $v$, and that they are all connected to each other, hence they form a $k-2$-clique if we remove $u$ and $v$.
Therefore, one way to calculate the number of $k$-clique involving $u,v$ is to count all $k-2$-cliques in a subgraph induced by the common neighbors of $u$ and $v$.
We can utilize any clique counting (with clique size $k-2$) algorithm for each induced subgraph, hence the approach is named ``black-box''.
Our algorithm \textsc{Clique.Approx} uses \algK~\cite{eden2018approximating}, that estimates $k$-clique counts in sublinear time (in specific regimes).
However, \algK~has failure probability of $1/3$, where our $\LStil(G)$ requires a failure probability much lower $O(1/n)$ as it will be incorporated into the $\delta$-part of the privacy.
We then have to sample $O(\log{n})$ instances of $\algK$, and takes the median in order to achieve the desired failure probability.
The main results are summarized below.
The full details, including the description of \textsc{Clique.Approx} are presented in Section~\ref{sec:blackbox-appendix}.

\begin{restatable}[]{lemma}{localapprox}
$\widetilde{LS}(G) = \max_{u,v}\textsc{Clique.Approx}(G(A_{uv}), k-2, \gamma, \delta_{\kcl})$ is a $(\gamma,\delta_{\kcl})$-upper approximation of $LS_\kcl(G)$.
  \label{lemma:local-approx}
\end{restatable}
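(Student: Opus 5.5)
The proof reduces the statement to a per-pair guarantee for \textsc{Clique.Approx} and then passes it through the maximum. By the identity $LS_\kcl(G) = \max_{u,v} f_{(k-2)\mathbb{C}}(G(A_{uv}))$ noted in the preliminaries, it suffices to show two things. First, for every pair $(u,v)$ the output $X_{uv} := \textsc{Clique.Approx}(G(A_{uv}), k-2, \gamma, \delta_\kcl)$ satisfies $c_{uv} \le X_{uv} \le e^{\gamma} c_{uv}$, where $c_{uv} := f_{(k-2)\mathbb{C}}(G(A_{uv}))$. Second, these events hold simultaneously for all pairs with probability at least $1-\delta_\kcl$. Given both, monotonicity of the maximum yields
\[
LS_\kcl(G) = \max_{u,v} c_{uv} \le \max_{u,v} X_{uv} \le e^{\gamma}\max_{u,v} c_{uv} = e^{\gamma} LS_\kcl(G).
\]
This is exactly the $(\gamma,\delta_\kcl)$-upper approximation claimed.

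For the per-pair guarantee, we use the construction of \textsc{Clique.Approx} from Section~\ref{sec:blackbox-appendix}. It runs $r = O(\log(n^2/\delta_\kcl))$ independent copies of \algK{} with accuracy parameter $\eta$ and takes the median. It then rescales that median into an upper approximation by setting $X_{uv} = \mathrm{med}/(1-\eta)$, with $\eta$ chosen so that $\frac{1+\eta}{1-\eta} \le e^{\gamma}$, say $\eta = \tanh(\gamma/2)$, or more simply $\eta \le \gamma/3$ for $\gamma \le 1$. Each copy lands in $(1\pm\eta)c_{uv}$ with probability at least $2/3$. The median fails only if at least half the copies fail, and by a Chernoff (Hoeffding) bound this has probability at most $e^{-r/18}$, which is at most $\delta_\kcl/n^2$ for the chosen $r$. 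On success, $\mathrm{med} \in [(1-\eta)c_{uv}, (1+\eta)c_{uv}]$. Dividing by $1-\eta$ gives $c_{uv} \le X_{uv} \le \frac{1+\eta}{1-\eta}c_{uv} \le e^{\gamma}c_{uv}$.

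The degenerate case $c_{uv}=0$ needs a separate check: the estimator must output $0$, or the bounds must hold trivially. For \algK{} this follows because it is sampling-based, so it only reports cliques it actually finds and returns $0$ on a clique-free graph. If needed, pairs with $a_{uv} < k-2$ can also be set to $0$ deterministically.

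A union bound over the at most $\binom{n}{2} \le n^2$ pairs then makes the total failure probability at most $\delta_\kcl$. I expect the main obstacle to be this calibration step. The per-call failure parameter inside \textsc{Clique.Approx} must be $\delta_\kcl/n^2$ rather than $\delta_\kcl$, which still costs only $O(\log(n/\delta_\kcl))$ repetitions. The two-sided $(1\pm\eta)$ guarantee of \algK{} must also be converted correctly into the one-sided multiplicative form $[c, e^{\gamma}c]$ that the upper-approximation definition requires. If the appendix's \textsc{Clique.Approx} already builds in the $n^2$ factor and the rescaling, the lemma follows directly from the max-monotonicity argument above.
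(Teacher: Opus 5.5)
Your proposal is correct and follows the paper's proof. The paper fixes a pair $(u,v)$ and invokes Lemma~\ref{lemma:k-approx}: the median of $O(\log(n^2/\delta_\kcl))$ runs of \algK{} with $\alpha=\frac{e^\gamma-1}{e^\gamma+1}$, rescaled by $1/(1-\alpha)$, gives the per-pair bound with failure probability $n^{-2}\delta_\kcl$; it then union-bounds over fewer than $n^2$ pairs and passes to the maximum, as you do. The paper leaves implicit several points you make explicit: the Hoeffding derivation of the median step, that the $n$ in the repetition count must be that of $G$ rather than $|A_{uv}|$, and that the case $f_{(k-2)\mathbb{C}}(G(A_{uv}))=0$ needs separate care.
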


\begin{restatable}[]{theorem}{totaltimeblackbox}
\label{theorem:total-time-blackbox}
The calculation of $\tilde{S}_{\beta}(G)$ using \textsc{Clique.Approx} takes
\begin{align*}
  O\Bigg( \Big[ \sum_{u,v \in V(G)}\mathbf{T}_{\operatorname{\textsc{Clique.Approx}}}(G(A_{uv}), k-2) + m \\+ n + \frac{||W||_1}{\hat{a}^2}\Big] poly(\log{m}, \log{n}, 1/\alpha, k, \log{1/\delta})\Bigg)
\end{align*}  expected running time, with $W$ as defined in Definition~\ref{def:W}.
\end{restatable}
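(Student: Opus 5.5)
The plan is to bound the cost of each step of Algorithm~\ref{alg:clique-dp-fast} that produces $\tilde{S}_\beta(G)$, and then add the bounds. There are three pieces: computing $\atil$ with \textsc{FastA}, computing $\LStil(G)$ with \textsc{Clique.Approx} over all pairs, and the final maximization over $t\in\{0,\ldots,T\}$. All $\plog$ and $k$-dependent overheads (median amplification, accuracy parameters, $T$) go into the common factor $poly(\log m,\log n,1/\alpha,k,\log 1/\delta)$.

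\textbf{Step 1: preprocessing.} I would first build adjacency lists and hash sets in $O(m+n)$ time. With these, edge membership queries take expected $O(1)$ time, so each common-neighbor set $A_{uv}$ and the induced subgraph $G(A_{uv})$ can be accessed in the query model that \algK{} assumes. Materializing $G(A_{uv})$ explicitly could cost more than \algK{}'s sublinear budget. Instead I would charge this access to the queries \algK{} issues, simulating degree, neighbor and pair queries in $G(A_{uv})$ through intersection queries on $G$. This accounts for the additive $m+n$ term.

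\textbf{Step 2: the dominant term.} By Lemma~\ref{lemma:local-approx}, $\LStil(G)=\max_{u,v}\textsc{Clique.Approx}(G(A_{uv}),k-2,\gamma,\delta_\kcl)$. Each call runs $O(\log(n^2/\delta_\kcl))$ independent copies of \algK{} and takes the median. That gives time $\mathbf{T}_{\textsc{Clique.Approx}}(G(A_{uv}),k-2)$ per pair, up to the logarithmic factor, and summing over pairs gives the first term. I would take the failure probability per pair to be $\delta_\kcl/n^2$ so that a union bound over pairs is valid, which costs only a $\log n$ factor. Pairs with $A_{uv}=\emptyset$ contribute nothing and can be skipped.

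\textbf{Step 3: $\atil$.} The $\|W\|_1/\hat a^2$ term comes from \textsc{FastA} (Section~\ref{sec:fastA}) with accuracy $\gamma/(2k')$ and failure probability $\delta_\ahat$. I would cite its expected running time in terms of $W$ from Definition~\ref{def:W}, where the sampling weight is normalized by $\hat a^2$. The $1/\gamma^2$ and $k'^2$ dependence coming from accuracy $\gamma/(2k')$ is absorbed into the polynomial factor.

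\textbf{Step 4: the maximization.} $T=\lceil((k'-1)e^\beta+1)/(e^\beta-1)\rceil=O(k/\beta)$, and $\beta=16\eps/\log(2/\ddp)$, so $T$ is polynomial in $k$, $1/\eps$ and $\log(1/\delta)$. Computing each $\Itil_t$ takes $O(k)$ arithmetic operations, by updating $\binom{\atil+t}{k'}$ incrementally. So this step also fits in the polynomial factor. Adding Steps 1 through 4 gives the bound.

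\textbf{Main obstacle.} I expect the hardest part to be justifying in Step 1 that the induced-subgraph oracles for all $G(A_{uv})$ can be simulated without blowing up the budget. Explicitly building every $G(A_{uv})$ would cost roughly $\sum_{u,v}|E(G(A_{uv}))|$, which can be far larger than the stated bound. My first approach would be to argue that every query \algK{} issues to $G(A_{uv})$ can be answered in expected $O(1)$ time, or in polylogarithmic time via sorted adjacency lists, and that the sizes $|A_{uv}|$ that \algK{} needs can be read off from a single pass. If that fails, I would interpret $\mathbf{T}_{\textsc{Clique.Approx}}$ as already including the oracle-construction time for $G(A_{uv})$, which keeps the statement correct as written.
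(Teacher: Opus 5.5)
Your decomposition matches the paper's proof: sum the per-pair cost $\mathbf{T}_{\operatorname{\textsc{Clique.Approx}}}(G(A_{uv}),k-2)$ over all pairs, add the cost of \textsc{FastA} (in the paper this is where both the $m+n$ and the $\|W\|_1\log m\log n/\hat a^2$ terms come from), and absorb the $T$ values of $t$ into the polynomial factor, where your observation that $T$ grows like $k/\beta$, hence is polynomial in $k$ and $1/\alpha$, is more careful than the paper's $T=O(poly(k))$. The $O(1)$-per-query simulation of $G(A_{uv})$ in your Step~1 does not hold as stated, since degree queries, neighbor queries and uniform vertex samples in $G(A_{uv})$ cannot be answered in $O(1)$ time from hash sets of $G$; however, your fallback of charging access to $G(A_{uv})$ to $\mathbf{T}_{\operatorname{\textsc{Clique.Approx}}}$ is exactly what the paper's proof implicitly assumes, so on that reading your argument is complete.
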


\noindent
\subsection{Algorithm \textsc{Max.Clique.Pair} for direct estimation of $LS_\kcl(G)$.}
\label{sec:direct}

$LS_\kcl(G)$ is the maximum count of $k$-cliques that changes when we flip the connection between a pair of vertices $u,v\in V(G)$.
If we flip $(u,v)\in E(G)$, i.e., delete the edge $(u,v)$, we remove all $k$-cliques that contain $(u,v)$.
If we flip $(u,v)\notin E(G)$, i.e., add a new edge $(u,v)$, we transform all near-cliques missing $(u,v)$ to be new $k$-cliques in $G'$.
For $(u,v)\in E(G)$, let $K_{u,v}$ denote the count of $k$-cliques containing $(u,v)$, and for $(u,v)\notin E(G)$, let $\Kone_{u,v}$ denote the count of $k$-near-cliques missing $(u,v)$.
Let $\Khat = \max_{(u,v)\in E(G)}K_{u,v}$ and $\Konehat = \max_{(u,v)\notin E(G)}\Kone_{u,v}$.
Therefore $LS_\kcl(G) = \max_{u,v} |\text{change of $\#k$-cliques flipping (u,v)}| = \max(\Khat, \Konehat)$.
We first describe the technique to estimate $\Khat$ (for the cliques).
The estimation of near-cliques are built upon clique estimation.
We utilize a technique called \Turan~from~\cite{10.1145/3038912.3052636, jain2020provably}, which is based on an observation that when the edge density of a subgraph is high enough, it is likely to contain a clique.
In this, we construct a set of shadows, each contains a dense subgraph.
The shadow has a special property, that there exists a one-to-one mapping between a (potentially) small clique in the shadow to one $k$-clique in the original graph. 
This is used to create a shadow sampling method that all $k$-cliques (via mapping) can be sampled with uniform probability $1/w(\Ess_k)$, in which $w$ is defined in Algorithm~\ref{alg:max-pair-clique-approx}.
We present the fundamentals of \Turan~in Section~\ref{sec:turan-shadow}.

The original setup of \Turan~ estimates the total count of cliques in a graph, and we have to extend it to estimate the clique counts per each individual edge.
Even with the estimation of $K_{u,v}$ per each edge, it is not straightforward to find $\Khat = \max_{u,v} K_{u,v}$, because the counts of different pairs $(u,v)$ may have vastly different variances, and they do not converge to their expectations the same way, e.g., lower $K_{u,v}$ counts require more samples to be accurate.
We design a generic routine in Section~\ref{sec:point-statistics}, to estimate the maximum of point-wise statistics $\max_{j\in J} X_j$ accurately and with high probability, in which $\Xtil$ is an estimator that returns $\Xtil_j$ for each $j\in J$.
We note that $X_j$s may require different samples of $\Xtil$ to converge within their expected values.
The routine applies to any estimator $X$ that is point-wise accurate, defined in Definition~\ref{def:estimator-accurate}.

\subsubsection{Maximum of  Point-wise statistic estimation}
\label{sec:max-point-wise}

\begin{restatable}{definition}{pointwiseaccurate}
  \label{def:estimator-accurate}
  A (stochastic) estimator $\Xtil$ is $(\theta,\delta)$-point-wise accurate if for every $i\in J$, once the number of sampling iteration $s \geq \frac{W\log{2/\delta}}{\theta^2X_{i}}$ for some scaling factor $W$ that is independent on $i$, with probability at least $1-\delta$, the estimator $\Xtil_i$ is within $(1\pm\theta)$-multiplicative factor of $X_i$, i.e., $\Pr[\Xtil_i\notin (1\pm\theta)X_i] \leq \delta$.
\end{restatable}

We observe that finding $\max_{j\in J} X_j$ is easier if we know a lower bound $\tau < \max_{j\in J} X_j$.
With $\tau$, we can set the number of sampling iteration $s \geq \frac{W\log{2/\delta}}{\theta^2\tau}$, and hence guarantees that, $\Xtil_{j^*}: X_{j^*} = \max_{j\in J} X_j$ is accurate.
Beside that, for many $j$ such that $X_j << X_{j^*}$, their estimator $\Xtil_j$ might not be accurate, but still end up far smaller than  $\Xtil_{j^*}$.
However, we do not know $\tau$, and setting it too low yields too many iterations and reduces efficiency.
We design a routine to guess $\tau$ in Algorithm~\ref{alg:tau-estimation}, with the initial guessed value to be the largest possible value of $\tau$.
We use $\tau$ to calculate the number of iterations, and reduce its value if the observations of sampling values do not match our expectation (that $\tau \lessapprox \max_{j\in J} X_j$.
Algorithm~\ref{alg:tau-estimation} guarantees that we will find such $\tau$, and we use its value to accurately estimate the maximum of point-wise statistics (Theorem~\ref{lemma:stat-estimation}).

\subsubsection{Maximum of pair-wise clique count}
\label{sec:pair-clique}

Next, we have to prove that our estimation of $K_{u,v}$ satisfies the requirements in Definition~\ref{def:estimator-accurate}.
We observe that for each pair $u,v$, a clique containing $(u,v)$ is sampled with probability proportional to $K_{u,v}$, due to the uniformity of clique sampling. 
However, the events that cliques involving different pairs $(u,v)$ are sampled in each iteration are not independent across different pairs of vertices, and therefore a straightforward Chernoff bound cannot apply here.
We need to use a proxy count per edge $(u,v)$, per iteration, and show that across iterations, these counts are independent.
The concentration bound therefore states that if we run enough iterations, the count of cliques containing $(u,v)$ is close to its expectation, satisfying the accuracy requirements.

% for clique 
For a pair of vertices $u$ and $v$, a clique $K$ is a pair-wise clique of the pair $u, v$ if $(u, v) \in E(K)$, i.e., $u$ and $v$ are two vertices in the clique $K$.
We use $C^k_{{u,v}}$ to denote the set of all pair-wise cliques (of a fixed size $k$) of the pair $u, v$, and let $K^k_{u,v} = |C^k_{{u,v}}|$.
For simplicity, we will drop the clique size $k$ when it is obvious from the context.
We assume that each unique clique has a unique identifying index $j$, denoted as $K_j$.
The unique clique identifier $K_j$ and its variables $X_{j,t}$ are for the purposes of analyzing the utility, the actual algorithm does not need to implement these.
We design an algorithm to estimate $\Khat = \max_{(u,v)\in E(G)} K_{u,v}$, i.e., the maximum count of pair-wise cliques among all pairs of nodes in a graph $G$.

\begin{algorithm}[ht]
  \caption{\textsc{Unbounded.PairWise.Clique.Approx} \\ 
    \textbf{Input:} $G, s$ and \Turan~$\Ess_k$ \\
    \textbf{Output:} An estimation of $ K_{u,v}: \forall (u,v)\in E(G)$}

  \begin{algorithmic}[1]
    \STATE $w(\Ess_k) := \sum_{(P, S, l)\in \Ess_k}{|S| \choose l}$
    \STATE $\forall (u,v)\in E(G): Z_{u,v} := 0$
    \FOR{$t := 1,\ldots, s$}
    \STATE $\forall j: X_{j,t} := 0$
    \STATE $\forall (u,v)\in E(G): Z_{u,v,t} := 0$
    \STATE $H := \Sample(\Ess_k)$ (Algorithm~\ref{alg:shadow-sample})
    \IF{$\exists j: H = K_j$ (H is a clique)}
    \STATE $X_{j,t} := 1; \forall j'\neq j: X_{j',t} := 0$ %\text{Note: For the purpose of analysis only.}
    \FOR{$(u,v) \in E(H)$ (each edge in $H$)}
    \STATE $Z_{u,v,t} := 1$
    \ENDFOR
    \ENDIF
    \STATE $\forall (u,v) \in E(G): Z_{u,v} := Z_{u,v}+ Z_{u,v,t}$
    \ENDFOR
    \STATE $\forall (u,v) \in E(G): \widetilde{K}_{u,v} := Z_{u,v}\frac{w(\Ess_k)}{s}$ 
    \STATE \textbf{Return }  $\widetilde{K}$
  \end{algorithmic}

  \label{alg:pair-clique-approx}
\end{algorithm} 

\begin{lemma}
  \label{lemma:Kuv-expectation}
  For every $(u,v)\in E(G)$, $\E[Z_{u,v}\frac{w(\Ess_k)}{s}] = K_{u,v}$.
\end{lemma}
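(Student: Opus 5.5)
The plan is to use linearity of expectation over the $s$ iterations, together with the uniform-sampling property of \Turan. Since $Z_{u,v} = \sum_{t=1}^{s} Z_{u,v,t}$ and the iterations are identically distributed (each calls $\Sample(\Ess_k)$ independently), it suffices to show that for a single iteration $t$,
\[
\Pr[Z_{u,v,t}=1] = \frac{K_{u,v}}{w(\Ess_k)}.
\]
This gives $\E[Z_{u,v}] = s K_{u,v}/w(\Ess_k)$. Multiplying by the deterministic factor $w(\Ess_k)/s$ then yields the claim, because $w(\Ess_k)$ depends only on the fixed shadow $\Ess_k$ and not on the sampling randomness.

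To compute $\Pr[Z_{u,v,t}=1]$, I would use the indicator variables $X_{j,t}$. In iteration $t$, $Z_{u,v,t}=1$ exactly when the sampled $H$ is a $k$-clique $K_j$ with $(u,v)\in E(K_j)$, that is, with $K_j\in C_{u,v}$. At most one $X_{j,t}$ equals $1$ per iteration, so these events are disjoint. Hence
\[
Z_{u,v,t} = \sum_{j: K_j\in C_{u,v}} X_{j,t}
\quad\text{and}\quad
\E[Z_{u,v,t}] = \sum_{j: K_j\in C_{u,v}} \Pr[H=K_j].
\]

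The substantive step is invoking the \Turan{} sampling guarantee from Section~\ref{sec:turan-shadow}. $\Sample(\Ess_k)$ picks a shadow triple $(P,S,l)$ with probability $\binom{|S|}{l}/w(\Ess_k)$, then picks a uniform $l$-subset of $S$. This outputs each particular $l$-subset of each shadow set with probability $1/w(\Ess_k)$. Combined with the one-to-one correspondence between $l$-cliques in shadow sets (completed by $P$) and $k$-cliques of $G$, each $k$-clique $K_j$ is returned with probability exactly $1/w(\Ess_k)$. Non-clique samples contribute nothing to any $Z_{u,v,t}$. Summing over the $K_{u,v}=|C_{u,v}|$ cliques containing $(u,v)$ gives $\E[Z_{u,v,t}] = K_{u,v}/w(\Ess_k)$.

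The main obstacle is establishing that uniformity rigorously. One must check that the bijection from the shadow holds, so that every $k$-clique is represented exactly once across all $(P,S,l)$. I would rely on the shadow property stated for \Turan{} in Section~\ref{sec:turan-shadow} and simply cite it, rather than re-deriving the construction.
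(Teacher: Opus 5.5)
Your proposal is correct and takes essentially the same approach as the paper. Both write $Z_{u,v}=\sum_{t}\sum_{j:(u,v)\in E(K_j)}X_{j,t}$, use the uniform-sampling guarantee (Theorem~\ref{theorem:shadow-sample}) to get $\E[X_{j,t}]=1/w(\Ess_k)$, and conclude by linearity of expectation; your unpacking of why each clique is returned with probability exactly $1/w(\Ess_k)$ is a sound elaboration of that cited claim.
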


\begin{proof}
  By its definition, $Z_{u,v} = \sum_{t:1\ldots s}Z_{u,v,t} = \sum_{t:1\ldots s}\sum_{j:(u,v)\in E(K_j)}X_{j,t}$,  where the last equality is because $Z_{u,v,t}=1 \iff \sum_{j:(u,v)\in E(K_j)}X_{j,t} = 1$, i.e., one of the cliques containing $(u,v)$ is sampled.

  Theorem~\ref{theorem:shadow-sample} shows that any clique $K_j$ can be sampled from $\Ess_k$~with probability $\frac{1}{w(\Ess_k)}$ by Sample$(\Ess_k)$, hence $X_{j,t}$ is a Bernoulli variable with $\Pr[X_{j,t} = 1] = \frac{1}{w(\Ess_k)}$ and $\E[X_{j,t}] = \frac{1}{w(\Ess_k)}$.
  By the linearity of expectations, $\E[Z_{u,v}] = \sum_{t:1\ldots s}\sum_{j:(u,v)\in E(K_j)}\E[X_{j,t}] = \sum_{t:1\ldots s}\sum_{j:(u,v)\in E(K_j)}\frac{1}{w(\Ess_k)} = \frac{s}{w(\Ess_k)}K_{u,v}$, where the last equality is because $K_{u,v} = |\{j: (u,v)\in E(K_j)\}|$, and the Lemma follows.
\end{proof}

\begin{lemma}
  \label{lemma:Kuv-whp-bound}
  Given a constant $\theta$ (the approximation factor), and a failure probability $\delta$, if the number of samples $s \geq \frac{3w(\Ess_k)\log{2/\delta}}{\theta^2K_{u,v}}$, then with probability at least $1-\delta$, $Z_{u,v}\frac{w(\Ess_k)}{s} \in (1\pm\theta)K_{u,v}$.
\end{lemma}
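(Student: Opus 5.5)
The plan is to apply a multiplicative Chernoff bound to $Z_{u,v}=\sum_{t=1}^{s} Z_{u,v,t}$. The point to settle first is independence. Within a single iteration $t$, the indicators $Z_{u,v,t}$ for different pairs are dependent, but we only need the bound for one fixed pair $(u,v)$. For that fixed pair, the variables $Z_{u,v,1},\ldots,Z_{u,v,s}$ are independent across $t$, because each iteration calls $\Sample(\Ess_k)$ afresh with independent randomness.

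Next, each $Z_{u,v,t}$ is a Bernoulli variable. It equals $1$ exactly when the sampled $H$ is one of the cliques $K_j$ with $(u,v)\in E(K_j)$, and these events are disjoint because at most one clique is sampled per iteration. By Theorem~\ref{theorem:shadow-sample}, each clique is returned with probability $1/w(\Ess_k)$, so
\[
p := \Pr[Z_{u,v,t}=1] = \sum_{j:(u,v)\in E(K_j)} \frac{1}{w(\Ess_k)} = \frac{K_{u,v}}{w(\Ess_k)} .
\]
It follows that $\mu:=\E[Z_{u,v}] = sK_{u,v}/w(\Ess_k)$, consistent with Lemma~\ref{lemma:Kuv-expectation}.

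We then apply the two-sided multiplicative Chernoff bound for a sum of independent Bernoulli variables, valid for $\theta\in(0,1)$:
\[
\Pr\left[|Z_{u,v}-\mu| \ge \theta\mu\right] \le 2\exp\left(-\theta^2\mu/3\right).
\]
The assumption $s \ge \frac{3w(\Ess_k)\log(2/\delta)}{\theta^2 K_{u,v}}$ is equivalent to $\theta^2\mu/3 \ge \log(2/\delta)$, so the failure probability is at most $\delta$. On the complementary event, multiplying $Z_{u,v}\in(1\pm\theta)\mu$ by the deterministic factor $w(\Ess_k)/s$ gives $Z_{u,v}\frac{w(\Ess_k)}{s}\in(1\pm\theta)K_{u,v}$.

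The step that needs care is justifying that $Z_{u,v,t}$ is exactly Bernoulli with mean $K_{u,v}/w(\Ess_k)$ and is independent across iterations. This uses that $\Sample$ may also return a non-clique, in which case all indicators are $0$, and that the iterations are independent. We also need $\theta<1$ (implicitly a constant approximation factor) so that the lower-tail Chernoff form applies.
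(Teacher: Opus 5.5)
Your proposal is correct and is essentially the paper's proof. The paper also writes $Z_{u,v}$ as a sum over iterations of the Bernoulli indicators $Y_{u,v,t}=\sum_{j:(u,v)\in E(K_j)}X_{j,t}$ (your $Z_{u,v,t}$), which are independent across $t$ with mean $K_{u,v}/w(\Ess_k)$, applies the two-sided Chernoff bound $2\exp(-\theta^2\E[Z_{u,v}]/3)$, and substitutes $s$; your explicit note that $\theta<1$ is needed is a small point the paper leaves implicit.
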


\begin{proof}
  For a fixed $t$, let $Y_{u,v,t} = \sum_{j:(u,v)\in E(K_j)}X_{j,t}$.
  It is clear to see that $Y_{u,v,t} = 1$ if and only if a clique containing $(u,v)$ is sampled at step $t$.
  Across different $t$, $Y_{u,v,t}$s are independent, with $\E[Y_{u,v,t}] = \E[\sum_{j:(u,v)\in E(K_j)}X_{j,t}] = \frac{K_{u,v}}{w(\Ess_k)}$.
  Since $Z_{u,v} = \sum_{t:1\ldots s}\sum_{j:(u,v)\in E(K_j)}X_{j,t} =\sum_{t:1\ldots s}Y_{u,v,t}$, with $Y_{u,v,t}$ being a Bernoulli variable with $E[Y_{u,v,t}] = \frac{K_{u,v}}{w(\Ess_k)}$, using the Chernoff bound, we have:
 \begin{align*}
   \Pr[Z_{u,v} \notin (1\pm\theta) \E[Z_{u,v}]] &\leq 2\exp(-\frac{\theta^2\E[Z_{u,v}]}{3})\text{, }\\
   \Pr[Z_{u,v} \notin (1\pm\theta) K_{u,v}\frac{s}{w(\Ess_k)}] &\overset{(a)}{\leq} 2\exp(-\frac{\theta^2K_{u,v}{s}}{3{w(\Ess_k)}}),\\
   \Pr[Z_{u,v}\frac{w(\Ess_k)}{s} \notin (1\pm\theta) K_{u,v}] &\overset{(b)}{\leq} 2\exp(-\log{2/\delta}), \\
    &\leq \delta, \\
 \end{align*} 
 where $(a)$ is because of Lemma~\ref{lemma:Kuv-expectation}, and $(b)$ is because of substituting $s$,
 and the Lemma follows.
\end{proof}

\begin{corollary}
 \label{cor:clique-point-wise-accurate}
 Let $J = \{(u,v):(u,v)\in E(G)\}$, given a constant factor $\theta$, failure probability $\delta$, \CliqueAlg~is $(\theta,\delta)$-point-wise accurate.
\end{corollary}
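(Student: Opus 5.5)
The corollary follows by matching Lemma~\ref{lemma:Kuv-whp-bound} to Definition~\ref{def:estimator-accurate}. We take the index set $J = \{(u,v) : (u,v)\in E(G)\}$, the statistic $X_{(u,v)} = K_{u,v}$, and the estimator $\Xtil_{(u,v)} = \Ktil_{u,v} = Z_{u,v}\frac{w(\Ess_k)}{s}$ returned by \CliqueAlg. We then choose the scaling factor $W = 3w(\Ess_k)$. This quantity depends only on the Tur\'an shadow $\Ess_k$ (computed once, before sampling) and not on the pair $(u,v)$, so the independence-from-$i$ requirement of Definition~\ref{def:estimator-accurate} holds.

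Next we check the sampling condition. Fix any $(u,v)\in J$ with $K_{u,v}>0$ and suppose
\[
s \geq \frac{W\log(2/\delta)}{\theta^2 K_{u,v}} = \frac{3w(\Ess_k)\log(2/\delta)}{\theta^2 K_{u,v}}.
\]
This is exactly the hypothesis of Lemma~\ref{lemma:Kuv-whp-bound}. That lemma gives $\Pr[\Ktil_{u,v}\notin(1\pm\theta)K_{u,v}]\le\delta$, which is the required accuracy guarantee.

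The lemma's proof rests on two facts. The per-iteration indicators $Y_{u,v,t}$ are independent Bernoulli variables, because each iteration calls $\Sample(\Ess_k)$ afresh. Their mean is $K_{u,v}/w(\Ess_k)$, by the uniform sampling guarantee of Theorem~\ref{theorem:shadow-sample}. Both facts hold for every pair simultaneously, so nothing pair-specific is needed.

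The only point needing care is the degenerate case $K_{u,v}=0$, where the threshold on $s$ is infinite. We handle it directly. Then $Z_{u,v}=0$ deterministically, since no sampled clique contains $(u,v)$. Hence $\Ktil_{u,v}=0\in(1\pm\theta)\cdot 0$, and the accuracy condition holds trivially.

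We expect no step here to be a real obstacle: the argument is a parameter identification. The one subtlety is confirming that $W$ does not depend on $(u,v)$, which holds because $\Ess_k$ is fixed before sampling.
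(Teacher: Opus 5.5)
Your proposal is correct and matches the paper's argument: the paper gives no separate proof, and the corollary is read off from Lemma~\ref{lemma:Kuv-whp-bound} with scaling factor $W = 3w(\Ess_k)$. This is exactly the value passed to Algorithms~\ref{alg:tau-estimation} and~\ref{alg:stat-estimation} in \textsc{Max.PairWise.Clique.Approx}, and your explicit handling of the degenerate case $K_{u,v}=0$ is a small detail the paper leaves unstated.
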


After proving that $\CliqueAlg$ satisfies the properties of a point-wise statistic, we apply the routines stated in the Section~\ref{sec:max-point-wise}.

%NOTE: Explain choosing W = 3w{S}

\begin{algorithm}[ht]
  \caption{\textsc{Max.PairWise.Clique.Approx} \\ 
    \textbf{Input:} $G, k, \theta, \delta$ \\
    \textbf{Output:} An  approximation of $\max_{(u,v\in E(G))} K_{u,v}$}

  \begin{algorithmic}[1]
    \STATE $\Ess_k := \Turan(G, k)$
    \STATE $w(\Ess_k) := \sum_{(P,S,l)\in \Ess_k}{|S|\choose l}$
    \STATE $\tau_t := Algorithm~\ref{alg:tau-estimation}\big($ 

     \hspace{\algorithmicindent}
     \CliqueAlg, 

     \hspace{\algorithmicindent}
     ${\max_{v\in V(G)}\degr{v} \choose k-2},3w(\Ess_k),\delta\big)$

     \STATE \textbf{Return} $Algorithm~\ref{alg:stat-estimation}\big($

     \hspace{\algorithmicindent}
     \CliqueAlg,

     \hspace{\algorithmicindent}
     $3w(\Ess_k), \theta, \delta, \tau_t\big)$
  \end{algorithmic}

  \label{alg:max-pair-clique-approx}
\end{algorithm}

\begin{theorem}
 \label{theorem:max-pair-clique-utility}
 With probability at least $1-4\delta$, \textsc{Max.PairWise.Clique.Approx} returns $\max_{(u,v)\in E(G)}\Ktil_{u,v} \in (1\pm\theta)\Khat$.
\end{theorem}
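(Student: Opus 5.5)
The proof will go by composition: Corollary~\ref{cor:clique-point-wise-accurate} establishes the hypotheses of the generic maximum-of-point-wise-statistics routine of Section~\ref{sec:max-point-wise}, and the two guarantees of that routine (Algorithm~\ref{alg:tau-estimation} for $\tau$, and Theorem~\ref{lemma:stat-estimation} for the final estimate) are then combined by a union bound. The factor $4\delta$ should come from two failure events charged to each of the two subroutines.

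\textbf{Step 1: verify the hypotheses.} By Lemma~\ref{lemma:Kuv-whp-bound}, \CliqueAlg{} is $(\theta,\delta)$-point-wise accurate over $J=E(G)$ with scaling factor $W=3w(\Ess_k)$. This $W$ depends only on the shadow $\Ess_k$, so it is independent of the pair $(u,v)$, as Definition~\ref{def:estimator-accurate} requires. I also need the initial guess passed to Algorithm~\ref{alg:tau-estimation} to be a valid upper bound on the target. Every $k$-clique containing $(u,v)$ is $\{u,v\}$ together with a $(k-2)$-subset of $N(u)$, so
\[
K_{u,v}\le \binom{\degr{u}}{k-2}\le \binom{\max_{v\in V(G)}\degr{v}}{k-2}.
\]
Hence $\Khat$ is at most the initial guess, and the decreasing search over $\tau$ starts above the target.

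\textbf{Step 2: invoke the routines.} I will cite the guarantee of Algorithm~\ref{alg:tau-estimation}: with probability at least $1-2\delta$, it returns $\tau_t$ with $\tau_t\le \Khat$ (up to a constant factor), while $\tau_t$ stays large enough to control the running time. The two failure events here are that the search stops too early or falls too low. Conditioned on this, Theorem~\ref{lemma:stat-estimation} with $\tau=\tau_t$ runs $s\ge \frac{3w(\Ess_k)\log(2/\delta)}{\theta^2\tau_t}\ge \frac{3w(\Ess_k)\log(2/\delta)}{\theta^2\Khat}$ iterations. Two further failure events remain:
\begin{enumerate}
\item the maximizer $(u^*,v^*)$ with $K_{u^*,v^*}=\Khat$ fails to satisfy $\Ktil_{u^*,v^*}\in(1\pm\theta)\Khat$, which has probability at most $\delta$ by point-wise accuracy;
\item some other pair is overestimated, $\Ktil_{u,v}>(1+\theta)\Khat$.
\end{enumerate}
A union bound over all four events gives $\max_{(u,v)}\Ktil_{u,v}\in(1\pm\theta)\Khat$ with probability at least $1-4\delta$. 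The lower bound follows from $\Ktil_{u^*,v^*}\ge(1-\theta)\Khat$, and the upper bound from excluding event 2.

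\textbf{Main obstacle.} The hard part is event 2, uniform control of overestimates for pairs with small $K_{u,v}$. For these pairs, $s$ is below their individual accuracy threshold. A naive union bound over all $m$ edges would cost $m\delta$ rather than $\delta$, so per-pair accuracy cannot simply be summed. I will rely on Theorem~\ref{lemma:stat-estimation} having handled this. If it must be argued directly, the natural route is a one-sided Chernoff bound for each pair against the threshold $(1+\theta)\Khat$, whose deviation $(1+\theta)\Khat/K_{u,v}-1$ grows as $K_{u,v}$ shrinks, giving a tail like $\exp(-\Omega(\theta\, s\Khat/w))$. To keep the total failure probability at $\delta$ rather than $m\delta$, $\log(2/\delta)$ has to absorb a $\log m$ factor, which is consistent with the polylogarithmic factors in Table~\ref{tab:runtime-comparison}.
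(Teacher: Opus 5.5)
Your proof is essentially the paper's: point-wise accuracy from Corollary~\ref{cor:clique-point-wise-accurate}, then Lemma~\ref{lemma:binary-search-tau} for $\tau_t$ and Lemma~\ref{lemma:stat-estimation} for the final estimate, joined by a union bound. Your check that $\binom{\max_v \deg(v)}{k-2}\ge\Khat$ is a detail the paper leaves implicit.

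The only correction is bookkeeping. Lemma~\ref{lemma:binary-search-tau} fails with probability $\delta$, not $2\delta$. Lemma~\ref{lemma:stat-estimation} fails with probability $3\delta$, and it already controls overestimates of low-count pairs through the $\log(2|J|/\delta)$ factor in $s$. So the paper's split is $\delta+3\delta$. Your $2\delta+2\delta$ split would become $5\delta$ if you cite that lemma as a black box, and your separate Chernoff sketch is unnecessary. If you do use that sketch, its uniform tail is $\exp(-\Omega(\theta^2 s\Khat/w(\Ess_k)))$, not linear in $\theta$, which still suffices since $s\propto\theta^{-2}$.
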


\begin{proof}
  Corollary~\ref{cor:clique-point-wise-accurate} states that \textsc{Max.PairWise.Clique.Approx} is $(\theta,\delta)$-point-wise accurate regarding the estimation of $K_{(u,v)}$, therefore we can apply the Point-wise statistic estimation in Section~\ref{sec:point-statistics}. Applying Lemma~\ref{lemma:binary-search-tau}, we have $\tau_t$ in Algorithm~\ref{alg:max-pair-clique-approx} satisfies that $\tau_t \leq \Khat \leq 4\tau_t$ with probability at least $1-\delta$. Applying Lemma~\ref{lemma:stat-estimation}, with $\tau_t$ satisfying the requirements of the Algorithm~\ref{alg:stat-estimation}, the output of $Algorithm~\ref{alg:stat-estimation}$ with the first parameter of the input being $\CliqueAlg$ and the rest of the input being $3w(\Ess_k), \theta, \delta, \tau_t$ is $\in (1\pm\theta)\Khat$ with probability at least $1-3\delta$. Taking the union bound on the failure probabilities of Algorithm~\ref{alg:tau-estimation} and Algorithm~\ref{alg:stat-estimation}, the Theorem follows.
\end{proof}

\subsubsection{Maximum of pair-wise near-clique count.}
\label{sec:pair-near-clique}

A near-clique missing $(u,v), (u,v) \notin E(G)$ is a subgraph $H$ induced from $G$, such that $|V(H)| = k$, $u, v\in V(H)$, and the hypothetical graph $H': V(H') = V(H), E(H') = E(H) + (u,v)$ is a $k$-clique.
For near-clique missing an edge $(u,v)$, we need to modify the method.
Each $k$-near-clique missing $(u,v)$ contains exactly two complement $k-1$-cliques sharing $k-2$ common vertices, one contains $u$ and does not contain $v$, while the other contains $v$ and not $u$.
Therefore, the main idea is that if we find one of the $k-1$-clique, we can find the other and subsequently, construct the near-clique containing them.
Suppose we detect a $k-1$-clique $K_j$, and are finding its complement $K_i$s (they may be multiple of them).
We know that $\exists v: \{v\} = K_i \setminus K_j$.
We therefore search for $v$ among all vertices adjacent to $K_i$, checking if they are connected to $k-2$ vertices in $K_j$.
To improve efficiency, we can limit the search space to the union of neighbors of two lowest degree nodes in $K_j$, since such $v$ must be connected to at least one of them.
For a specific $k$-near-clique, the expectation it is sampled each iteration is the sum of expectations of its complement $k-1$-cliques, which is twice the expectation of an arbitrary $k-1$-clique.

\medskip
Let $\bar{C}^{k}_{u,v}$ be the set of near-cliques missing $(u,v)$.
Let $\Kone_{u,v}$ denote the count of near-cliques missing $(u,v)$, i.e., $\Kone_{u,v: (u,v)\notin E(G)} = |\bar{C}^{k}_{u,v}|$, and $\Konehat = \max_{u,v: (u,v)\notin E(G)}\Kone_{u,v}$ be the maximum count of near-cliques missing the same edge.
We present below the algorithm to estimate $\Konehat$.
We calculate the estimation for all $u < v, (u,v) \notin E(G)$, as the same statistic for $v < u$ can be easily inferred.
Let $N^*(H)$ denote the set containing all nodes adjacent to the subgraph $H$.
The unique near-clique identifier $\Kone_j$ and its variables $X_{j,t}$ are for the purposes of analyzing the utility, the actual algorithm does not need to implement these.

\begin{algorithm}[ht]
  \caption{\textsc{Unbounded.PairWise.NearClique.Approx} \\ 
    \textbf{Input:} $G, s$ and Turan shadow $\Ess_{k-1}$ with clique size $k-1$ \\
    \textbf{Output:} An  approximation of $ \Kone_{u,v}: \forall (u,v)\notin E(G)$}

  \begin{algorithmic}[1]
    \STATE $w(\Ess_{k-1}) := \sum_{(P, S, l)\in \Ess_{k-1}}{|S| \choose l}$
    \STATE $\forall (u,v)\notin E(G): Z_{u,v} := 0$
    \FOR{$t := 1,\ldots, s$}
    \STATE $\forall j: X_{j,t} := 0$
    \STATE $\forall (u,v)\notin E(G): Z_{u,v,t} := 0$
    \STATE $H := \Sample(\Ess_{k-1})$ (Algorithm~\ref{alg:shadow-sample})
    \IF{$\exists j: H = K^{k-1}_j$ (H is a clique of size $k-1$)}
    \STATE $X_{j,t} := 1; \forall j'\neq j: X_{j',t} := 0$ %\text{Note: For the purpose of analysis only.}
    \FOR{$u \in N^*(H)$ (each neighbor of $H$)}
    \IF{$\exists v\in H: (u,v)\notin E(G)$ and $\forall w\in H, w\neq v: (u, w)\in E(G)$ and $u < v$}
    \STATE $Z_{u,v,t} := 1$
    \ENDIF
    \ENDFOR
    \ENDIF
    \STATE $\forall u<v, (u,v) \notin E(G): Z_{u,v} := Z_{u,v}+ Z_{u,v,t}$
    \ENDFOR
    \STATE $\forall u<v, (u,v) \notin E(G): \Konetil_{u,v} := Z_{u,v}\frac{w(\Ess_{k-1})}{s}$ 
    \STATE \textbf{Return }  $\Konetil$
  \end{algorithmic}

  \label{alg:pair-nearclique-approx}
\end{algorithm}

\begin{lemma}
  \label{lemma:K1uv-expectation}
  For every $u< v, (u,v)\notin E(G)$, $\E[Z_{u,v}\frac{w(\Ess_{k-1})}{s}] = \Kone_{u,v}$.
\end{lemma}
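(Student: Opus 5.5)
The approach mirrors the proof of Lemma~\ref{lemma:Kuv-expectation}. The estimator is written as a sum of per-iteration indicators, each indicator is expressed through the sampling indicators $X_{j,t}$ of the $(k-1)$-cliques, and then Theorem~\ref{theorem:shadow-sample} is applied: every $(k-1)$-clique is returned by $\Sample(\Ess_{k-1})$ with probability exactly $1/w(\Ess_{k-1})$.

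\textbf{Step 1: each near-clique is hit by exactly two $(k-1)$-cliques.} Fix $u<v$ with $(u,v)\notin E(G)$, and fix a near-clique $H\in\bar{C}^k_{u,v}$. Then $V(H)\setminus\{v\}$ and $V(H)\setminus\{u\}$ are $(k-1)$-cliques of $G$, since all pairs other than $(u,v)$ are edges. No other $(k-1)$-subset of $V(H)$ is a clique, because any other such subset contains both $u$ and $v$.

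Conversely, suppose a $(k-1)$-clique $K^{k-1}_j$ is sampled. Line 10 sets $Z_{u,v,t}=1$ exactly when one endpoint of the pair lies in $K^{k-1}_j$, the other lies in $N^*(K^{k-1}_j)$, and the outside endpoint is adjacent to every vertex of $K^{k-1}_j$ except the inside endpoint. In that case $K^{k-1}_j$ together with the outside endpoint is a near-clique missing $(u,v)$.

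The per-iteration indicator should then satisfy
\[
Z_{u,v,t}=\sum_{H\in\bar{C}^k_{u,v}}\ \sum_{j:\,K^{k-1}_j\subset V(H)} X_{j,t}.
\]
This identity is valid because at most one $X_{j,t}$ equals $1$ in each iteration. When the sampled clique lies inside some near-clique $H$, its vertex set together with the added outside vertex determines $H$ uniquely. Hence the double sum is $0$ or $1$, and it equals $1$ exactly when the algorithm sets $Z_{u,v,t}=1$.

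\textbf{Step 2: linearity of expectation.} Each $X_{j,t}$ is Bernoulli with mean $1/w(\Ess_{k-1})$, so
\[
\E[Z_{u,v}]=\sum_{t=1}^{s}\sum_{H\in\bar{C}^k_{u,v}}\frac{2}{w(\Ess_{k-1})}=\frac{2s\,\Kone_{u,v}}{w(\Ess_{k-1})}.
\]

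\textbf{Main obstacle: the factor of two.} The calculation above gives $2\Kone_{u,v}$, which conflicts with the stated value $\Kone_{u,v}$. The paper's own discussion says the hit probability is ``twice the expectation of an arbitrary $(k-1)$-clique.'' The ordering constraint $u<v$ in line 10 is what must reconcile this. I expect the intended reading is that the condition on line 10 fires for only one of the two complementary $(k-1)$-cliques. The natural convention is that it fires only when the vertex found in $N^*(H)$ plays the role of the smaller endpoint $u$, so that the clique containing $v$ but not $u$ is the one that counts.

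Under that reading the inner sum in Step 1 has a single term, namely $X_{j,t}$ for $K^{k-1}_j=V(H)\setminus\{u\}$. Step 2 then gives exactly $s\Kone_{u,v}/w(\Ess_{k-1})$. Multiplying by $w(\Ess_{k-1})/s$ finishes the proof.

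So the proof hinges on a careful case check of line 10: I must show that each near-clique is credited through exactly one of its two $(k-1)$-cliques. If the line is instead read symmetrically, the estimator must be halved, which means replacing the scaling $w(\Ess_{k-1})/s$ by $w(\Ess_{k-1})/(2s)$. I would state this convention explicitly in the proof.
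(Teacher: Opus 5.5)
Your final argument is correct and is the same as the paper's. In line 10 the vertex ranging over $N^*$ of the sampled clique is the outside endpoint and must be the smaller one, so each near-clique $H\in\bar{C}^k_{u,v}$ is credited only through the $(k-1)$-clique $V(H)\setminus\{u\}$; this is exactly the paper's sum over those $j$ with $v\in K^{k-1}_j$, $u\notin K^{k-1}_j$, followed by linearity of expectation and Theorem~\ref{theorem:shadow-sample}. That asymmetric reading is what line 10 literally says, so you should drop the symmetric identity in your Step 1 (it ignores the $u<v$ test) and the fallback of halving the scaling.
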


\begin{proof}
  By its definition, we have
  \begin{align*}
    Z_{u,v} =& \sum_{t:1\ldots s}Z_{u,v,t}\\
            =& \sum_{t:1\ldots s}
              \sum_{\substack{j:\exists i: \{u\} = V(K^{k-1}_i) \setminus V(K^{k-1}_j), \\ \{v\} = V(K^{k-1}_j) \setminus V(K^{k-1}_i)}} X_{j,t},
  \end{align*}
  %,  where the last inequality is because $Z_{u,v,t}=1 \iff \sum_{j:(u,v)\in E(K_j)}X_{j,t} = 1$, i.e., one of the clique containing $(u,v)$ is sampled.
  where the last equality is $Z_{u,v,t} = 1$ if and only if at time $t$, there exists a $k-1$-clique $K^{k-1}_j$ is sampled such that $v\in K^{k-1}_j$, and $u$ belongs to a clique $K^{k-1}_i$ (that is not sampled at $t$), such that they share all vertices, but mutually exclusive contain $u$ and $v$.
  In other words, $K^{k-1}_j \cup K^{k-1}_i$ forms a near-clique missing $(u,v)$. 
  It is clear that each unique pair of $K^{k-1}_j \cup K^{k-1}_i$ forms a unique near-clique missing $(u,v)$.
  Also, given a fixed $K^{k-1}_j$ sampled at step $t$, and a fixed $(u,v), u < v$, there exists at most one $K^{k-1}_i$ satisfying the condition, a one-to-one mapping between all pairs $i,j$ to all near-cliques missing $(u,v), u < v$.
  It is because each near-clique $K^{-1}$ missing $(u,v)$ corresponds to a unique unordered pair $K^{k-1}_j$  and $K^{k-1}_i$ that $u\in K^{k-1}_j$ and $v \in K^{k-1}_i$, as well as $v \notin K^{k-1}_j$ and $u \notin K^{k-1}_i$.

  Theorem~\ref{theorem:shadow-sample} shows that any clique $K^{k-1}_j$ can be sampled from $\Ess_{k-1}$~with probability $\frac{1}{w(\Ess_{k-1})}$ by Sample$(\Ess_{k-1})$, hence $X_{j,t}$ is a Bernoulli variable with $\Pr[X_{j,t} = 1] = \frac{1}{w(\Ess_{k-1})}$ and $\E[X_{j,t}] = \frac{1}{w(\Ess_{k-1})}$.
  By the linearity of expectations,

  \begin{align*}
    \E[Z_{u,v}] &= \sum_{t:1\ldots s} \sum_{\substack{j:\exists i: \{u\} = V(K^{k-1}_i) \setminus V(K^{k-1}_j), \\ \{v\} = V(K^{k-1}_j) \setminus V(K^{k-1}_i)}}\E[X_{j,t}] \\
                &= \sum_{t:1\ldots s}\sum_{\substack{j:\exists i: \{u\} = V(K^{k-1}_i) \setminus V(K^{k-1}_j),  \\ \{v\} = V(K^{k-1}_j) \setminus V(K^{k-1}_i)}}\frac{1}{w(\Ess_{k-1})} \\
                &= \frac{s}{w(\Ess_{k-1})}K^{-1}_{u,v},
  \end{align*}

where the last equality is because the set of $k-1$-clique pairs satisfying \[{j:\exists i: u = V(K^{k-1}_i) \setminus V(K^{k-1}_j), v = V(K^{k-1}_j) \setminus V(K^{k-1}_i)}\] also defines all the near-clique missing $(u,v)$, and the Lemma follows.
\end{proof}

\begin{lemma}
  \label{lemma:Koneuv-whp-bound}
  Given a constant $\theta$ (the approximation factor), and a failure probability $\delta$, if the number of samples $s \geq \frac{3w(\Ess_{k-1})\log{2/\delta}}{\theta^2\Kone_{u,v}}$, then with probability at least $1-\delta$, $Z_{u,v}\frac{w(\Ess_{k-1})}{s} \in (1\pm\theta)\Kone_{u,v}$, i.e., $\Konetil_{u,v} \in (1\pm\theta)\Kone_{u,v}$.
\end{lemma}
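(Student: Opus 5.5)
The plan is to mirror the proof of Lemma~\ref{lemma:Kuv-whp-bound}, replacing the edge-wise clique indicators by near-clique indicators and using Lemma~\ref{lemma:K1uv-expectation} for the expectation. Fix a non-edge pair $u<v$. For each iteration $t$ define
\[
Y_{u,v,t} = \sum_{\substack{j:\exists i: \{u\} = V(K^{k-1}_i) \setminus V(K^{k-1}_j), \\ \{v\} = V(K^{k-1}_j) \setminus V(K^{k-1}_i)}} X_{j,t},
\]
the number of ``good'' $(k-1)$-cliques sampled at step $t$, i.e., those that are one half of a near-clique missing $(u,v)$. I will then argue that $Y_{u,v,t}\in\{0,1\}$ and $Z_{u,v,t}=Y_{u,v,t}$.

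First, $Y_{u,v,t}\in\{0,1\}$ because at most one clique is sampled per iteration, so at most one $X_{j,t}$ equals $1$. Second, the algorithm sets $Z_{u,v,t}=1$ exactly when the sampled clique $H$ has a vertex $v$ with $u$ adjacent to all of $H\setminus\{v\}$ and not to $v$. This is precisely the statement that $H$ is a good clique, so $Z_{u,v,t}=Y_{u,v,t}$.

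The iterations call $\Sample(\Ess_{k-1})$ independently, so the $Y_{u,v,t}$ are i.i.d.\ Bernoulli variables across $t$. Their mean is $\E[Y_{u,v,t}] = \Kone_{u,v}/w(\Ess_{k-1})$, which is the per-iteration content of Lemma~\ref{lemma:K1uv-expectation}. It relies on the one-to-one correspondence established there between good sampled cliques and near-cliques missing $(u,v)$, together with Theorem~\ref{theorem:shadow-sample}, which gives uniform sampling probability $1/w(\Ess_{k-1})$.

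Then $Z_{u,v}=\sum_t Y_{u,v,t}$ has mean $s\Kone_{u,v}/w(\Ess_{k-1})$. The multiplicative Chernoff bound gives
\[
\Pr\bigl[Z_{u,v}\notin(1\pm\theta)\E[Z_{u,v}]\bigr]\le 2\exp\bigl(-\theta^2 s\Kone_{u,v}/(3w(\Ess_{k-1}))\bigr).
\]
Substituting $s \ge 3w(\Ess_{k-1})\log(2/\delta)/(\theta^2\Kone_{u,v})$ makes the right-hand side at most $\delta$. Rescaling by $w(\Ess_{k-1})/s$ then yields $\Konetil_{u,v}\in(1\pm\theta)\Kone_{u,v}$. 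The Chernoff form used needs $\theta\le 1$; I will assume this, as the earlier lemma implicitly does.

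The main obstacle is the exactness of the mean, i.e., the counting correspondence. A near-clique missing $(u,v)$ consists of two $(k-1)$-cliques, one containing $u$ and one containing $v$. If both were counted as good samples, the mean would be $2\Kone_{u,v}/w$, not $\Kone_{u,v}/w$. I will therefore use the orientation fixed in Lemma~\ref{lemma:K1uv-expectation} (and the $u<v$ convention in the algorithm): only the clique containing $v$ and not $u$ is counted. Under that orientation each near-clique contributes exactly one good clique. The algorithm's indicator also never exceeds $1$ per iteration even when several near-cliques share the sampled clique, and this is consistent: a fixed sampled clique determines the near-clique missing $(u,v)$ uniquely, since it must be $H\cup\{u\}$.

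Given Lemma~\ref{lemma:K1uv-expectation}, the remaining steps are routine and follow Lemma~\ref{lemma:Kuv-whp-bound} line by line.
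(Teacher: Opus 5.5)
Your proposal is correct and follows essentially the same route as the paper's proof: a per-iteration indicator $Y_{u,v,t}$ that is independent across iterations, its mean $\Kone_{u,v}/w(\Ess_{k-1})$ obtained from Lemma~\ref{lemma:K1uv-expectation}, then the multiplicative Chernoff bound with $s$ substituted. Your explicit checks are details the paper leaves implicit (and which its surrounding prose about ``twice the expectation'' muddles): that $Z_{u,v,t}=Y_{u,v,t}$, that the $u<v$ convention counts only the sampled clique containing $v$ (so the mean is not doubled), and that the Chernoff form needs $\theta\le 1$.
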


\begin{proof}
  For a fixed $t$, let $Y_{u,v,t} = \sum_{j:\exists i: u = V(K^{k-1}_i) \setminus V(K^{k-1}_j), v = V(K^{k-1}_j) \setminus V(K^{k-1}_i)}X_{j,t}$.
  It is clear to see that $Y_{u,v,t} = 1$ if and only if a near clique missing  $(u,v)$ is sampled at step $t$.
  Across different $t$, $Y_{u,v,t}$s are independent, with $\E[Y_{u,v,t}] = \E[\sum_{j:\exists i: u = V(K^{k-1}_i) \setminus V(K^{k-1}_j), v = V(K^{k-1}_j) \setminus V(K^{k-1}_i)}X_{j,t}] = \frac{\Kone_{u,v}}{w(\Ess_{k-1})}$.
  Since $Z_{u,v} = \sum_{t:1\ldots s}\sum_{j:\exists i: u = V(K^{k-1}_i) \setminus V(K^{k-1}_j), v = V(K^{k-1}_j) \setminus V(K^{k-1}_i)}X_{j,t}=\sum_{t:1\ldots s}Y_{u,v,t}$, with $Y_{u,v,t}$ being a Bernoulli variable with $E[Y_{u,v,t}] = \frac{\Kone_{u,v}}{w(\Ess_{k-'})}$, using the Chernoff bound, we have:
 \begin{align*}
   \Pr[Z_{u,v} \notin (1\pm\theta) \E[Z_{u,v}]] &\leq 2\exp(-\frac{\theta^2\E[Z_{u,v}]}{3})\text{, }\\
   \Pr[Z_{u,v} \notin (1\pm\theta) \Kone_{u,v}\frac{s}{w(\Ess_{k-1})}] &\overset{(a)}\leq 2\exp(-\frac{\theta^2\Kone_{u,v}{s}}{3{w(\Ess_{k-1})}}), \\
   \Pr[Z_{u,v}\frac{w(\Ess_{k-1})}{s} \notin (1\pm\theta) \Kone_{u,v}] &\overset{(b)}\leq 2\exp(-\log{2/\delta}), \\
    &\leq \delta, \\
 \end{align*} 
 where $(a)$ is because of Lemma~\ref{lemma:K1uv-expectation}, and $(b)$ is because of substituting $s$,
 and the Lemma follows.
\end{proof}

\begin{corollary}
 \label{cor:near-clique-point-wise-accurate}
 Let $J = \{(u,v):(u,v)\notin E(G)\}$, given a constant factor $\theta$, failure probability $\delta$, \NearCliqueAlg~is $(\theta,\delta)$-point-wise accurate.
\end{corollary}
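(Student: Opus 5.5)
The plan is to verify Definition~\ref{def:estimator-accurate} directly, with index set $J=\{(u,v):(u,v)\notin E(G)\}$ (only pairs with $u<v$, since the $v<u$ case is symmetric) and scaling factor $W=3w(\Ess_{k-1})$. This mirrors how Corollary~\ref{cor:clique-point-wise-accurate} follows from Lemma~\ref{lemma:Kuv-whp-bound}. The key observation is that $W$ depends only on the Tur\'an shadow $\Ess_{k-1}$ computed once from $G$, and not on the pair $(u,v)$. This is exactly the independence from $i$ that the definition requires.

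Fix an arbitrary pair $(u,v)\in J$ and suppose the number of sampling iterations satisfies
\[
s\ \geq\ \frac{W\log(2/\delta)}{\theta^2\Kone_{u,v}}=\frac{3w(\Ess_{k-1})\log(2/\delta)}{\theta^2\Kone_{u,v}}.
\]
This is precisely the hypothesis of Lemma~\ref{lemma:Koneuv-whp-bound}. That lemma then gives that, with probability at least $1-\delta$, the output $\Konetil_{u,v}=Z_{u,v}\,w(\Ess_{k-1})/s$ of \NearCliqueAlg{} lies in $(1\pm\theta)\Kone_{u,v}$. Since $(u,v)$ was arbitrary, this is point-wise accuracy in the sense of Definition~\ref{def:estimator-accurate}.

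The main point to check is not a calculation but a compatibility issue. The definition asks for a per-index guarantee with failure probability $\delta$, not a simultaneous one, so no union bound over $J$ is needed here. Pairs with $\Kone_{u,v}=0$ have a vacuous (infinite) sample requirement, but $Z_{u,v}=0$ deterministically for them, because $Z_{u,v,t}$ is set to $1$ only when a near-clique missing $(u,v)$ exists. So the estimator is trivially exact on those pairs. Finally, the correctness of the expectation in Lemma~\ref{lemma:K1uv-expectation}, which relies on the one-to-one correspondence between sampled $(k-1)$-cliques and near-cliques, is inherited rather than reproved.
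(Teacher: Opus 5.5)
Your proposal is correct and matches the paper. The corollary is exactly Lemma~\ref{lemma:Koneuv-whp-bound} read through Definition~\ref{def:estimator-accurate} with scaling factor $W = 3w(\Ess_{k-1})$, which is the value the paper passes in Algorithm~\ref{alg:max-pair-nearclique-approx}. Your remark that pairs with $\Kone_{u,v}=0$ give $Z_{u,v}=0$ deterministically is a small point the paper leaves implicit.
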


After proving that $\NearCliqueAlg$ satisfies the properties of a point-wise statistic, we apply the routines stated in the Section~\ref{sec:max-point-wise}.

\begin{algorithm}[ht]
  \caption{\textsc{Max.PairWise.NearClique.Approx} \\ 
    \textbf{Input:} $G, k, \theta, \delta$ \\
    \textbf{Output:} An  approximation of $\max_{(u,v\notin E(G))} \Kone_{u,v}$}

  \begin{algorithmic}[1]
    \STATE $\Ess_{k-1} := \Turan(G, k-1)$
    \STATE $w(\Ess_{k-1}) := \sum_{(P,S,l)\in \Ess_{k-1}}{|S|\choose l}$
    \STATE $\tau_t := Algorithm~\ref{alg:tau-estimation}\big($ 
    
     \hspace{\algorithmicindent}
     $\NearCliqueAlg,$

     \hspace{\algorithmicindent}
     ${\max_{v\in V(G)}\degr{v} \choose k-2},3w(\Ess_{k-1}),\delta\big)$

    \STATE \textbf{Return} $Algorithm~\ref{alg:stat-estimation}\big($ 

     \hspace{\algorithmicindent}
     \NearCliqueAlg,

     \hspace{\algorithmicindent}$3w(\Ess_{k-1}), \theta, \delta, \tau_t\big)$
  \end{algorithmic}

  \label{alg:max-pair-nearclique-approx}
\end{algorithm}

\begin{theorem}
 \label{theorem:max-pair-nearclique-utility}
 With probability at least $1-4\delta$, \textsc{Max.PairWise.NearClique.Approx} returns

 $\max_{(u,v)\notin E(G)}\Konetil_{u,v} \in (1\pm\theta)\Konehat$.
\end{theorem}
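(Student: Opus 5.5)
The proof mirrors that of Theorem~\ref{theorem:max-pair-clique-utility}, with Corollary~\ref{cor:near-clique-point-wise-accurate} in place of Corollary~\ref{cor:clique-point-wise-accurate}. By Lemma~\ref{lemma:Koneuv-whp-bound}, \NearCliqueAlg~is $(\theta,\delta)$-point-wise accurate on the index set $J=\{(u,v): u<v,\ (u,v)\notin E(G)\}$ with scaling factor $W = 3w(\Ess_{k-1})$. This $W$ does not depend on the pair $(u,v)$, as Definition~\ref{def:estimator-accurate} requires. This is precisely the scaling factor passed to Algorithm~\ref{alg:tau-estimation} and Algorithm~\ref{alg:stat-estimation} in \textsc{Max.PairWise.NearClique.Approx}, so the generic maximum-of-point-wise-statistics routine of Section~\ref{sec:max-point-wise} applies.

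\textbf{Step 1: the threshold $\tau_t$.} I would apply Lemma~\ref{lemma:binary-search-tau} to the $\tau$-estimation call. For this I must check that the initial guess ${\max_{v}\degr{v} \choose k-2}$ upper bounds $\Konehat$. Every near-clique missing $(u,v)$ consists of $u$, $v$ and $k-2$ common neighbours of $u$ and $v$. Hence $\Kone_{u,v}\le {a_{uv}\choose k-2}\le {\deg(u)\choose k-2}$, and the initial value is a valid starting point. The lemma then gives $\tau_t\le \Konehat\le 4\tau_t$ with probability at least $1-\delta$.

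\textbf{Step 2: the final estimate.} Conditioned on that event, I would apply Lemma~\ref{lemma:stat-estimation} with inputs $3w(\Ess_{k-1}),\theta,\delta,\tau_t$. It yields that the returned value $\max_{(u,v)\notin E(G)}\Konetil_{u,v}$ lies in $(1\pm\theta)\Konehat$ with probability at least $1-3\delta$. A union bound over the two stages gives total failure probability at most $4\delta$.

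\textbf{Main obstacle.} The delicate point is confirming that the point-wise accuracy hypothesis holds in the form the generic lemmas need. First, the per-iteration indicators $Y_{u,v,t}$ must be Bernoulli and independent across $t$. Second, the estimator must count each near-clique via exactly the right $(k-1)$-cliques, so that the expectation is exactly $\Kone_{u,v}$ (Lemma~\ref{lemma:K1uv-expectation}) with the ordering $u<v$ handled consistently. I take Lemmas~\ref{lemma:K1uv-expectation} and~\ref{lemma:Koneuv-whp-bound} as given. The remaining care is to ensure the maximum is taken only over non-edges, so that the index set matches $\Konehat$. I would also note that pairs with $\Kone_{u,v}=0$ always give estimate $0$ and cannot exceed the maximum.
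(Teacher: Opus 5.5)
Your proposal is correct and matches the paper's proof: point-wise accuracy from Corollary~\ref{cor:near-clique-point-wise-accurate} with $W=3w(\Ess_{k-1})$, then Lemma~\ref{lemma:binary-search-tau} for $\tau_t$ (failure $\delta$), then Lemma~\ref{lemma:stat-estimation} for the final estimate (failure $3\delta$), and a union bound giving $4\delta$. Your extra check that $\binom{\max_v \deg(v)}{k-2}\ge \Konehat$ fills a detail the paper leaves implicit: if Algorithm~\ref{alg:tau-estimation} returns at its first iteration, the bound $\Konehat<4\tau_t$ can only come from the initial guess dominating the maximum.
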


\begin{proof}
  Corollary~\ref{cor:near-clique-point-wise-accurate} states that \NearCliqueAlg~is $(\theta,\delta)$-point-wise accurate regarding the estimation of $\Kone_{(u,v)}$, therefore we can apply the Point-wise statistic estimation in Section~\ref{sec:point-statistics}.
  Applying Lemma~\ref{lemma:binary-search-tau}, we have $\tau_t$ in Algorithm~\ref{alg:max-pair-nearclique-approx} satisfies that $\tau_t \leq \Konehat \leq 4\tau_t$ with probability at least $1-\delta$.
  Applying Lemma~\ref{lemma:stat-estimation}, with $\tau_t$ satisfying the requirements of the Algorithm~\ref{alg:stat-estimation}, the output of $Algorithm~\ref{alg:stat-estimation}$ with the first input parameter being $\NearCliqueAlg$ and the rest of the input being $3w(\Ess_{k-1}), \theta, \delta, \tau_t$ is $\in (1\pm\theta)\Konehat$ with probability at least $1-3\delta$.
  Taking the union bound on the failure probabilities of Algorithm~\ref{alg:tau-estimation} and Algorithm~\ref{alg:stat-estimation}, the Theorem follows.
\end{proof}

\textbf{Combining together.}
Algorithm~\textsc{Max.Clique.Pair} combines all these steps for estimating $LS_\kcl(G)$.

\begin{algorithm}[ht]
  \caption{\textsc{Max.Clique.Pair} \\ 
    \textbf{Input:} $G, h, \gamma, \delta_\kcl$ \\
    \textbf{Output:} An $(\gamma, \delta_\kcl )$-upper approximation of $LS_\kcl(G)$}

  \begin{algorithmic}[1]
    \STATE Let $\theta = \min(\frac{e^\gamma-1}{e^\gamma+1}, 1/2)$ 
    %\STATE Let $\delta = \delta_\kcl / 8$
\STATE Let $\delta = \delta_\kcl / 8$ \COMMENT{$8 = 4 + 4$, the failure-probability multipliers from Theorems~\ref{theorem:max-pair-clique-utility} and~\ref{theorem:max-pair-nearclique-utility}}

  \STATE $\widetilde{K}_{\max} := 1/(1-\theta) \times$
  \STATE \hspace{\algorithmicindent}
  $\textsc{Max.PairWise.Clique.Approx}(G,k,\theta,\delta)$
  \COMMENT{(Algorithm~\ref{alg:max-pair-clique-approx})}
  \STATE $\Konetil_{\max} := 1/(1-\theta) \times$
  \STATE \hspace{\algorithmicindent}
    $\textsc{Max.PairWise.NearClique.Approx}(G,k,\theta,\delta)$
  \COMMENT{(Algorithm~\ref{alg:max-pair-nearclique-approx})}
    \STATE \textbf{Return} $\max(\Ktil_{\max}, \Konetil_{\max})$
  \end{algorithmic}

  \label{alg:max-pair-approx}
\end{algorithm} 

\begin{restatable}[]{lemma}{maxcliqueapprox}
With probability at least $1-\delta_\kcl$, \textsc{Max.Clique.Pair} outputs a $(\gamma, \delta_\kcl)$-upper approximation of $LS_\kcl(G)$.
  \label{lemma:max-clique-approx}
\end{restatable}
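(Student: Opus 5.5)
We combine Theorems~\ref{theorem:max-pair-clique-utility} and~\ref{theorem:max-pair-nearclique-utility} with a union bound, then check the deterministic algebra on the rescaling factor $1/(1-\theta)$.

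\textbf{Step 1 (good event).} Both subroutines are called with $\delta=\delta_\kcl/8$. By Theorem~\ref{theorem:max-pair-clique-utility}, with probability at least $1-4\delta=1-\delta_\kcl/2$ the raw clique output $X$ satisfies $X\in(1\pm\theta)\Khat$. By Theorem~\ref{theorem:max-pair-nearclique-utility}, with probability at least $1-\delta_\kcl/2$ the raw near-clique output $Y$ satisfies $Y\in(1\pm\theta)\Konehat$. A union bound makes both hold simultaneously with probability at least $1-\delta_\kcl$. The rest of the argument is deterministic on this event.

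\textbf{Step 2 (each term is an upper approximation).} Multiplying by $1/(1-\theta)$ gives
\[
\Khat\le \Ktil_{\max}=\frac{X}{1-\theta}\le \frac{1+\theta}{1-\theta}\Khat,
\]
and the same sandwich holds for $\Konetil_{\max}$ with $\Konehat$ in place of $\Khat$. Since $\theta\le\frac{e^\gamma-1}{e^\gamma+1}$, we get $\frac{1+\theta}{1-\theta}\le e^\gamma$, because $\frac{1+\theta}{1-\theta}$ is increasing in $\theta$ and equals $e^\gamma$ exactly at $\theta=\frac{e^\gamma-1}{e^\gamma+1}$. The cap $\theta\le 1/2$ only decreases $\theta$, so the bound still holds, and it also keeps $1-\theta>0$.

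\textbf{Step 3 (take the maximum).} The map $\max(\cdot,\cdot)$ preserves both sides of the sandwich. Hence
\[
\max(\Khat,\Konehat)\le\max(\Ktil_{\max},\Konetil_{\max})\le e^\gamma\max(\Khat,\Konehat).
\]
We then use the identity $LS_\kcl(G)=\max(\Khat,\Konehat)$ from the start of Section~\ref{sec:direct}. Flipping an existing edge $(u,v)$ destroys exactly $K_{u,v}$ cliques, and adding a missing edge creates exactly $\Kone_{u,v}$ cliques, so $LS_\kcl(G)$ is the larger of the two maxima.

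\textbf{Main obstacle: degenerate cases.} The delicate point is when $\Khat=0$ or $\Konehat=0$ (there are no edges or non-edges, or no cliques at all). In that case the point-wise accuracy statements, whose sample bounds divide by $X_i$, and the $\tau$-search do not directly apply. I would handle this by arguing that the estimator is then identically $0$: $Z_{u,v}$ only counts sampled cliques or near-cliques that actually exist. So the output is exactly $0$, and the multiplicative sandwich holds trivially. With this case covered, the guarantee holds with probability at least $1-\delta_\kcl$, which is the claim.
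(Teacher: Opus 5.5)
Your proof is correct and follows the same route as the paper. You take a union bound over Theorems~\ref{theorem:max-pair-clique-utility} and~\ref{theorem:max-pair-nearclique-utility} with $\delta=\delta_\kcl/8$, then rescale by $1/(1-\theta)$ and use $\frac{1+\theta}{1-\theta}\le e^\gamma$, while spelling out steps the paper leaves implicit: carrying the bound through the maximum via $LS_\kcl(G)=\max(\Khat,\Konehat)$, and the zero-count edge case.
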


\begin{proof}
  Theorem~\ref{theorem:max-pair-clique-utility} shows that Algorithm~\ref{alg:max-pair-clique-approx} outputs $\max_{u,v}\Ktil_{uv}\in (1\pm\theta)\Khat$ with probability at least $1-4\delta$.
  Theorem~\ref{theorem:max-pair-nearclique-utility} shows that Algorithm~\ref{alg:max-pair-nearclique-approx} outputs $\max_{u,v}\Konetil_{uv}\in (1\pm\theta)\Konehat$ with probability at least $1-4\delta$.
  By the union bound over the failures of Theorem~\ref{theorem:max-pair-clique-utility} and~\ref{theorem:max-pair-nearclique-utility}, both hold simultaneously with probability at least $1-4\delta - 4\delta = 1-8\delta = 1-\delta_\kcl$.
  Multiplying both with $1/(1-\theta)$ guarantee their estimations are larger than or equal to their true values, but not larger than $\frac{1+\theta}{1-\theta} \leq e^\gamma$ times their true values, and the Lemma follows.
\end{proof}

\begin{restatable}[]{theorem}{totaltimedirect}
\label{theorem:total-time-direct}
Let $\Ess_{k}, \Ess_{k-1}$ be the \Turan~computed from the input graph $G$ with clique size $k$ and $k-1$ respectively, and $w$ be a function defined in Algorithm~\ref{alg:max-pair-clique-approx}.
For small $\eps\lessapprox 1$,
the expected running time of
 $\tilde{S}_{\beta}(G)$ via the direct estimation of $\max_{u,v}f_{(k-2)\mathbb{C}}(G(A_{uv}))$ is: 
 \begin{align*}
O\Big(\frac{w(\Ess_{k})\log{m}\log^2{1/\delta}}{\eps^2\Khat} + \frac{w(\Ess_{k-1})\log{n}\log^2{1/\delta}}{\eps^2\Konehat} +\\ n\mathcal{D}(G)^{k} + m + n \Big).
 \end{align*}
\end{restatable}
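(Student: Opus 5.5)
The running time splits along the steps of Algorithm~\ref{alg:clique-dp-fast} when $LS_\kcl(G)$ is estimated with \textsc{Max.Clique.Pair}. There are four costs to account for:
\begin{itemize}
\item[(i)] computing $\atil$ with \textsc{FastA};
\item[(ii)] building the two Tur\'an shadows $\Ess_k$ and $\Ess_{k-1}$;
\item[(iii)] the sampling work inside \textsc{Max.PairWise.Clique.Approx} and \textsc{Max.PairWise.NearClique.Approx}, i.e.\ the calls to Algorithm~\ref{alg:tau-estimation} and Algorithm~\ref{alg:stat-estimation};
\item[(iv)] evaluating $\tilde{S}^\beta(G)$ as a maximum over $t=0,\ldots,T$.
\end{itemize}
I will bound each separately and add them. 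For (i) I invoke the running time of \textsc{FastA} from Section~\ref{sec:fastA}; in the regime considered it is absorbed into $O(m+n)$ up to logarithmic factors, or else it contributes the $\|W\|_1/\ahat^2$ term. For (ii) I use the standard \Turan{} bound of \cite{jain2020provably}: the shadow is built on the degeneracy orientation in $O(m+n)$ time, and its recursive construction costs $O(n\D(G)^{k})$. This bound covers both $\Ess_k$ and $\Ess_{k-1}$, and also gives the time to compute $w(\Ess_k)$ and $w(\Ess_{k-1})$.

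The main step is (iii). Since $\theta=\min(\frac{e^\gamma-1}{e^\gamma+1},1/2)=\Theta(\gamma)$ and $\gamma=\frac{16\eps}{\log(2/\ddp)}$, we get $1/\theta^2=O(\log^2(1/\delta)/\eps^2)$. By Corollary~\ref{cor:clique-point-wise-accurate}, the estimator is point-wise accurate with scaling $W=3w(\Ess_k)$. Lemma~\ref{lemma:binary-search-tau} then gives, with high probability, a threshold $\tau_t$ with $\tau_t\le\Khat\le 4\tau_t$, and Algorithm~\ref{alg:stat-estimation} uses
\[
s=O\!\left(\frac{w(\Ess_k)\log(1/\delta)}{\theta^2\tau_t}\right)=O\!\left(\frac{w(\Ess_k)\log(1/\delta)}{\theta^2\Khat}\right)
\]
samples. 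The guessing phase halves or quarters $\tau$ starting from $\binom{\max\deg}{k-2}$, so its sample counts form a geometric series dominated by the final $\tau$. There are $O(\log m)$ rounds, which yields the $\log m$ factor. One factor $\log(1/\delta)$ comes from the Chernoff requirement in Lemma~\ref{lemma:Kuv-whp-bound}, and the other comes from $1/\theta^2$, giving $\log^2(1/\delta)/\eps^2$.

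Each call to \Sample{} costs $O(k^2)$, which is a constant treated as poly$(k)$. Updating the edges of a sampled clique also costs $O(k^2)$. The near-clique case follows the same pattern with $w(\Ess_{k-1})$ and $\Konehat$, using Corollary~\ref{cor:near-clique-point-wise-accurate}. There, the extra work of scanning $N^*(H)$ restricted to the neighborhoods of the two lowest-degree vertices is bounded by $O(k\D(G))$ per sample under the degeneracy ordering. I would absorb this into the constants or into the $\log n$ factor that appears there; the number of candidate non-edge pairs also contributes this $\log n$ through the union bound.

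Step (iv) costs $O(T)$, and $T=O(k/\beta)$ is polylogarithmic, so it is negligible. Summing (i)--(iv) gives the stated bound.

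The obstacle I expect is making the expected-time claim rigorous for the $\tau$-search. On the failure event $\tau_t$ could be far below $\Khat$ and the sample count would blow up. I would handle this by conditioning, and by capping the number of samples at the level corresponding to $\tau=1$, i.e.\ $O(w\log(1/\delta)/\theta^2)$. The failure probability is $\delta$, so this cap adds only $O(\delta\, w/\theta^2)$ to the expectation. That is dominated by the main term whenever $\delta\le 1/\Khat$, which holds for $\delta=O(1/n)$ up to polynomial factors; if not, I would lower $\delta$ by a polynomial factor, which only changes the logarithms.
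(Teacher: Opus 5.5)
Your decomposition is the paper's: apply Theorem~\ref{theorem:max-stat-runtime} to \textsc{Max.PairWise.Clique.Approx} and \textsc{Max.PairWise.NearClique.Approx}, with $W=3w(\Ess_k)$ and $3w(\Ess_{k-1})$ respectively. Then use $1/\theta^2=O(\log^2(1/\delta)/\eps^2)$ and add $O(n\D(G)^k+m+n)$ for the two shadows. However, your bookkeeping in step (iii) does not derive the stated factors; it reaches them only because two errors cancel.

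\emph{First error: the source of $\log m$.} The sample count in Algorithms~\ref{alg:tau-estimation} and~\ref{alg:stat-estimation} is $s=W\log(2|J|/\delta)/(\theta^2\tau)$, not $W\log(1/\delta)/(\theta^2\tau)$. The $\log m$ (and, for near-cliques, the $\log n$ from $|J|=\binom n2-m$) comes from this union bound over $J$, as in the paper. It cannot come from ``$O(\log m)$ rounds'' of the $\tau$-search. You have just argued that the per-round costs form a geometric series dominated by the last round, so the number of rounds contributes only a constant factor. (That number is $O(\log_{4/3}(U/\Khat))=O(k\log n)$, since $\tau$ shrinks by $3/4$ per round, not by halves or quarters.)

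\emph{Second error: the power of $\log(1/\delta)$.} You yourself established $1/\theta^2=\Theta(\log^2(1/\delta)/\eps^2)$, so $1/\theta^2$ cannot supply only ``the other'' factor of $\log(1/\delta)$. The honest product is $(\log|J|+\log(2/\delta))\log^2(1/\delta)/\eps^2$. The stated form $\log m\log^2(1/\delta)$ requires absorbing the additive $\log(1/\delta)$ into $\log m$, e.g.\ for $\delta=1/\mathrm{poly}(n)$, which is what the paper implicitly does.

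Two side claims should be dropped rather than relied on.
\begin{itemize}
\item The near-clique scan over the neighborhoods of the two lowest-degree vertices of $H$ is not $O(k\D(G))$ per sample, because degree is not bounded by degeneracy. Attach $\Theta(n/k)$ pendant vertices to each vertex of a $(k-1)$-clique: $\D(G)=k-2$, but each scan costs $\Theta(n/k)$. Such a cost cannot be ``absorbed into the $\log n$ factor.'' The paper's proof does not bound per-sample work at all; it only counts estimator iterations via Theorem~\ref{theorem:max-stat-runtime}.
\item The \textsc{FastA} cost $\|W\|_1/\ahat^2$ is not $O(m+n)$ in general: a large star has $\|W\|_1/\ahat^2=\Theta(n^2)$. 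The paper's proof does not charge this cost here; it is listed separately in Table~\ref{tab:runtime-comparison}.
\end{itemize}

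Your treatment of expected time on the failure event of the $\tau$-search is correct and goes beyond the paper, whose Theorem~\ref{theorem:max-stat-runtime} is really a high-probability iteration bound. You cap the search at $\tau\approx 1$ and require $\delta\lesssim 1/\Khat$, which holds for $\delta\le n^{-(k-2)}$ and changes only the logarithms.
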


\begin{proof}
Algorithm~\ref{alg:max-pair-clique-approx} invokes Algorithms~\ref{alg:tau-estimation} and~\ref{alg:stat-estimation} with $J = \{(u,v): (u,v)\in E(G)\}$, hence $|J| = m$.
By Theorem~\ref{theorem:max-stat-runtime}, it takes $O\left(\frac{3w(\Ess_{k})\log{m}\log^2{1/\delta}}{\eps^2\Khat}\right)$, as we set $\theta = \Theta(\eps/\log{1/\delta})$ for small $\eps\lessapprox 1$ ($\theta$ also depends on $\delta$, which varies due to the graph's size, hence the approximation), $w$ is as defined in Algorithm~\ref{alg:max-pair-clique-approx} and $\Ess_{k}$ Tur\'an's prefix shadow for cliques of size $k$ (see Section~\ref{sec:turan-shadow} for further analysis of the calculation of the Tur\'an's shadow).
Algorithm~\ref{alg:max-pair-nearclique-approx} invokes Algorithms~\ref{alg:tau-estimation} and~\ref{alg:stat-estimation} with $J = \{(u,v): (u,v)\notin E(G)\}$, hence $|J| = \binom{n}{2}-m$.
By Theorem~\ref{theorem:max-stat-runtime}, it takes $O\left(\frac{3w(\Ess_{k-1})\log{n}\log^2{1/\delta}}{\eps^2\Konehat}\right)$, again, as we set $\theta = \Theta(\eps/\log{1/\delta})$ for small $\eps\lessapprox 1$, and $\Ess_{k-1}$ as the Tur\'an's prefix shadow for cliques of size $k-1$.
In both Algorithms above, we have to take into account the time required for calculating the Tur\'an's shadows (for $k$-and $k-1$-cliques respectively), which takes extra $O(n\mathcal{D}(G)^{k} + m + n)$ and $O(n\mathcal{D}(G)^{k-1} + m + n)$, where $\mathcal{D}(G)$ is the degeneracy of $G$ (see Section~\ref{sec:turan-shadow}).
We note that with larger $\eps$, the factor $1/\eps^2$ vanishes due to $\theta = 1/2$.
\end{proof}

%%% Local Variables:
%%% mode: latex
%%% TeX-master: "main"
%%% End:

\section{Experimental Evaluation}
\label{sec:exprs}

\begin{table*}[tb]
    \centering
    \footnotesize
    \begin{tabular}{|c|c|c|c|c|c|c|}
    \hline
        Network & Description & \#nodes & \#edges & $f_{4\mathbb{C}}$ & $f_{5\mathbb{C}}$ & $f_{6\mathbb{C}}$ \\
        \hline
        \hline
         \texttt{ca-HepPh} & Collaboration network- Arxiv High Energy Physics & 12,008 & 237,010 & $1.502 \times 10^8$ & $6.491\times 10^9$ & $2.464 \times 10^{11}$ \\
         \texttt{ca-AstroPh} & Collaboration network- Arxiv Astro Physics & 18,772 & 198,110 & $9.580 \times10^6$ & $6.499 \times 10^7$ & $4.004\times 10^8$ \\
         \texttt{email-Enron} & Email communication network- Enron & 36,692 & 367,662 & $2.344 \times 10^6$ & $5.810\times 10^6$ & $1.121\times10^7$ \\
         \texttt{loc-Gowalla} & Gowalla location based online social network & 196,591 & 950,327 & $6.086 \times 10^6$  & $1.457 \times 10^7$  & $2.892 \times 10^7$\\   
         %\texttt{com-Amazon} & Amazon product co-purchasing network & 334,863 & 925,872 & 276,065 & 61,554  & 5798   \\
         \texttt{com-Youtube} & Youtube social media community network & 1,134,890 & 2,987,624 & $4.986 \times 10^6$ & $7.211\times 10^6$ & $8.443\times 10^6$ \\ 
         %\texttt{cit-Patents} & Citation network among US Patents & 3.774,768  & 16,518,947 & $3.501 \times 10^6 $ & $3.039 \times 10^6$ & $3.152 \times 10^6$   \\
         \hline
    \end{tabular}
    \caption{Statistics of networks including estimates of $k$-clique counts}
    \label{tab:networks}
\end{table*}

\subsection{Experimental setup and datasets}

% We compare our private k-clique counting algorithm which uses  approximations to the local sensitivity, $LS_{\kcl}(G)$, and largest count of common neighbors for any node pairs, $\hat{a}$, against the ladder function method by \cite{zhang:sigmod15} which uses exact computation of $LS_{\kcl}(G), \hat{a}$. 

\textbf{Datasets. } 
We consider various real-world networks (from the Stanford Large Network Dataset Collection~\cite{snapnets}) on which we compute the private $k$-clique counts. 
Statistics of these networks are shown in Table \ref{tab:networks}. 
The largest network has nearly 3 million edges, while the smallest has around 200k edges.

%which is an order of magnitude larger than the largest networks considered in all prior work \cite{nguyen2023faster}.

% The smallest network has about 198k edges while the largest has over 16.5 million edges. All networks have been made available on Stanford Large Network Dataset Collection~\cite{snapnets}. 

% \rmcomment{Add 2-3 more networks}

\textbf{Baseline}
We compare our algorithms against the ladder function method~\cite{zhang:sigmod15}, which uses exact computation of $LS_{\kcl}(G), \hat{a}$ by exhaustive search over all node pairs, with pruning. 
%\rmcomment{Add algorithm for this exact computation which can be moved to supplement. }

\textbf{Evaluation metrics. } We compare our method with the baseline in terms of accuracy and runtime performance. \\
1. Accuracy: We use relative error defined as $\frac{|M_{f_{\kcl}}(G) - f_{\kcl}(G)|}{f_{\kcl}(G)}$. \\
2. Speedup: We compute the speedup which compares how much faster our algorithm is over the baseline, i.e., $\frac{\text{Runtime of ladder function baseline}}{\text{Runtime of our algorithm on G}}$.

We test our methods at different privacy budgets $\epsilon \in \{2^{-2,-1,0,1,2}\}$ and a fixed $\ddp=10^{-5}$, and clique sizes $k\in \{4,5,6\}$. We report the average relative error and runtime over five repeated runs of our algorithms due to the randomness in the estimation of $LS_{\kcl}(G), \hat{a}$. For the ladder baseline, we exactly calculate the local sensitivity and maximum neighborhood size once. Then we resample the noise from the ladder function in each run to obtain five samples. We take the average accuracy and runtime of both methods in the reported values.

\textbf{Computing infrastructure. } All algorithms are implemented in C++. We ran our experiments on a system with Intel Xeon Gold 6148 CPU and 375GB of RAM with 40 physical CPU cores.

% \textbf{Environment setup}

% \textbf{Real-world networks.}

% \textbf{Synthetic networks.}

\subsection{Experimental results} We report accuracy values for which the exact $k$-clique counts are either known from previous work~(\cite{finocchi2015clique}) or could be computed within 48 hours. In many settings, the ladder baseline with exact computation did not complete within 48 hours.

\textbf{Impact of clique size $k$ on accuracy and speedup.} Figure~\ref{fig:rel_err} shows that \textsc{FastCliqueDP} achieves similar levels of accuracy to the ladder baseline across various clique sizes $k$: 
the relative error of our algorithm remains within $1\%$ on the smaller  networks, \texttt{ca-HepPh}, \texttt{ca-AstroPh}, \texttt{email-Enron}, even for $f_{\kcl}$ in the order of $10^{11}$. On \texttt{loc-Gowalla}, \textsc{FastCliqueDP} is far more accurate as compared to the ladder function especially with increasing $k$, for $\epsilon=4$. 
In Figure~\ref{fig:speedup}, we report the speedup over ladder baseline when using our method for different clique sizes $k$ and privacy budgets $\epsilon$. 
Our algorithm is orders of magnitude faster than the ladder baseline under many settings especially with increasing clique size $k$. 
We note that for the largest networks, the ladder function was not able to complete even in 48 hours.
\textsc{FastApproxCliqueDP} has even better run time than \textsc{FastCliqueDP}, but the accuracy is not very high as can be seen in Figure~\ref{fig:approx}.

%For $k>=6$, our approximation algorithm is at least an order of magnitude faster than the exact computation. This is because exact computation of $(k-2)$-clique counts in numerous neighborhoods for $LS_{\kcl}(G)$ take exponentially longer going from $k=5$ to $6$.
%This can been seen in the $30\times$ speedup on \texttt{ca-HepPh} for $k=6$.
% The ladder method's exact computation did not complete even after 48 hours for $k=\{7,8\}$, while our algorithm took under 3 hours in most runs~(Figure \ref{fig:all_nets}), indicating at least $10\times$ speedup. 

%Even on networks where the $k$-clique count increases exponentially with increasing $k$, our approximate algorithm grows much more slowly as seen in Figure~\ref{fig:all_nets}~\rmcomment{add concrete numbers}.

\textbf{Speedup and accuracy with $\epsilon$. } 
%\rmcomment{Add speedup vs. $\epsilon$ plots.}
Figure~\ref{fig:rel_eps} shows the relative error of both methods with varying $\epsilon$. 
\textsc{FastCliqueDP} consistently achieves similar or lower levels of error compared to the ladder baseline. With increasing $\epsilon$, the relative error for both methods decreases as expected. 
The speedup of \textsc{FastCliqueDP} increases rapidly with $\epsilon$ to multiple orders of magnitude, before plateauing.  
% In Figure~\ref{fig:speedup}, we show the speedup over ladder achieved by our method for varying $\epsilon$. With increasing budget, the gains in speedup increases rapidly to multiple orders of magnitude before plateauing. 
This is due to our method's approximate computations running much faster, while the ladder's exact computation has a fixed runtime irrespective of privacy budget. With our method, we can speedup the computation by allowing more privacy budget.

\textbf{Network size. } Our method scales to networks with millions of edges while still maintaining low relative error in most settings.  Figure~\ref{fig:runtime} shows that the runtime of our method increases gradually even as network sizes grow by orders of magnitude. In contrast, the ladder baseline did not complete on \texttt{com-Youtube} within 48 hours.
On large networks ($>=100k$ nodes), our method is many times faster even for small $k = \{4, 5\}$, due to the potentially large number of node-pairs that must be exhaustively searched in the exact computation. For example, on \texttt{com-Youtube}, $\epsilon = 0.5$,  our algorithm took between 0.5 hours ($k=4$) to 12 hours ($k=5$), while, in the ladder's exact computation, only 30\% of the node pairs could be searched in 48 hours, suggesting at least an order of magnitude speedup.

% \textbf{Using approximation for $f_{\kcl}(G)$} 
% \rmcomment{Suffices to report accuracy w.r.t our Alg 1 using exact $f_{\kcl}(G)$. Unlike previous sections, for speedup, compare total runtime including exact $f_{\kcl}(G)$ computation, w.r.t our Alg 1. Report these total runtimes on several  large networks and $k$ showing many orders of magnitude speedup.  }

\begin{figure}
    \centering
    \begin{subfigure}{0.48\columnwidth}
    \centering
    \includegraphics[width=\linewidth]{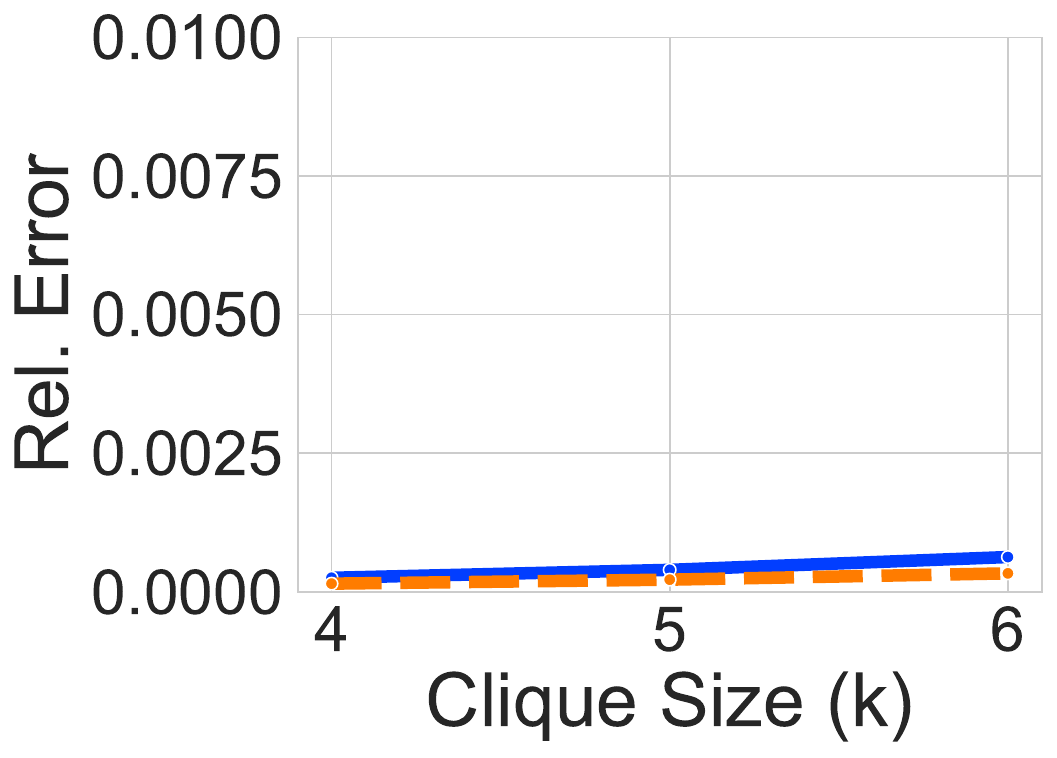}
    \caption{\texttt{ca-HepPh}}
    \end{subfigure}
    \begin{subfigure}{0.48\columnwidth}
        \centering
        \includegraphics[width=\linewidth]{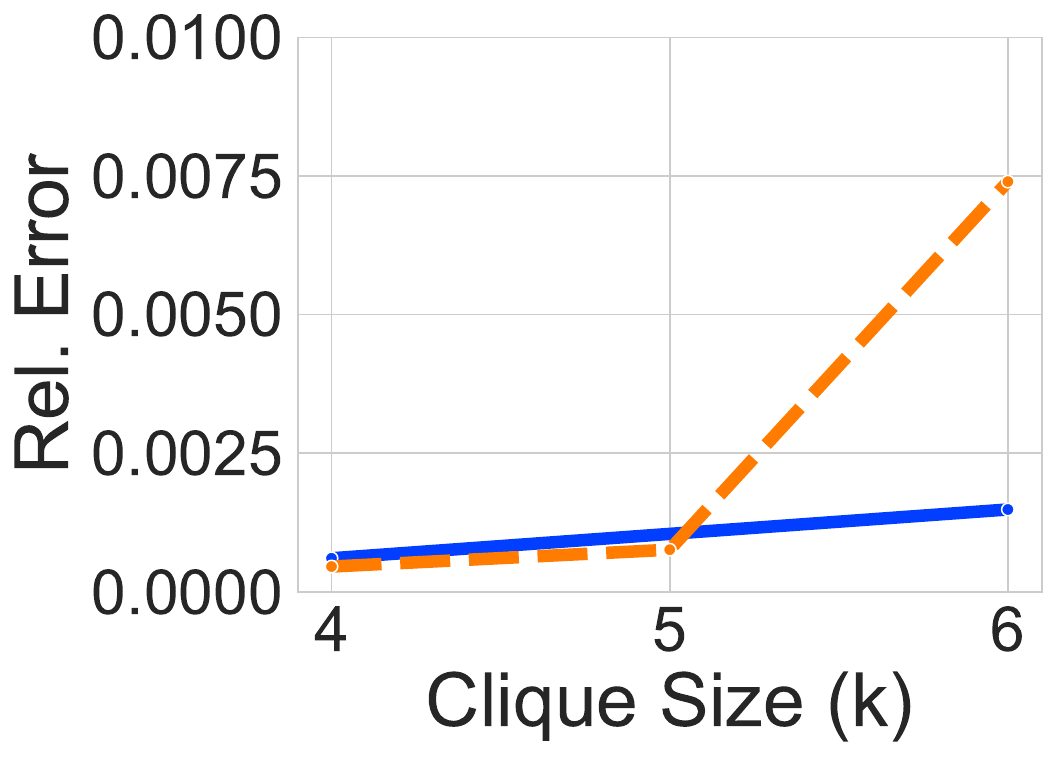}
    \caption{\texttt{ca-AstroPh}
    }
    \end{subfigure}

    \begin{subfigure}{0.48\columnwidth}
        \centering
        \includegraphics[width=\linewidth]{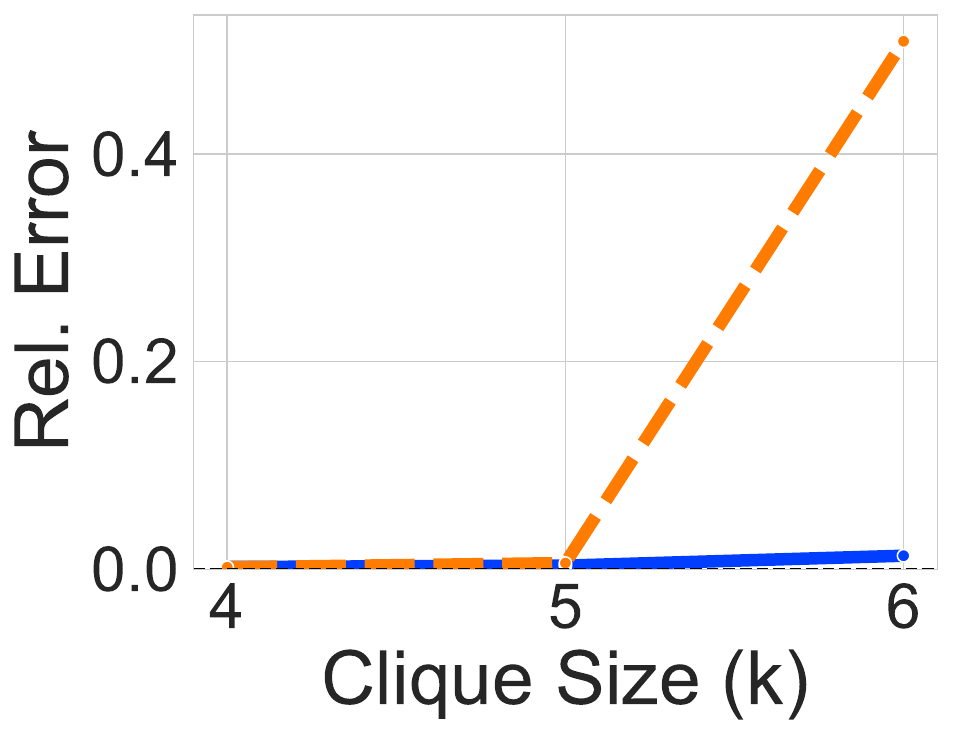}
    \caption{\texttt{email-Enron}
    }
    \end{subfigure}
    \begin{subfigure}{0.48\columnwidth}
        \centering
        \includegraphics[width=\linewidth]{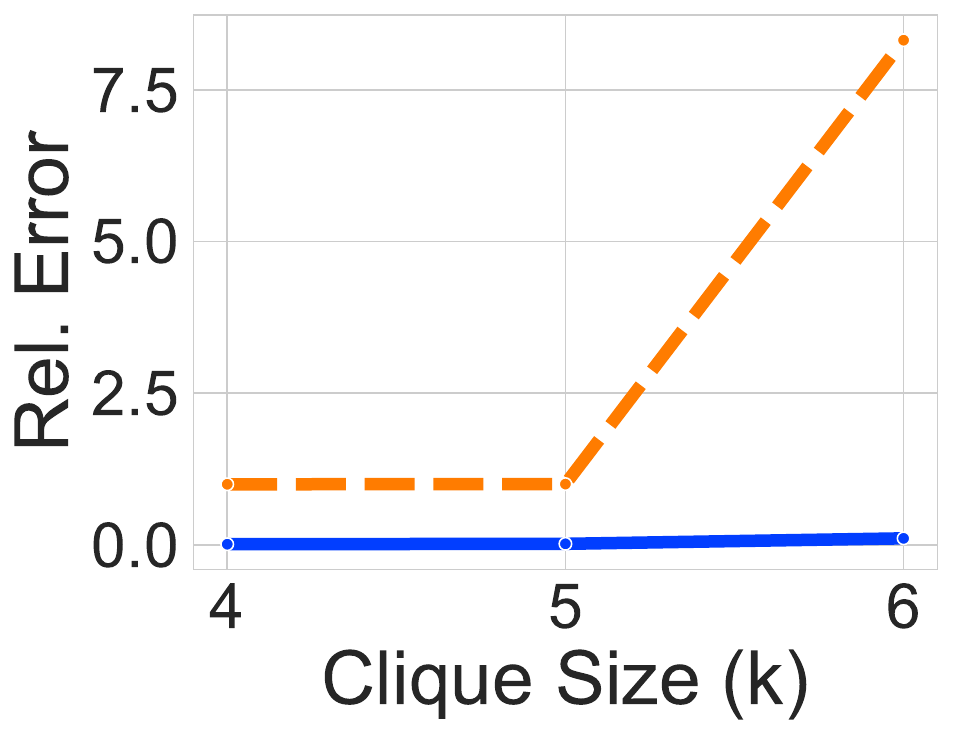}
    \caption{\texttt{loc-Gowalla}
    }
    \end{subfigure}

    \includegraphics[width=0.6\columnwidth]{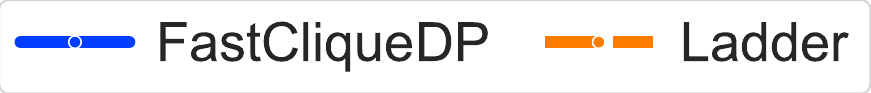}
    \caption{Relative Error for different clique sizes $k$ at $\epsilon=4$ on various networks.}
    \label{fig:rel_err}
\end{figure}
\begin{figure}
    \centering
    % \begin{subfigure}{0.48\columnwidth}
    %     \centering
    %     \includegraphics[width=\linewidth]{figures/ca-HepPh_speedup_vs_k_eps0.5.pdf}
    %     \caption{Varying $k$}
    % \end{subfigure}
    \begin{subfigure}{0.48\columnwidth}
        \centering
        \includegraphics[width=\linewidth]{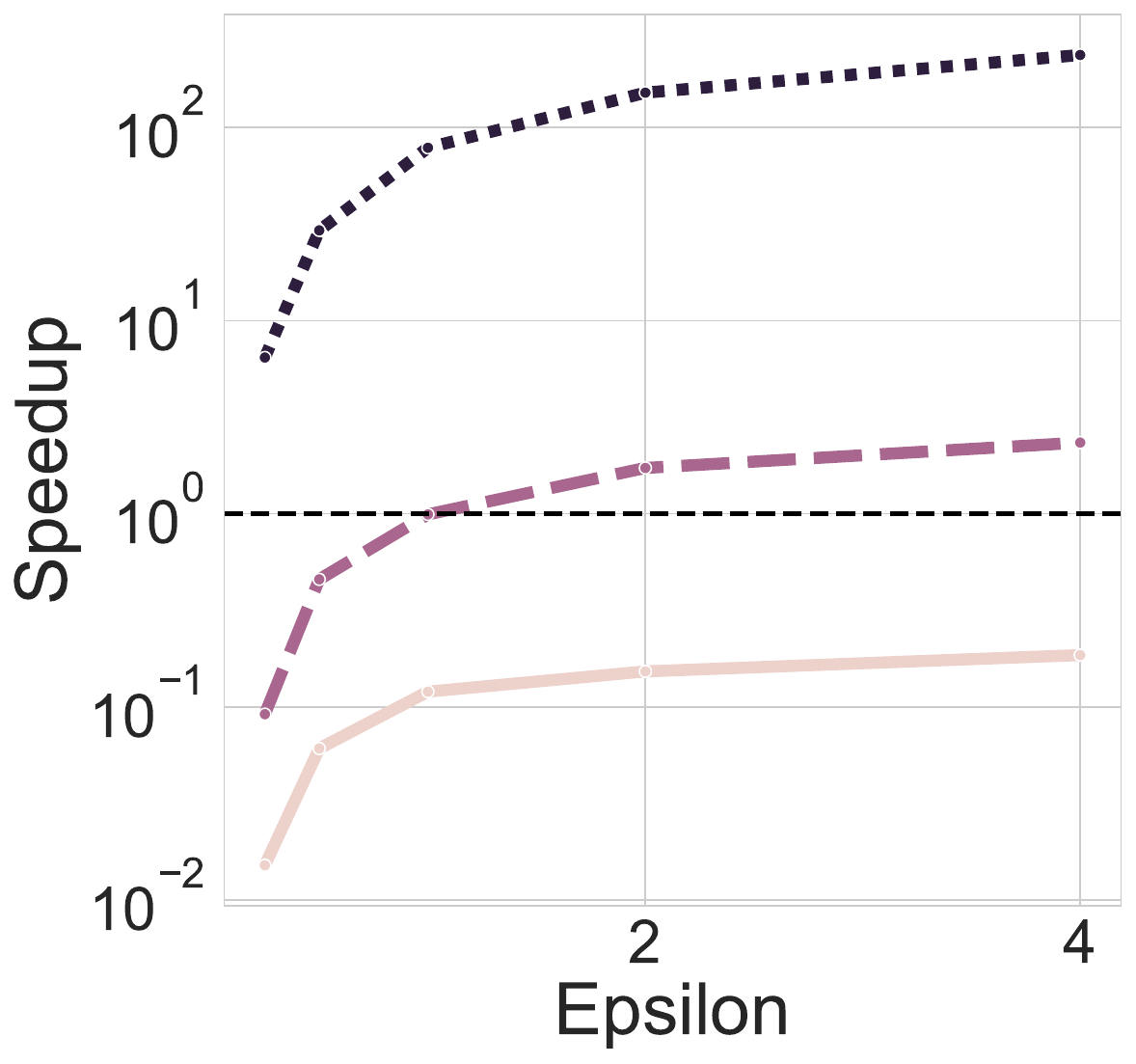}
        \caption{\texttt{ca-HepPh}}
    \end{subfigure}
    \begin{subfigure}{0.48\columnwidth}
        \centering
        \includegraphics[width=\linewidth]{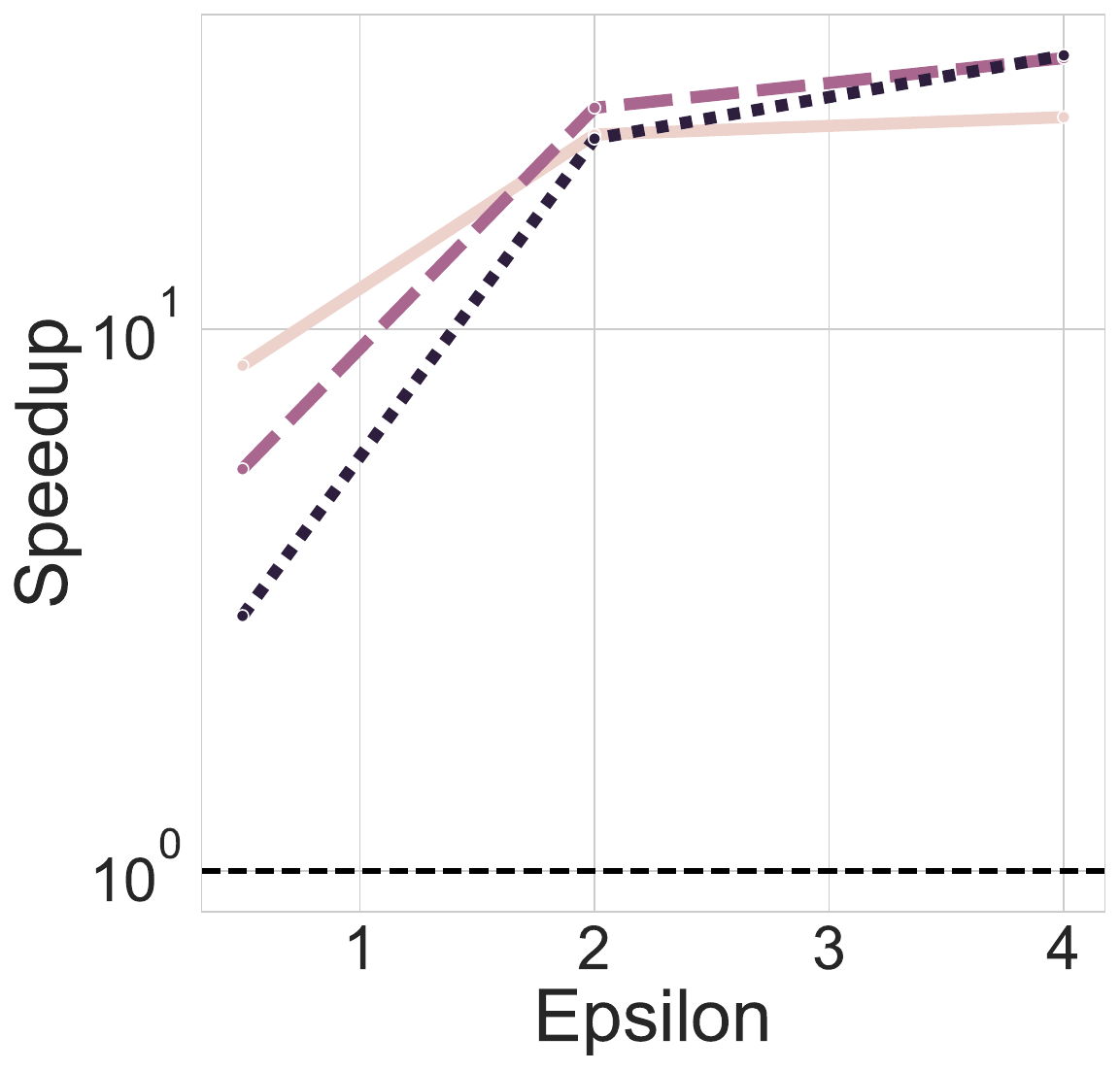}
        \caption{\texttt{loc-Gowalla}}
    \end{subfigure}
    \includegraphics[width=0.4\columnwidth]{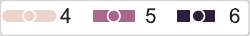}
    \caption{Speedup over ladder baseline for clique sizes $k\in\{4, 5, 6\}$ with varying $\epsilon$. Y-axis is in log-scale.}
    \label{fig:speedup}
\end{figure}

\begin{figure}
    \centering
    \includegraphics[width=0.5\columnwidth]{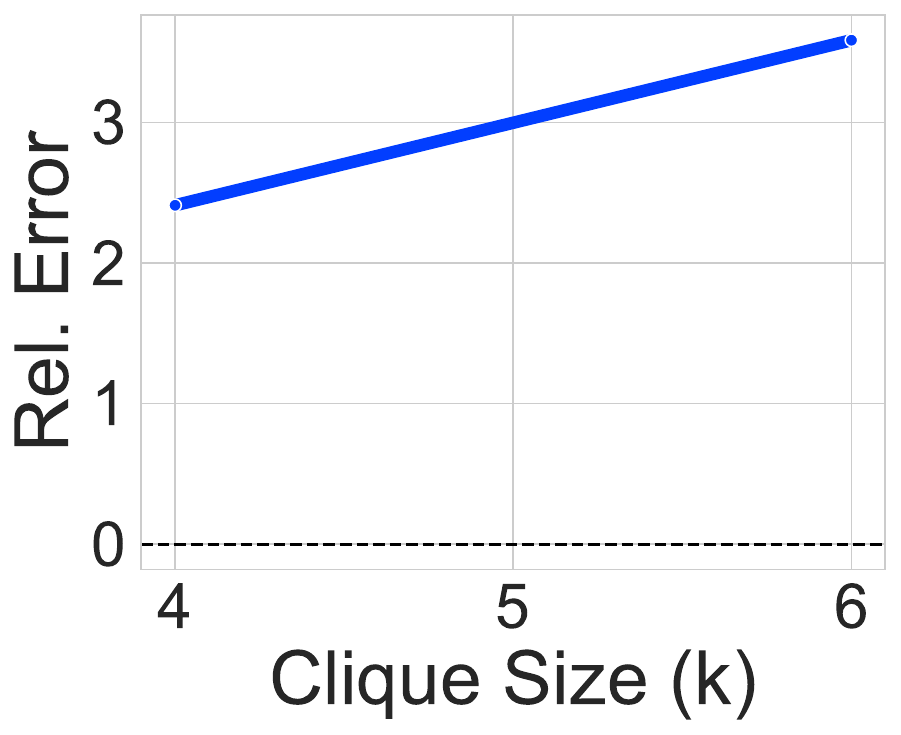}
    \caption{Relative Error in 
    \textsc{FastApproxCliqueDP} (Alg.~\ref{alg:clique-approx-dp-fast-approx}) for different clique sizes on \texttt{ca-HepPh} at $\epsilon = 12$.}
    \label{fig:approx}
\end{figure}

\begin{figure}
    \centering
    \begin{subfigure} {0.48\columnwidth}
    \centering
    \includegraphics[width=\linewidth]{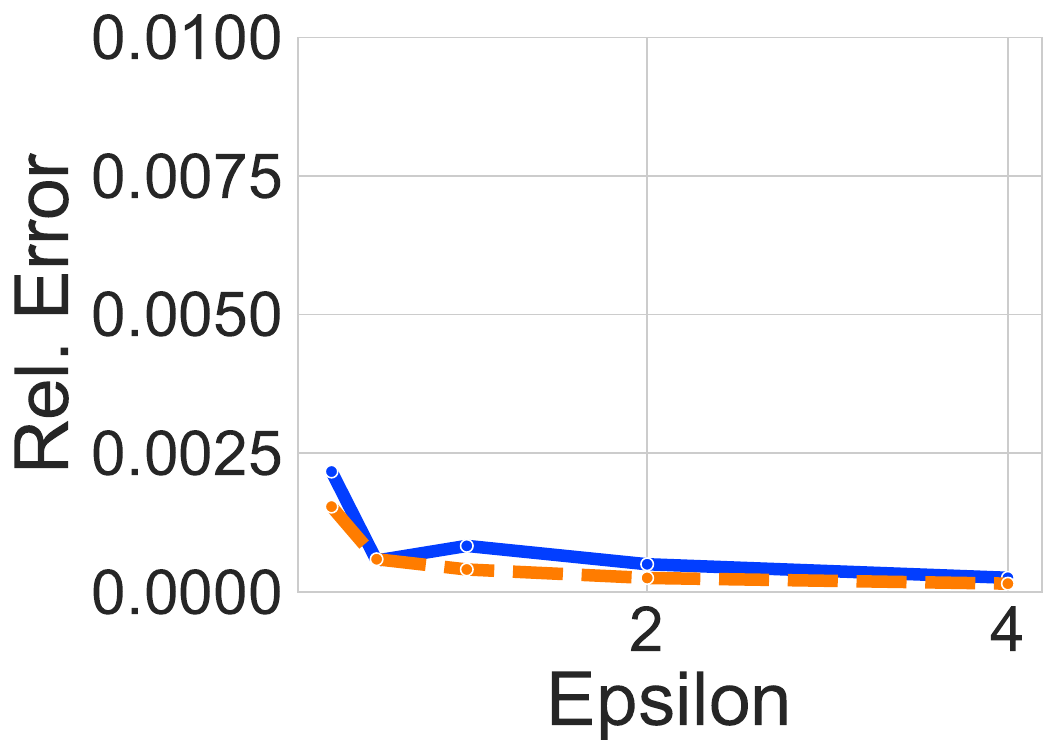}
        \caption{\texttt{ca-HepPh}: $k=4$}
    \end{subfigure}
    % \begin{subfigure} {0.48\columnwidth}
    % \centering
    % \includegraphics[width=\linewidth]{figures/ca-HepPh_k5_relerr_vs_eps.pdf}
    %     \caption{$k=5$}
    % \end{subfigure}
    \begin{subfigure}{0.48\columnwidth}
    \centering
    \includegraphics[width=\linewidth]{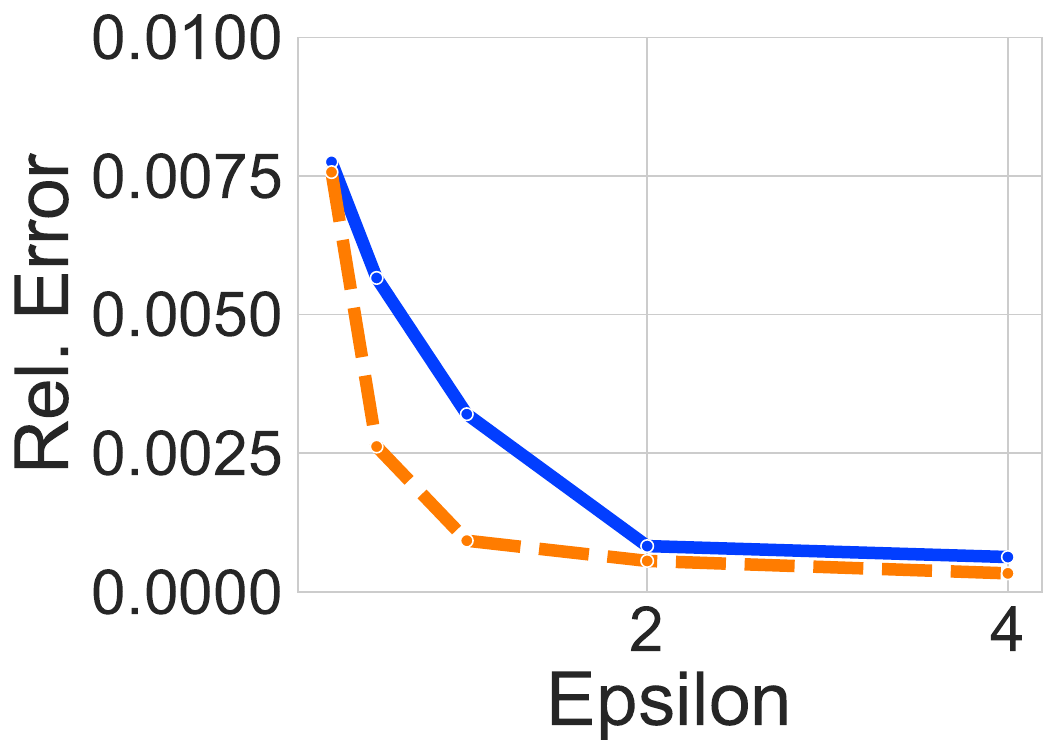}
        \caption{\texttt{ca-HepPh}: $k=6$}
    \end{subfigure}
    \begin{subfigure}{0.48\columnwidth}
    \centering
    \includegraphics[width=\linewidth]{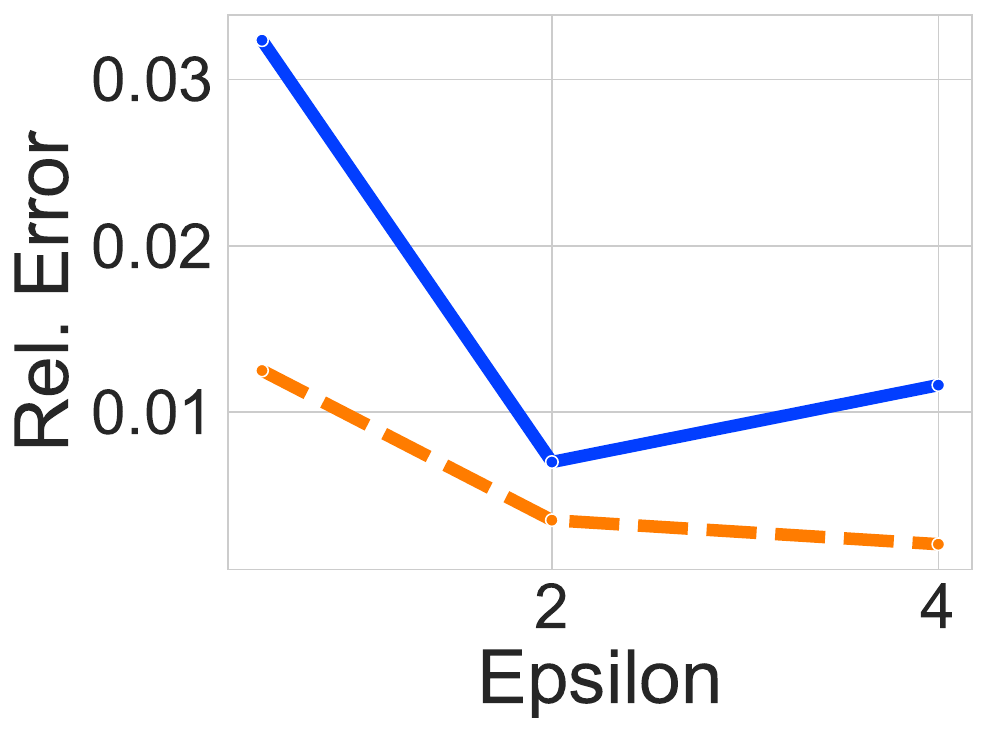}
        \caption{\texttt{loc-Gowalla}: $k=4$}
    \end{subfigure}
    \begin{subfigure}{0.48\columnwidth}
    \centering
    \includegraphics[width=\linewidth]{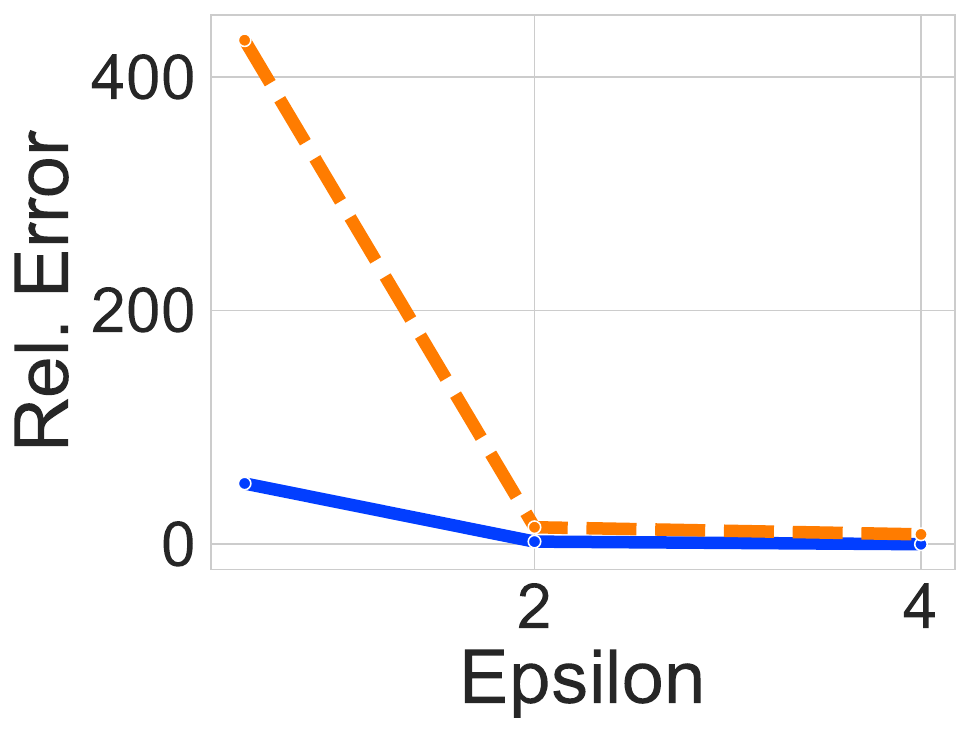}
        \caption{\texttt{loc-Gowalla}: $k=6$}
    \end{subfigure}
        \includegraphics[width=0.6\columnwidth]{figures/dp_methods_legend.pdf}
    \caption{Relative Error with varying privacy parameter $\epsilon$ comparing \textsc{FastCliqueDP} with ladder baseline.}
    
    \label{fig:rel_eps}
\end{figure}

\begin{figure}
    \centering
    \includegraphics[width=0.95\columnwidth]{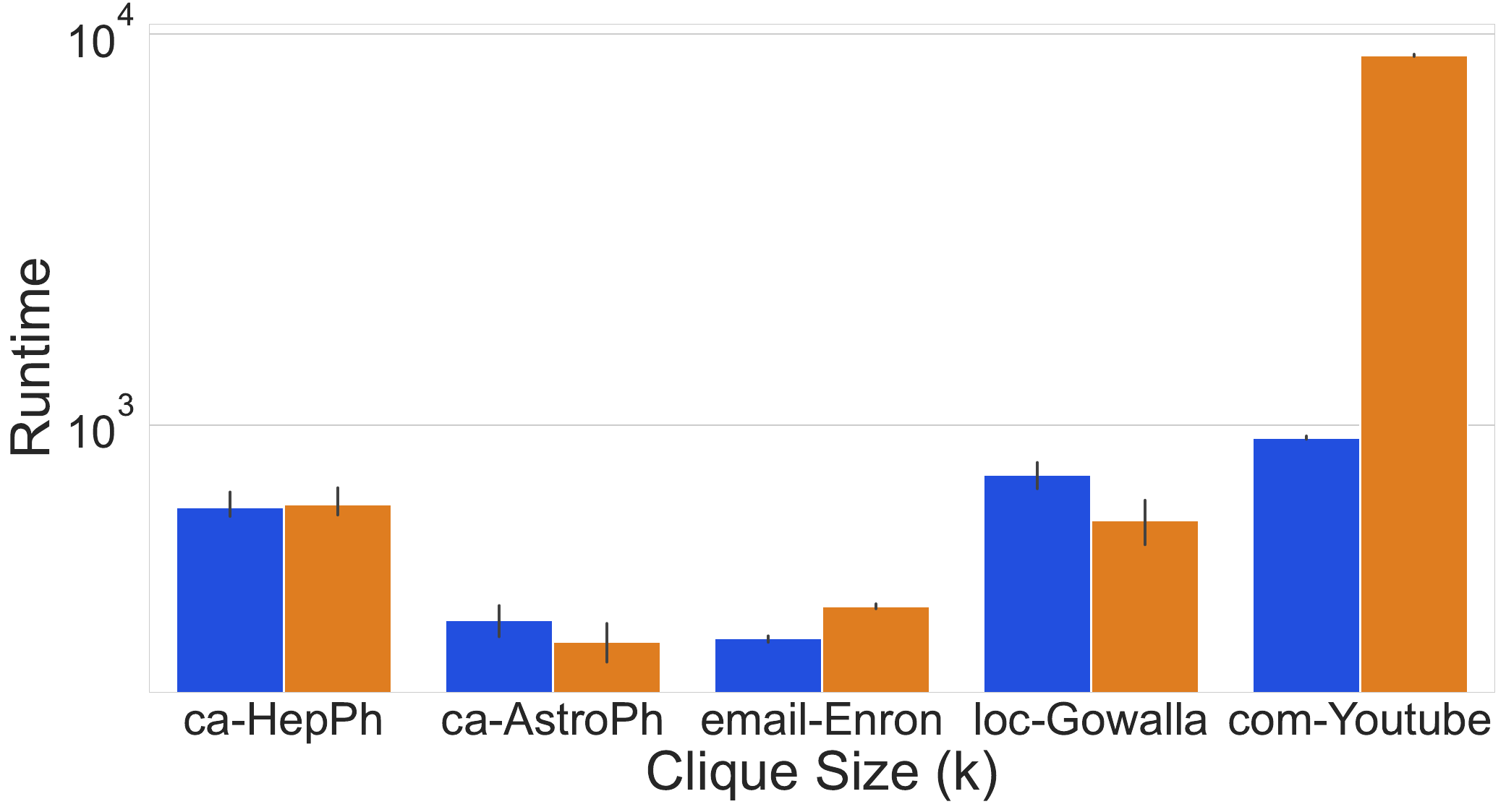}
    \includegraphics[width=0.4\columnwidth]{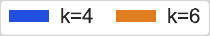}
    \caption{Runtime (in seconds) of \textsc{FastCliqueDP} on various networks at $\epsilon =4$.}
    \label{fig:runtime}
\end{figure}

% \begin{figure}
%     \centering
%     \begin{subfigure}{0.5\columnwidth}
%     \centering
%     \includegraphics[width=\linewidth]{figures/networks_time_vs_k_e0.5.pdf}
%         \caption{Runtime}
%     \end{subfigure}
    
%     \caption{All networks Rel. Error and Runtimes for $(\epsilon=0.5, \ddp=10^{-5})$.}
%     \label{fig:all_nets}
% \end{figure}

%%% Local Variables:
%%% mode: latex
%%% TeX-master: "main"
%%% End:

\section{Conclusions}

% Alternatives to global sensitivity of $k$-clique count, such as smooth sensitivity and ladder function, are all hard problems.
% Specifically, they all involve (exact) counting the number of cliques of slightly smaller size, such as the count of $(k-2)$-cliques.
% We modified the ladder function by adding specific constraints to create a smooth upper bound on the local sensitivity of the $k$-clique count.
% However, the smooth upper bound still includes the count of smaller-sized cliques.
% We utilized a fast algorithm to approximate the $(k-2)$-clique statistic~\cite{eden2018approximating}.
% The original algorithm, although it achieves any-constant approximation, has a constant failure probability, which is insufficient for privacy.
% We therefore invoked multiple instances of it, obtaining a median estimation, to get an inverse polynomial failure probability needed by the framework.
% As a result, we obtained a polynomial-time algorithm to estimate the smooth upper bound on the local sensitivity, which we can extend our framework to utilize it to calibrate the noise needed for privacy.

Graph statistics and other kinds of combinatorial problems have high sensitivity, and alternatives to global sensitivity based metrics are needed to improve accuracy.
However, most private algorithms involving alternatives to global sensitivity do not scale to large instances, which limits their use on real-world instances.
Our paper shows a novel approach, \textsc{FastCliqueDP}, to scale these techniques using approximations.
Such techniques have been crucial for scalability in graph mining, but there has been very limited work on incorporating them into privacy.
We show rigorous bounds on the performance of \textsc{FastCliqueDP}, which are matched by experimental evaluation on networks with millions of edges.
However, the exact type of approximation is important:
the accuracy of \textsc{FastApproxCliqueDP}, which uses approximate count instead of exact, is worse.
Improving this, and extending to other graph problems is interesting future work.

%%% Local Variables:
%%% mode: latex
%%% TeX-master: "main"
%%% End:

\clearpage

\bibliographystyle{plain}
\bibliography{refs.bib}

\appendix

% \section{Limitations of prior methods}
% \label{sec:limitations}
% 
% \begin{observation}
% $GS_{\fkcl} = \Theta(n^{k-2})$. 
% \end{observation}
%     
% \color{blue}
% For many networks $GS_{\fkcl}$ is larger than $\fkcl(G)$ 
% \color{black}

% \section{Approximate smooth sensitivity for approximate queries}
% 
% \begin{algorithm}
% \caption{DP estimate via approximate smooth sensitivity (corrected, tightened)}
% \label{alg:corrected}
% \textbf{Input:} dataset $D$; query $f$ with oracles $A_f$ (an $(\alpha_1,\delta_1)$
% multiplicative approximation) and $\tilde S$ (a $(\gamma,\delta_2)$ upper approximation
% to $S_{f,\beta}$); privacy budget $(\epsilon,\delta)$ with $\delta\le 2/e$. \\
% \textbf{Precondition:} $\alpha_1\in(0,1/2)$ and
% $\;4\alpha_1+\gamma+\beta\le \dfrac{\epsilon}{2\ln(2/\delta)}$. \\
% \textbf{Output:} $\bigl(\epsilon,\tfrac{e^{\epsilon/2}+1}{2}\delta+2\delta_1+2\delta_2\bigr)$-DP
% estimate of $f(D)$.
% \begin{algorithmic}[1]
%   \STATE $\hat a \leftarrow A_f(D)$ \hfill // multiplicative approx.\ of $f(D)$
%   \STATE $\tilde s \leftarrow \tilde S(D)$ \hfill // upper approx.\ of $S_{f,\beta}(D)$
%   \STATE $b \leftarrow \dfrac{1}{\epsilon}\left(\dfrac{4\alpha_1}{1-\alpha_1}\,\hat a + 4\,\tilde s\right)$ \hfill // Laplace scale
%   \STATE Sample $Z\sim \mathrm{Lap}(1)$
%   \STATE \textbf{Return} $\hat a + b\,Z$
% \end{algorithmic}
% \end{algorithm}

\subsection{Ladder function of $\fkcl$}
\label{sec:ladder}
In this section we present the Ladder function of $\fkcl$ and its property.
For the full details, we refer readers to~\cite{zhang:sigmod15}.

\begin{theorem}(Theorem 5.1 of~\cite{zhang:sigmod15})
  \label{theorem:clique-ladder}
  Let $LS_\kcl(G)$ be the local sensitivity of $\fkcl$ at $G$.
  Let $a_{u,v}$ be the number of common neighbors of node $u$ and $v$ in $G$.
  Let $\ahat = \max_{u,v}a_{u,v}$.
  The function
  \begin{align*}
    I_t(G) = \min\left(LS_\kcl(G) + \binom{\ahat + t}{k-2} - \binom{\ahat}{k-2}, GS_\kcl \right)
  \end{align*}
  is a ladder function for $\fkcl$.
\end{theorem}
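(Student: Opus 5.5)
We need to show the two ladder properties of Definition~\ref{def:ladder-func} for $I_t(G)=\min\bigl(LS_\kcl(G)+B(\ahat,t),GS_\kcl\bigr)$, where $B(x,t)=\binom{x+t}{k-2}-\binom{x}{k-2}$. Property (1) is immediate: $B(\ahat,0)=0$, so $I_0(G)=\min(LS_\kcl(G),GS_\kcl)=LS_\kcl(G)$, because local sensitivity never exceeds global sensitivity.

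Property (2) says $I_t(G)\le I_{t+1}(G')$ for neighbors $G\sim G'$. If $I_{t+1}(G')=GS_\kcl$, this follows since $I_t(G)\le GS_\kcl$. Otherwise it suffices to prove
\[
LS_\kcl(G)+B(\ahat_G,t)\le LS_\kcl(G')+B(\ahat_{G'},t+1).
\]
First I would establish the standard fact that $\ahat$ changes by at most one under an edge flip. Flipping $(x,y)$ changes only the common-neighbor counts $a_{xz}$ and $a_{yz}$, each by at most one. Hence $\ahat_G\le\ahat_{G'}+1$.

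Next I would compare $LS_\kcl(G)$ and $LS_\kcl(G')$, using $LS_\kcl(G)=\max_{u,v}f_{(k-2)\mathbb{C}}(G(A_{uv}))$. Fix the maximizing pair $(u,v)$ in $G$, and let $e=(x,y)$ be the flipped edge. The goal is
\[
f_{(k-2)\mathbb{C}}(G(A^G_{uv}))\le f_{(k-2)\mathbb{C}}(G'(A^{G'}_{uv}))+\binom{\ahat_G}{k-3}.
\]
The reason is that the flip can destroy a $(k-2)$-clique in $A_{uv}$ in only two ways. Either $e$ lies inside the clique, or $e$ removes a vertex from $A_{uv}$, which requires $\{x,y\}\cap\{u,v\}\neq\emptyset$. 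In both cases the destroyed cliques contain a fixed vertex (one endpoint of $e$, or the removed vertex) together with $k-3$ further vertices from $A_{uv}$. There are at most $\binom{a_{uv}-1}{k-3}\le\binom{\ahat_G-1}{k-3}$ such choices, since the fixed vertex is itself in $A_{uv}$.

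Combining via Pascal's rule, $B(\ahat_G,t)$ and $\binom{\ahat_G-1}{k-3}$ should be absorbed into $B(\ahat_{G'},t+1)$. Using $\ahat_{G'}\ge\ahat_G-1=:a$, monotonicity of $B$ in $x$ gives $B(\ahat_{G'},t+1)\ge B(a,t+1)=\binom{a+t+1}{k-2}-\binom{a}{k-2}$. Writing $\binom{a+1}{k-2}=\binom{a}{k-2}+\binom{a}{k-3}$, this equals $\binom{a}{k-3}+B(a+1,t)=\binom{\ahat_G-1}{k-3}+B(\ahat_G,t)$, which is exactly what is needed. When $\ahat_{G'}\ge\ahat_G$ the bound is only easier, by the same monotonicity.

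I expect the main obstacle to be the bookkeeping in the $LS$ comparison: making sure a single flip destroys at most $\binom{\ahat_G-1}{k-3}$ of the $(k-2)$-cliques in $A_{uv}$, rather than twice that. Both $a_{uv}$ drops and an internal edge deletion must be shown never to happen simultaneously with full multiplicity. If they can, I would instead bound the loss by $\binom{\ahat_G}{k-3}$ and use $\ahat_{G'}\ge\ahat_G-1$ more carefully. This is the argument of Theorem~\ref{theorem:clique-ladder} from \cite{zhang:sigmod15}, which I would follow.
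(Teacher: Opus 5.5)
Your proof is correct. The paper gives no argument of its own; it simply imports the result from \cite{zhang:sigmod15}. Your route is the standard proof of this ladder property: the one-step bounds $\ahat_{G'}\ge\ahat_G-1$ and $LS_\kcl(G)\le LS_\kcl(G')+\binom{\ahat_G-1}{k-3}$, closed by the Pascal identity $B(a,t+1)=\binom{a}{k-3}+B(a+1,t)$.

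The obstacle you flag is not a real one. A single flipped pair $\{x,y\}$ either lies inside $A_{uv}$, which is an internal deletion that leaves $A_{uv}$ unchanged. Or it meets $\{u,v\}$, which removes at most one vertex from $A_{uv}$ and no internal edge. It can never do both, because $u,v\notin A_{uv}$, and edge additions destroy nothing. So the loss really is at most $\binom{a_{uv}-1}{k-3}$.

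That sharp bound matters. Your fallback bound $\binom{\ahat_G}{k-3}$ would exceed the available slack $\binom{\ahat_G-1}{k-3}$ when $\ahat_{G'}=\ahat_G-1$, so it would not close the Pascal step.
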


%\subsection{Proofs from Section \ref{sec:approach}}

% \smoothclique*

% \noindent
% \textbf{Proof of Lemma \ref{lemma:I-approx}}

% \approxsmooth*
%\textbf{Proof of Theorem \ref{theorem:tildeS}}

% \privacy*
%\textbf{Proof of Theorem \ref{theorem:privacy}}

\subsection{Black-box approximation of  $LS_\kcl(G) =\max_{u,v}f_{(k-2)\mathbb{C}}(G(A_{uv})) $}
\label{sec:blackbox-appendix}

By its definition $LS_\kcl(G) = \max_{G': G'\sim G}|\fkcl(G') - \fkcl(G)|$, it indicates the largest difference in $\#k$-cliques of a neighbor $G'$ from $G$, in which $G$ and $G'$ differ in exactly one edge.
Assume that edge $(u,v)$ exists, a specific $k$-clique that involves $(u,v)$ contains $k-2$ nodes that are common neighbors of $u$ and $v$, and that they are all connected to each other, hence they form a $k-2$-clique if we remove $u$ and $v$.
Therefore, one way to calculate the number of $k$-clique involving $u,v$ is to count all $k-2$-cliques in a subgraph induced by the common neighbors of $u$ and $v$.
We can utilize any clique counting (with clique size $k-2$) algorithm for each induced subgraph, hence the approach is named ``black-box''.
For this purpose, we use \algK~\cite{eden2018approximating}, that claims to estimate $k$-clique in sublinear time (in specific regimes).
However, \algK~has failure probability of $1/3$, where our $\LStil(G)$ requires a failure probability much lower $O(1/n)$ as it will be incorporated into the $\delta$-part of the privacy.
We then have to sample $O(\log{n})$ instances of $\algK$, and takes the median in order to achieve the desired failure probability.

\begin{definition} Let $\widetilde{LS}(G) = \max_{u,v}\textsc{Clique.Approx}(G(A_{uv}), k-2, \gamma, \delta_{\kcl})$.
  \label{def:approx-ls-blackbox}
\end{definition}

\begin{lemma}Let \algK $(G, \alpha)$ be the algorithm defined in Corollary 20 of~\cite{eden2018approximating}, that estimates the number of $k$-clique in a graph $G$ with approximation factor $\alpha$.~\cite{eden2018approximating} shows that \algK$(G,\alpha) \in (1 \pm \alpha)f_\kcl(G)$ with probability at least $2/3$.
  \label{lemma:k-clique-alg}
\end{lemma}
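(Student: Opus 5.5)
This statement is imported from Corollary 20 of \cite{eden2018approximating}, so the plan is to certify that \algK, as we invoke it, is exactly that corollary's algorithm, rather than to re-derive its analysis. Concretely, I would first fix the interface: \algK$(G,\alpha)$ receives the graph (here the induced subgraph $G(A_{uv})$), the clique size (here $k-2$), and the accuracy parameter $\alpha$. Its output is an estimate $\widehat{C}$ with $\Pr\big[\widehat{C}\in(1\pm\alpha)f_\kcl(G)\big]\geq 2/3$.

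\textbf{Matching hypotheses.} The next step is to check that the corollary's hypotheses hold on every input we give it. The Eden--Ron--Seshadhri analysis works in a query model (degree, neighbor and pair queries) and is an unbiased-estimator-plus-Chebyshev argument. It assumes either a known constant-factor lower bound on the true count or the geometric search over guesses described in their paper. I would verify two things.
\begin{enumerate}
\item Queries to $G(A_{uv})$ can be simulated from $G$. A pair query is answered directly; neighbor and degree queries are answered by restricting to $A_{uv}$. The overhead this costs is charged separately in Theorem~\ref{theorem:total-time-blackbox}.
\item The degenerate case $f_\kcl(G(A_{uv}))=0$ is handled: there the estimator returns $0$, which trivially lies in $(1\pm\alpha)\cdot 0$.
\end{enumerate}

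\textbf{Source of the success probability.} The constant $2/3$ comes from their variance bound: the variance of the averaged estimator is at most $\alpha^2 f^2/3$ once enough samples are drawn, and Chebyshev's inequality then gives failure probability at most $1/3$. I would record this explicitly, because Lemma~\ref{lemma:local-approx} later boosts the success probability by taking the median of $O(\log(n^2/\delta_\kcl))$ independent runs followed by a union bound over all pairs $u,v$. That boosting step needs only per-run success probability strictly above $1/2$ and independence across runs.

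\textbf{Main obstacle.} The step needing the most care is the dependence on an unknown lower bound on $f_\kcl$ in \cite{eden2018approximating}. Their guessing procedure itself succeeds only with constant probability. I would invoke their statement that the combined procedure has total success probability at least $2/3$, in expectation-bounded time, rather than re-proving it.
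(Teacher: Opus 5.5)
Your proposal is correct and matches the paper: the paper gives no argument for this lemma and simply imports the $2/3$ success guarantee for \algK{} from \cite{eden2018approximating}, which is what your plan also does in the end. The extra checks you sketch (simulating queries on $G(A_{uv})$, the case $f_\kcl(G(A_{uv}))=0$, the Chebyshev origin of the constant) are not needed for the statement as written. Also, the query-simulation overhead is not actually accounted for in Theorem~\ref{theorem:total-time-blackbox}, so you should not rely on it being ``charged separately'' there.
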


\begin{lemma}Let $X$ be a random variable that $X\in(l,r)$ with probability $p>0.5$. Let $Y$ be the median of $x_1,\ldots, x_{2d}$ i.i.d. from $X$. With probability at least $1-\exp(-p(1-\frac{1}{2p})^2d)$, $Y\in(l,r)$.
  \label{lemma:median}
\end{lemma}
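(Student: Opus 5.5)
The statement is the standard median-amplification bound, and I would prove it by reducing it to a Chernoff bound on indicator variables. For each $i = 1,\ldots,2d$ define $B_i = \mathbf{1}[x_i \in (l,r)]$. The $B_i$ are i.i.d.\ Bernoulli with success probability $p > 1/2$. Let $S = \sum_{i=1}^{2d} B_i$, so $\E[S] = 2dp$.

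\textbf{Step 1 (the median fails only if few samples land in the interval).} Suppose $Y \notin (l,r)$, say $Y \le l$. Then at least half of the samples, namely at least $d$ of them, are $\le l$, and none of those lie in $(l,r)$. The case $Y \ge r$ is symmetric. In either case at least $d$ samples lie outside $(l,r)$, so $S \le d$. Hence
\[
\Pr[Y \notin (l,r)] \le \Pr[S \le d].
\]
This holds for either convention of the median of an even-size sample: the lower or upper order statistic, or the average of the two middle values. If the average is used, note that if both middle values lie in $(l,r)$, then so does their average, since the interval is convex.

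\textbf{Step 2 (write the bad event as a relative deviation).} Observe that
\[
d = (1-\mu)\,\E[S], \qquad \mu = 1 - \frac{d}{2dp} = 1 - \frac{1}{2p} \in (0, 1/2).
\]

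\textbf{Step 3 (apply the multiplicative Chernoff lower tail).} The lower-tail bound $\Pr[S \le (1-\mu)\E[S]] \le \exp(-\mu^2 \E[S]/2)$ gives
\[
\Pr[S \le d] \le \exp\!\left(-\frac{(1-\tfrac{1}{2p})^2 \cdot 2dp}{2}\right) = \exp\!\left(-p\left(1-\tfrac{1}{2p}\right)^2 d\right),
\]
which is exactly the claimed bound.

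\textbf{Main obstacle.} The only delicate points are choosing the median convention so that Step 1 is airtight, and matching the constants. Using the $\mu^2/2$ lower-tail form of Chernoff, rather than the $\mu^2/3$ form, gives precisely the exponent $p(1-\tfrac{1}{2p})^2 d$. Boundary cases such as $S = d$ exactly are absorbed by the non-strict inequality $S \le d$.
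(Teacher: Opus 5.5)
Your proof is correct and follows the paper's argument: indicators $B_i$, reducing failure of the median to a lower-tail event for $\sum_i B_i$, and the multiplicative Chernoff bound $\exp(-\mu^2\E[S]/2)$ with $\mu = 1-\frac{1}{2p}$. Your reduction to $\Pr[S\le d]$ is in fact the right one --- the paper writes $\Pr[Z<d]$, which overlooks that exactly $d$ samples in $(l,r)$ need not put the median there --- and since the Chernoff lower tail holds with a non-strict inequality, the stated bound is unaffected.
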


\begin{proof}
  We observe that if there exists at least $d$ instances of $x_i$ in $x_1,\ldots, x_{2d}$ lie between $l$ and $r$, then the median of $x_1,\ldots, x_{2d}$ also in the range $(l,r)$. 
  Let $B_i$ be a Bernoulli variable that $B_i = 1$ if and only if $x_i\in(l,r)$. 
  The number of instances $x_i$ that lie in the range $(l,r)$ can be represented by $Z = \sum_{i=1}^{2d}B_i$, or in other words, $Z \sim Binom(2d, p)$. 
  We then have $\Pr[Y \notin (l,r)] \leq \Pr[Z < d]$. 
  By using Chernoff bound, we have: $\Pr[Z < (1-\alpha)2dp] \leq \exp(-\alpha^2dp)$. 
  Setting $\alpha = 1- \frac{1}{2p}$ yields $(1-\alpha)2dp = (1/(2p))2pd = d$, or in other words, we have $\Pr[Z < d] \leq \exp(-p(1-\frac{1}{2p})^2d)$, and the Lemma follows.
\end{proof}

\begin{algorithm}[ht]
  \caption{\textsc{Clique.Approx} \\ 
    \textbf{Input:} $G,k, \gamma, \delta_\kcl$ \\
    \textbf{Output:} An $(\gamma, n^{-2}\delta_\kcl )$-upper approximation of $f_\kcl$}

  \begin{algorithmic}[1]
    \STATE Let $\alpha = \frac{e^\gamma-1}{e^\gamma+1}$
    \STATE $d = 24(2\log{n} - \log{\delta_\kcl})$ 
    \FOR {$i = 1\ldots 2d$}
    \STATE $x_i = \text{\algK}(G, k, \alpha)$ 
    \ENDFOR
    \STATE $Y = median(x_1,\ldots, x_{2d})$
    \STATE \textbf{Return} $\frac{Y}{1-\alpha}$
  \end{algorithmic}

  \label{alg:clique-approx}
\end{algorithm}

\begin{lemma} With probability at least $1-n^{-2}\delta_\kcl$, $\textsc{Clique.Approx}(G, \gamma,\delta_{\kcl})$ outputs an $(\gamma, n^{-2}\delta_\kcl)$-upper approximation of $f_\kcl(G)$.
  \label{lemma:k-approx}
\end{lemma}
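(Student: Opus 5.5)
The argument has two parts: a median-amplification step that gives an accurate estimate with high probability, and a rescaling step that turns a two-sided multiplicative guarantee into an upper approximation.

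\textbf{Step 1 (amplification).} By Lemma~\ref{lemma:k-clique-alg}, each run satisfies $x_i = \algK(G,k,\alpha) \in (1\pm\alpha)\fkcl(G)$ with probability $p \geq 2/3$. The runs are independent. Apply Lemma~\ref{lemma:median} with $l=(1-\alpha)\fkcl(G)$ and $r=(1+\alpha)\fkcl(G)$. For $p=2/3$ we get $1-\frac{1}{2p}=1/4$, so the failure probability is at most
\[
\exp\!\left(-\tfrac{2}{3}\cdot\tfrac{1}{16}\,d\right)=\exp(-d/24).
\]
With $d = 24(2\log n - \log\delta_\kcl)$ this equals $n^{-2}\delta_\kcl$, taking $\log$ to be the natural logarithm. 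If the algorithm's success probability exceeds $2/3$, the bound only improves, since $p(1-\frac1{2p})^2$ is increasing in $p$ for $p>1/2$.

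Two details need care here. First, Lemma~\ref{lemma:median} is stated for open intervals, whereas the guarantee is for the closed interval $[l,r]$. I would note that the median argument is indifferent to this: if at least $d$ of the $2d$ samples lie in $[l,r]$, the median lies there too, so closed endpoints work identically. Second, when $\fkcl(G)=0$ the interval degenerates to the point $\{0\}$, and the same argument still applies.

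\textbf{Step 2 (rescaling).} Condition on the event $Y\in[(1-\alpha)\fkcl(G),(1+\alpha)\fkcl(G)]$. Then the returned value $Y/(1-\alpha)$ satisfies
\[
\fkcl(G)\le \frac{Y}{1-\alpha}\le \frac{1+\alpha}{1-\alpha}\fkcl(G).
\]
With $\alpha=\frac{e^\gamma-1}{e^\gamma+1}$ we have $1+\alpha=\frac{2e^\gamma}{e^\gamma+1}$ and $1-\alpha=\frac{2}{e^\gamma+1}$, so $\frac{1+\alpha}{1-\alpha}=e^\gamma$ exactly. This is precisely the definition of a $(\gamma,n^{-2}\delta_\kcl)$-upper approximation.

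The only delicate point is the constant bookkeeping in Step 1, namely checking that the chosen $d$ matches the exponent of Lemma~\ref{lemma:median} at $p=2/3$. Everything else is direct substitution.
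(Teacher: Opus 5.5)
Your proposal is correct and follows the paper's proof. Both apply Lemma~\ref{lemma:median} with $p=2/3$, so that $p(1-\frac{1}{2p})^2=\frac{1}{24}$ and the failure probability is $\exp(-d/24)=n^{-2}\delta_\kcl$. You also spell out the rescaling by $1/(1-\alpha)$ and the identity $\frac{1+\alpha}{1-\alpha}=e^\gamma$, which the paper leaves implicit in ``the Lemma follows.''

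One minor nit, inherited from the paper's Lemma~\ref{lemma:median}: with $2d$ samples you need strictly more than $d$ of them in $[l,r]$ to force the median there, because exactly $d$ does not suffice. This does not change the result, since the lower-tail Chernoff bound controls $\Pr[Z\le d]$ with the same exponent.
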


\begin{proof}
Due to Lemma~\ref{lemma:k-clique-alg}, each $x_i\in(1\pm\alpha) f_\kcl(G)$ with probability at least $2/3$.
Applying Lemma~\ref{lemma:median} for all $x_1, \ldots, x_{2d}$, their median $Y \in (1\pm\alpha)f_{\kcl}(G)$ with probability at least $1-\exp(-p(1-\frac{1}{2p})^2d)$. 
Substituting $p = \frac{2}{3}, d = 24(2\log{n} - \log{\delta_\kcl})$, the failure probability is $\exp(-p(1-\frac{1}{2p})^2d) = \exp(-2/3(1-\frac{1}{4/3})^2d) = \exp(-1/24 \cdot 24(2\log{n} - \log{\delta_\kcl})) = \exp(-2\log{n} + \log{\delta_\kcl})= n^{-2}{\delta_\kcl}$ and the Lemma follows.
\end{proof}

In the rest of the Section, we calculate and analyze all other components of the smooth upper bound on the local sensitivity of $\fkcl$ using the definition of $\widetilde{LS}(G)$ as in Definition~\ref{def:approx-ls-blackbox}.

% \begin{lemma}
%   $\widetilde{LS}(G)$ is a $(\gamma,\delta_{\kcl})$-upper approximation of $LS_\kcl(G)$.
%   \label{lemma:local-approx}
% \end{lemma}

\localapprox*
\noindent
%\textbf{Proof of Lemma \ref{lemma:local-approx}}

\begin{proof}
Fix $u, v$, because of Lemma~\ref{lemma:k-approx}, we have $f_{(k-2)\mathbb{C}}(G(A_{uv})) \leq \textsc{Clique.Approx}(G(A_{uv}),\ldots) \leq e^{\gamma}f_{(k-2)\mathbb{C}}(G(A_{uv}))$ with probability $1-n^{-2}\delta_\kcl$. Taking the union bound on all pair $u,v$, there are $\binom{n}{2} < n^2 $ such pairs, each $f_{(k-2)\mathbb{C}}(.)$ fails with probability $n^{-2}\delta_\kcl$, $LS_\kcl(G) \leq \widetilde{LS}(G) \leq e^{\gamma}LS_\kcl(G)$ with probability $1-\delta_\kcl$.
\end{proof}

\begin{lemma}
  The expected running time of Algorithm~\ref{alg:clique-approx} is \[\mathbf{T}_{\operatorname{\textsc{Clique.Approx}}}(G, k) = O\left(\left(\frac{n}{\fkcl(G)^{1/k}} + \frac{m^{k/2}}{\fkcl(G)}\right)poly(\log{n}, 1/\alpha, k, \log{1/\delta})\right)\].
  \label{lemma:alg-approx-clique-runtime}
\end{lemma}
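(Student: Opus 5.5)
The running time of Algorithm~\ref{alg:clique-approx} consists of a loop of $2d$ independent calls to \algK$(G,k,\alpha)$, followed by a median computation over $2d$ numbers and a final rescaling. I would bound each piece separately and sum them.

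First, I invoke the running-time guarantee of \algK{} from Corollary~20 of~\cite{eden2018approximating}, the same result cited in Lemma~\ref{lemma:k-clique-alg}. A single call with approximation parameter $\alpha$ has expected running time
\[
O\left(\left(\frac{n}{\fkcl(G)^{1/k}} + \frac{m^{k/2}}{\fkcl(G)}\right)poly(\log n, 1/\alpha, k)\right).
\]
This is the $T_{\textsc{Eden}}(G,k)$ entry of Table~\ref{tab:runtime-comparison}. By linearity of expectation, the $2d$ calls cost $2d$ times this amount. The calls are independent, but linearity does not even need independence.

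Second, I account for the factor $d$ and the overhead. Since $d = 24(2\log n - \log \delta_\kcl) = O(\log n + \log(1/\delta))$, the multiplier is absorbed into $poly(\log n, \log 1/\delta)$. The median of $2d$ values costs $O(d\log d)$ by sorting, or $O(d)$ by linear-time selection. Computing $\alpha$ and dividing by $1-\alpha$ take $O(1)$. All of these are dominated by the polylogarithmic factor, since the main term is at least $1$ whenever $\fkcl(G)\ge 1$.

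Finally, I would check that $1/\alpha$ fits the stated form. Here $\alpha = \frac{e^\gamma-1}{e^\gamma+1} = \tanh(\gamma/2)$, so $1/\alpha = \Theta(1/\gamma)$ for small $\gamma$, and the dependence on $\alpha$ stays polynomial as claimed.

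The main obstacle is the degenerate case $\fkcl(G)=0$, where the bound as written is undefined. The expected time of \algK{} in~\cite{eden2018approximating} is stated in terms of the true count, or of a lower bound on it. I would adopt the convention that the bound is read with $\max(\fkcl(G),1)$, which matches how the guarantee of~\cite{eden2018approximating} is applied. An alternative is to note that \algK{} terminates its geometric search for the count within the stated budget when the count is tiny.
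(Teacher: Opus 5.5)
Your proposal is correct and follows the paper's proof. Both cite the expected running time of \algK{} from \cite{eden2018approximating} and multiply it by the $2d = O(\log n + \log(1/\delta))$ invocations, absorbing that factor into the polylogarithmic term; your extra bookkeeping (the median cost, $1/\alpha = \Theta(1/\gamma)$, and the $\fkcl(G)=0$ convention) is harmless detail that the paper omits.
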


% \anil{change this to \cite{jain2017fast}, which is easier to see is sublinear}

\begin{proof}
  Theorem~1 of~\cite{eden2018approximating} states that the \algK~algorithm takes $O\left(\left(\frac{n}{\fkcl(G)^{1/k}} + \frac{m^{k/2}}{\fkcl(G)}\right)poly(\log{n}, 1/\alpha, k)\right)$ for the expected running time, where $\fkcl(G)$ is the number of $k$-cliques in the input graph $G$. In Algorithm~\ref{alg:clique-approx}, we invoke in total $O(\log{n} + \log{1/\delta})$ instances of the \algK~algorithm in steps $2-4$, which multiplies the running time by a factor of $O(\log{n}+\log{1/\delta})$.
\end{proof}

% \begin{theorem}
% \label{theorem:total-time-blackbox}
%   The calculation of $\tilde{S}_{\beta}(G)$ using the black-box method takes \[O\left( \left[ \sum_{u,v \in V(G)}\mathbf{T}_{\operatorname{\textsc{Clique.Approx}}}(G(A_{uv}), k-2) + m + n + \frac{||W||_1}{\hat{a}^2}\right] poly(\log{m}, \log{n}, 1/\alpha, k, \log{1/\delta})\right)\] expected running time, with $W$ as defined in Definition~\ref{def:W}.
% \end{theorem}

\noindent
%\textbf{Proof of Theorem \ref{theorem:total-time-blackbox}}

\totaltimeblackbox*

\begin{proof}
  By the definition of $\tilde{S}_{\beta}$ as above, we need to calculate $\tilde{I}_t(G)$ for up to $T = O(poly(k))$ values of $t$, which is accounted to the $poly(k)$ component of the running time.
  As being defined in Theorem~\ref{theorem:tildeS}, for each $\tilde{I}_t(G)$, we need to calculate $\widetilde{LS}(G)$, which requires $n^2$ invocations of Algorithm~\ref{alg:clique-approx} according to Definition~\ref{def:approx-ls-blackbox} with the clique size $k-2$, each for each pair of vertices $u,v$. 
  The calculation of $\tilde{a}$ (see Section~\ref{sec:fastA}) takes $O\left(m + n + \frac{||W||_1\log{m}\log{n}}{\hat{a}^2}\right)$, and the Theorem follows.
\end{proof}

\subsection{Estimation of $\ahat$}
\label{sec:fastA}

\begin{definition}
  \label{def:W}
  Let $W = \mathbb{N}^{n\times n}, W_{u,v} = deg(u) * deg(v)$ if $(u,v)\in E(G)$ and $W_{u,v} = 0$ otherwise.
\end{definition}

\begin{algorithm}[H]
  \caption{\textsc{FastA}$(G, \gamma, \delta)$\\
    %\textbf{Input:} $G(E, V), \delta, \theta < \frac{1}{2}$\\
    \textbf{Input:} $G(E, V), \gamma, \delta$\\
    \textbf{Output:} A $(\gamma, \delta)$-upper approximation of $\ahat$
  }
  \label{alg:post-processing}
  \begin{algorithmic}[1]
  \STATE $\tau_k := Algorithm~1~\cite{nguyen2023faster}(G)$
  \STATE $\tau := \tau_k / 3$
  \STATE $\theta := \frac{e^{2\gamma} -1}{e^{2\gamma}+1}$
  \STATE $c := max(\log_n(1/(2\delta)) + 2, 12\theta^2)$
 % \dungnote{May have to check $c$ to satisfies condition of Lemma 8, 9}
  \STATE Calculate $s := \frac{3c\log{n}\Vert W \Vert}{\theta^2\tau}$
  \STATE $x := Algorithm~4~\cite{nguyen2023faster}(G, s)$
  \STATE  $\tilde{a}^2_{max}:=\frac{\max_{ij} x_{ij}\frac{\Vert W \Vert}{s}}{1-\theta}$ %Set $\beta = \frac{2}{2\log(2/\delta)}$
  \STATE \textbf{Return}  $\tilde{a}_{max}$ 
  \end{algorithmic}
\end{algorithm}

\begin{lemma}
  \label{lemma:ahat-atil}
  Assume $\ahat \geq 2(k'-2)$, $k' \geq 1$ and $\ahat \leq \atil \leq e^{\gamma/(2k')}\ahat$.
  We have for every $i\geq 0$
  \begin{align*}
    \binom{\atil + i}{k' -1} \leq e^\gamma\binom{\ahat + i}{k' -1}.
  \end{align*}
\end{lemma}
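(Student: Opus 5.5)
The plan is to compare the two binomials factor by factor, using the falling-factorial form
\[
\binom{x+i}{k'-1}=\frac{1}{(k'-1)!}\prod_{j=0}^{k'-2}(x+i-j),
\]
which is also how we read the binomial when $\atil$ is not an integer. Taking the ratio reduces the claim to showing
\[
\prod_{j=0}^{k'-2}\frac{\atil+i-j}{\ahat+i-j}\le e^{\gamma}.
\]

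\textbf{Degenerate cases.} I dispose of these first.
\begin{itemize}
\item If $k'=1$, both sides equal $1$ and the claim is immediate.
\item If $\ahat=0$, the hypothesis $\ahat\le\atil\le e^{\gamma/(2k')}\ahat$ forces $\atil=0$, and again there is nothing to prove.
\end{itemize}
So I may assume $\ahat>0$ and $k'\ge 2$.

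\textbf{Each factor is positive.} Since $j\le k'-2$ and $\ahat\ge 2(k'-2)$, we have $j\le \ahat/2$. Hence $\ahat+i-j\ge \ahat-j\ge \ahat/2>0$. This also makes every numerator $\atil+i-j$ positive, so all factors are positive. This is the point where the assumption $\ahat\ge 2(k'-2)$ is used, and it also prevents sign problems for real-valued $\atil$.

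\textbf{Bounding one factor.} Write
\[
\frac{\atil+i-j}{\ahat+i-j}=1+\frac{\atil-\ahat}{\ahat+i-j}.
\]
Because $\atil\ge\ahat$, the correction term is nonnegative and is largest when $i=0$. Using $\ahat-j\ge\ahat/2$, we get
\[
\frac{\atil+i-j}{\ahat+i-j}\le 1+\frac{2(\atil-\ahat)}{\ahat}\le 1+2\bigl(e^{x}-1\bigr),
\]
where $x=\gamma/(2k')$.

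\textbf{The elementary inequality.} Next I use
\[
1+2(e^{x}-1)\le e^{2x},
\]
which holds because $e^{2x}-1-2(e^{x}-1)=(e^{x}-1)^2\ge 0$. So each factor is at most $e^{2x}=e^{\gamma/k'}$.

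\textbf{Conclusion.} There are $k'-1$ factors, so the product is at most $e^{\gamma(k'-1)/k'}\le e^{\gamma}$, which gives the lemma.

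\textbf{Main obstacle.} The delicate step is the factor bound, specifically ensuring every denominator $\ahat+i-j$ is at least a constant fraction of $\ahat$. A naive bound would fail when $\ahat$ is close to $k'$, since the ratio $(\atil-j)/(\ahat-j)$ can blow up there. The hypothesis $\ahat\ge 2(k'-2)$ is exactly what secures the factor-$2$ loss, and the choice of approximation factor $\gamma/(2k')$ is what absorbs that factor of $2$.
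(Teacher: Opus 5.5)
Your proof is correct and is essentially the paper's argument: you write the ratio as $\prod_{j=0}^{k'-2}\frac{\atil+i-j}{\ahat+i-j}$, use $\ahat\ge 2(k'-2)$ to bound each factor by $2e^{\gamma/(2k')}-1$, and close with $(e^{\gamma/(2k')}-1)^2\ge 0$. The difference is a simplification: bounding every denominator directly by $\ahat/2$ removes the need for the paper's reduction to $i=0$ (monotonicity in $i$), its worst-factor step in $j$, and its separate $k'=2$ case, and you also handle $\ahat=0$ explicitly, which the paper leaves implicit.
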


\begin{proof}
  The Lemma is trivial when $k' = 1$, as the binomials on both size are $1$, and the inequality reduces to $1\leq e^\gamma$. 
  We then consider $k'\geq 2$.
  Let:
  \begin{align*}
    R_i &= \frac{\binom{\atil + i}{k' -1}}{\binom{\ahat + i}{k' -1}}
    =\prod_{j=0}^{k'-2} \frac{\atil + i - j}{\ahat + i -j}.
  \end{align*}
  We show that $R_i$ is non-increasing in $i$, i.e.:
  \begin{align*}
    \frac{R_{i+1}}{R_i} &= \frac{\atil + i + 1}{\ahat + i + 1} \times \frac{\ahat + i - (k'-2)}{\atil + i - (k'-2)}
                          \leq 1,
  \end{align*}

  which is true because $(\atil + i + 1)(\ahat + i - k' + 2) \leq (\ahat + i + 1)(\atil + i - k' + 2)$ , which is also true because $-(k'-1)\atil \leq -(k'-1)\ahat$, since $\atil \geq \ahat$.
  Since $R_i$ is non-increasing in $i$, $R_0$ takes the largest value among $R_i$.
  We have:

  \begin{align*}
    R_0 &= \prod_{j=0}^{k' -2}\frac{\atil - j}{\ahat -j}
        \leq \left(\frac{\atil - (k' - 2)}{\ahat - (k' - 2)}\right)^{k' - 1}\\
        &\leq \left(\frac{e^{\gamma/(2k')}\ahat - (k' - 2)}{\ahat - (k' - 2)}\right)^{k' - 1}\\
        &= \left(e^{\gamma/(2k')} + \frac{(k'-2)(e^{\gamma/(2k')}-1)}{\ahat-(k'-2)}  \right)^{k'-1},
  \end{align*}

  where the first inequality is because the function $\frac{\atil - j}{\ahat -j}$ is non-decreasing in $j$ as $\atil \geq \ahat$, so the greatest component is $\frac{\atil - (k'-2)}{\ahat - (k'-2)}$, and the second inequality is from the Lemma's assumption. 
  It is trivial to see that $R_0 \leq e^\gamma$ for $k' = 2$ as the second term of the upper bound of $R_0$ above is $0$.
  When $k' \geq 3$, $\ahat - (k' - 2) \geq k' - 2$, and the term $\frac{(k'-2)(e^{\gamma/(2k')}-1)}{\ahat-(k'-2)} \leq e^{\gamma/(2k')}-1$.
  Therefore $R_0 \leq (2e^{\gamma/(2k')} - 1)^{k'-1}$.
  We claim that $2e^{\gamma/(2k')} - 1 \leq e^{\gamma/(k'-1)}$, which yields $R_0 \leq e^\gamma$ and the Lemma follows.
  The rest of the proof, we show that $2e^{\gamma/(2k')} - 1 \leq e^{\gamma/(k'-1)}$.
  We have:
  \begin{align*}
    e^{\gamma/(k'-1)} - 2e^{\gamma/(2k')} + 1 &= e^{\frac{2\gamma}{2k'}\frac{k'}{k'-1}} - 2e^{\frac{\gamma}{2k'}} + 1 \\
    &\geq e^{\frac{2\gamma}{2k'}} - 2e^{\frac{\gamma}{2k'}} + 1 \\
    &= (e^{\frac{\gamma}{2k'}} - 1)^2 
    \geq 0,
  \end{align*} 
  which completes the proof.
\end{proof}

\subsection{Producing the \Turan}
\label{sec:turan-shadow}

\begin{algorithm}[H]
  \caption{\textsc{Shadow-Finder}$(G, k)$\\
    \textbf{Input:} $G(E, V), k$\\
    \textbf{Output:} A Prefixed \Turan $\Ess_k$
  }
  \label{alg:shadow-finder}
  \begin{algorithmic}[1]
    \STATE Init $\mathbf{T} = \{\emptyset, V, k\}$ and $\Ess_k = \emptyset$
    \WHILE {$\exists (P, S,\ell)\in \mathbf{T}$ such that $\rho_2(S) < 1-1/(\ell-1)$}
    \STATE Construct the degeneracy DAG $\D(G|_S)$
    \STATE Let $N^+_s$ denote the outneighborhood within $\D(G|_S)$ of $s\in S$
    \STATE Delete $(P, S, \ell)$ from $\mathbf{T}$
    \FOR {Each $s\in S$}
    \IF {$\ell \leq 2$ or $\rho_2(N^+_s) > 1 - 1/(\ell-2)$} 
    \STATE Add $(P\cup\{s\}, N^+_s, \ell-1)$ to $\Ess_k$
    \ELSE
    \STATE Add $(P\cup\{s\},N^+_s, \ell-1)$ to $\mathbf{T}$
    \ENDIF
    \ENDFOR
    \ENDWHILE
    \STATE \textbf{Return} $\Ess_k$
  \end{algorithmic}
\end{algorithm}

We present here the basic concepts of \Turan.
For the full details related to \Turan, we refer readers to~\cite{10.1145/3038912.3052636, jain2020provably}.

Let $\D(G)$ denote the degeneracy of graph $G$.

\begin{definition}
  For a graph $G$, a degeneracy ordering is a permutation of $V$ given as $v_1, \ldots, v_n$  such that for each $i\leq n$, $v_i$ is the minimum degree vertex in the subgraph induced by $v_i, \ldots, v_n$.
  The number of neighbors of $v_i$ in $v_i,\ldots, v_n$ the core number of $v_i$.
  $\D(G)$, or the degeneracy of $G$, is the largest core number in $G$.
\end{definition}

\begin{definition}
  Let $C_k(G)$ be the set of all $k$-cliques in G.
  A $k$-clique Prefixed-Shadow $\Ess_k$ for a graph $G$ is a set of triples $\{(P_i, S_i, \ell_i)\}$ where $P_i\subseteq V, S_i \subseteq V, \ell_i \in \N$ such that $\forall (P_i, S_i, \ell_i) \in \Ess_k, \forall c\in C_{\ell_i}(S_i), P_i \cup c$ is a unique $k$-clique in G and there is a bijection between $C_k(G)$ and $\cup_{(P_i, S_i, \ell_i) \in \Ess_k}\cup_{c\in C_{\ell_i}(S_i)}P_i \cup c$.
  If the multiset $\{(S_i, \ell_i)\}$ is such that $\forall(S_i, \ell_i), \rho_2(S_i)> 1-1/(\ell_i-1)$ where $\rho_2(S_i)$ represents the edge density of $S_i$, then $\Ess_k$ is a $k$-clique Prefixed-\Turan of $G$.
\end{definition}

Algorithm Shadow-Finder$(G,k)$ (Algorithm 2 of~\cite{10.1145/3038912.3052636}), that takes $G$ and $k$ as the input and returns $\Ess_k$.
Shadow-Finder$(G,k)$ takes $O(\D(G)size(\Ess_k) + m + n)$ time.

\begin{theorem}
  (Theorem 5.4 of~\cite{10.1145/3038912.3052636})
  The running time of Shadow-Finder$(G,k)$ is $O(\D(G)size(\Ess_k) + m + n)$.
  \label{theorem:shadow-runtime}
\end{theorem}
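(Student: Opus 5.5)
The plan is to bound the cost of Shadow-Finder$(G,k)$ by charging all of its work to the recursion tree it implicitly builds. The root is $(\emptyset, V, k)$. Each triple $(P,S,\ell)$ removed from $\mathbf{T}$ is an internal node whose children are the $|S|$ triples $(P\cup\{s\}, N^+_s, \ell-1)$. The leaves are exactly the triples placed in $\Ess_k$. I take $size(\Ess_k)$ in the sense of \cite{10.1145/3038912.3052636}, namely $\sum_{(P,S,\ell)\in\Ess_k}|S|$, counting each leaf as at least $1$.

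First I handle the root separately. Computing a degeneracy ordering of $G$ and orienting it into the DAG $\D(G)$ takes $O(m+n)$ by the standard bucket-queue peeling. Computing the densities $\rho_2(N^+_s)$ for all $s\in V$ takes $\sum_s |E(G|_{N^+_s})|$ time, which is $O(m\,\D(G))$ using the usual oriented triangle-enumeration bound. Since $m\le n\,\D(G)$ and every root child has $|N^+_s|\le\D(G)$, I plan to charge this $O(\D(G))$-per-child work to the children themselves.

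Second, the key structural fact for non-root nodes. Every non-root $S$ is an out-neighbourhood in a degeneracy DAG of an induced subgraph of $G$. Degeneracy is monotone under taking induced subgraphs, so $|S|\le\D(G)$ and $\D(G|_S)\le\D(G)$. Consequently, processing a non-root internal node costs $O(|E(G|_S)|+|S|)=O(|S|\,\D(G))$ to build its DAG. Computing each child's density costs $O(|N^+_s|\,\D(G))$, which I charge to that child. Altogether every node $x$ of the tree is charged $O(\D(G)\cdot(|S_x|+1))$ beyond the $O(m+n)$ root term.

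The main obstacle is the third step: converting the sum over all tree nodes into a sum over leaves only. I would argue that every internal node has at least two children. A node is expanded only when $\rho_2(S)<1-1/(\ell-1)\le 1$, so $S$ is not a clique and in particular $|S|\ge 2$, and it spawns $|S|$ children. Hence the number of internal nodes is at most the number of leaves. The remaining difficulty is bounding $\sum_{\text{internal } x}|S_x|$, which is exactly the number of non-root nodes. For this I would note that the children of $x$ have $\sum_s |N^+_s| = |E(G|_{S_x})|$. I would then follow the argument of \cite{10.1145/3038912.3052636}, charging each internal node's $|S_x|$ to its (at least two) children and unrolling down to the leaves. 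This gives a total of $O(\D(G)\cdot size(\Ess_k) + m+n)$. If that unrolling fails to close cleanly, I would fall back on citing Theorem 5.4 of \cite{10.1145/3038912.3052636} directly for this step.
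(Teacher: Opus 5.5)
The paper does not prove this statement; it imports it verbatim as Theorem~5.4 of \cite{10.1145/3038912.3052636}. Your recursion-tree charging is therefore a self-contained replacement for a citation, and it is essentially sound.

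The step you flag as the main obstacle is in fact already closed by your own two observations, so you need neither the ``unrolling'' nor the fallback citation. Each internal node $x$ has exactly $|S_x|$ children, so $\sum_{x\ \mathrm{internal}}|S_x|$ equals the number of non-root nodes. Each internal node has at least two children, so there are at most $2L$ nodes in total, where $L$ is the number of leaves. Hence
\[
\sum_{x\neq\mathrm{root}}\D(G)\,(|S_x|+1)\;\le\;\D(G)\Bigl(\sum_{x\ \mathrm{leaf}}|S_x|+4L\Bigr)=O\bigl(\D(G)\,size(\Ess_k)\bigr),
\]
provided every leaf counts at least $1$ in $size(\Ess_k)$.

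You should drop the identity $\sum_s|N^+_s|=|E(G|_{S_x})|$. The only constraint on an expanded node is an upper bound on its density, so $|E(G|_{S_x})|$ can be smaller than $|S_x|$ (even zero), and sizes cannot be pushed from children up to parents that way. Charging one unit to each child is all that is needed. There is also a small slip: the root's density tests cost $\sum_s\sum_{u\in N^+_s}|N^+_u|\le\D(G)\,m$, not $\sum_s|E(G|_{N^+_s})|$. Your per-child charge $O(\D(G)\,|N^+_s|)$ is unaffected.

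What your route buys over the bare citation is that it exposes the conventions the bound depends on, which this paper leaves implicit (it never defines $size(\Ess_k)$). State them explicitly:
\begin{itemize}
\item Every output triple contributes at least $1$ to $size(\Ess_k)$. This is what pays for the $4L$ term, since leaves with $S=\emptyset$ would otherwise contribute nothing.
\item $\rho_2(S)=1$ for $|S|\le 1$. This is what makes ``not a clique'' imply $|S|\ge 2$, and hence at least two children.
\item The loop condition is the non-strict $\rho_2(S)\le 1-1/(\ell-1)$, the exact complement of the test that sends a triple to $\Ess_k$. This guarantees every triple placed in $\mathbf{T}$ is eventually expanded.
\end{itemize}
The third point matters because this paper's Shadow-Finder prints a strict inequality. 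Under it, a triple with $\ell=2$ and edgeless $S$ (created from a parent with $\ell=3$ and $\rho_2(N^+_s)=0$) stays in $\mathbf{T}$ forever. It is charged for its density test but is neither expanded nor output, so your identification of the leaves with $\Ess_k$ fails there.
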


Next, algorithm \Sample~allows us to sample a clique from $\Ess_k$ with uniform probabilities.
The algorithm takes a \Turan~ as input, and for each call, returns a set of $k$ vertices.

\begin{algorithm}[H]
  \caption{\Sample$(\Ess_k)$\\
    \textbf{Input:} \Turan~$\Ess_k$\\
    \textbf{Output:} A $k-vertex$ set 
  }
  \label{alg:shadow-sample}
  \begin{algorithmic}[1]
    \STATE $w(\Ess_{k}) := \sum_{(P,S,l)\in \Ess_{k}}{|S|\choose l}$
    \STATE Set probability distribution $D$ over $\Ess_k$ such that $(P, S, \ell)\in \Ess_k$ is sampled with probability $\binom{|S|}{\ell}/w(\Ess_k)$
    \STATE Sample a $(P, S, \ell)$ from $D$
    \STATE Choose a uniform at random $\ell$-tuple $c$ from $S$
    \STATE Let $B = P \cup \{c\}$
    \STATE \textbf{Return} $B$
  \end{algorithmic}
\end{algorithm}

\begin{theorem}
\label{theorem:shadow-sample}
(Claim 4.3 of~\cite{10.1145/3038912.3052636})
The probability of any $k$-clique in $G$ being returned by a call to Sample is $1/w(\Ess_k)$.
\end{theorem}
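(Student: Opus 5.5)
The plan is to prove that a single call to \Sample$(\Ess_k)$ returns any fixed $k$-clique $K$ of $G$ with probability exactly $1/w(\Ess_k)$. The argument is a two-stage probability computation, with the defining bijection of the Prefixed-Shadow doing the combinatorial work.

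First, fix $K\in C_k(G)$. By the definition of a $k$-clique Prefixed-Shadow, there is a bijection between $C_k(G)$ and the disjoint union, over triples $(P_i,S_i,\ell_i)\in\Ess_k$, of the sets $\{P_i\cup c : c\in C_{\ell_i}(S_i)\}$. So there is exactly one triple $(P^*,S^*,\ell^*)$ and exactly one $\ell^*$-clique $c^*\subseteq S^*$ with $K=P^*\cup c^*$. Every other pair (triple, $\ell$-subset) produces either a non-clique or a different clique. Consequently, the event ``\Sample returns $K$'' coincides with the event ``the triple $(P^*,S^*,\ell^*)$ is drawn and then the subset $c^*$ is drawn.''

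The main point to check here is that a non-clique subset $c$ in some other triple cannot produce the vertex set $V(K)$. The output $B=P\cup c$ is just a $k$-vertex set, so this could in principle happen. I would handle it by requiring $P_i\cup c$ to be a $k$-clique for \emph{every} $\ell_i$-subset whose union is a clique. If $P\cup c=V(K)$ and $K$ is a clique, then $c$ is itself a clique in $S$, so $(P,S,\ell,c)$ lies in the bijection's domain and must be the unique preimage $(P^*,S^*,\ell^*,c^*)$. Hence the event identification above is exact.

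Second, I would compute the probability. Step 3 of \Sample picks the triple with probability $\binom{|S^*|}{\ell^*}/w(\Ess_k)$. Step 4 picks a uniform $\ell^*$-subset of $S^*$, which equals $c^*$ with probability $1/\binom{|S^*|}{\ell^*}$. The two draws are sequential and conditionally independent, so their product gives
\[\Pr[\Sample(\Ess_k)=K]=\frac{\binom{|S^*|}{\ell^*}}{w(\Ess_k)}\cdot\frac{1}{\binom{|S^*|}{\ell^*}}=\frac{1}{w(\Ess_k)}.\]
The distribution $D$ in Step 2 is valid because $w(\Ess_k)=\sum\binom{|S|}{\ell}$. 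Also, $\binom{|S^*|}{\ell^*}\ge 1$ since $c^*\subseteq S^*$, so the cancellation is legitimate. Note that the Tur\'an density condition on $\rho_2(S_i)$ plays no role in this uniformity claim; it only matters for the success probability and hence the efficiency of sampling.
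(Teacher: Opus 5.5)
Your proposal is correct and is essentially the standard argument behind Claim 4.3 of the cited Tur\'an-shadow work. This paper gives no proof of its own and simply imports that claim. The argument is that the bijection singles out a unique triple $(P^*,S^*,\ell^*)$ and a unique $\ell^*$-subset $c^*$ representing the clique, so the probability is $\binom{|S^*|}{\ell^*}/w(\Ess_k)\cdot 1/\binom{|S^*|}{\ell^*}=1/w(\Ess_k)$.

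Your extra check, that no other (triple, subset) pair can output the same vertex set, is a worthwhile detail because \textsc{Sample} returns only the set $P\cup c$. It needs no additional ``requirement'', though: it follows from the definition of the prefixed shadow together with injectivity of the bijection, exactly as your next sentence shows.
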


%%% Local Variables:
%%% mode: latex
%%% TeX-master: "main"
%%% End:

\subsection{Maximum point-wise statistics estimation}
\label{sec:point-statistics}

We present here a generic framework to estimate the maximum of all points statistics, in which for each entity $i \in J$, where $J$ is the data universe, there exists a statistic $0 \le X_{i} \leq U$.
We note that $J$ can be inferred from the $\Xhat$, and not necessarily presented as an input of the Algorithms in this Section.
The goal is to estimate the maximum statistic $\Xhat = \max_{i\in J}X_{i}$ within a constant factor and with a pre-determined failure probability.
Assume that we have access to some estimator $\Xtil$ with an approximation factor $\theta$, a failure probability $\delta$, and the number of sampling iteration $s$, with the following property (point-wise accurate):

\pointwiseaccurate*

We note that $\Xtil$ can take $\theta, \delta, s$ as parameters, and returns $\Xtil_i$ for all $i\in J$, denoted as $\Xtil_{i:i\in J} = \Xtil(\theta,\delta,s)$ or $\Xtil(\theta,s)$, when the failure probability $\delta$ can be inferred from the other parameters.
An approach is to run the estimator with some iteration $s$, and takes the largest $\max_{i\in J}\Xtil_i$.
We note that the accuracy of the estimator highly depends on the number of samples $s$.
Also, the optimal value of $s$ is determined by $\Xhat$, which is also the quantity we are estimating.
Because of this, we need to guess $\Xhat$, and use the guessed value to determine $s$.
Guessing $\Xhat$ too small makes us to run many sampling iterations, reducing the efficiency of the estimator.
Guessing $\Xhat$ too high (e.g., close to $U$) helps with fewer iterations. 
However, too few iterations may not guarantee a good estimation of $X_i$.

We adapt the method by~\cite{nguyen2023faster} for estimating the maximum of all pairs dot products, generalizing it to any $(\theta,\delta)$-point-wise accurate estimator $\Xtil$.
The main idea is to start with an initial guess $\tau$ of $\Xhat$ at $U$, the maximum possible value, and use it to calculate the number of iteration $s$.
With that, we run the estimator $\Xtil$ and get the estimated point-wise statistic $\max_{i\in J}\Xtil_i$.
We then compare $\max_{i\in J}\Xtil_i$ to our guess $\tau$.
If they are close, then $\tau$ is a good guess for $\Xhat$.
If not, then we guess $\tau$ too high, then we reduce $\tau$ by a constant factor and repeat the process.
Since we start guessing $\tau = U$, we invoke at most $O(\log{U})$ instances of the estimator.
With a good guess of $\tau$, i.e., up to some constant factor of $\Xhat$, we can run the estimator one last time to find a good estimation of $\Xhat$ up to any approximation factor.

\begin{algorithm}[H]
  \caption{$\tau$ estimation\\
    \textbf{Input:} $\Xtil, U, W, \delta$\\
    %\textbf{Output:} Subset $S\subseteq V(G)$ 
    \textbf{Output:}  $\tau_t: \tau_t < \Xhat$
  }
  \label{alg:tau-estimation}
  \begin{algorithmic}[1]
    \STATE $t := 1, \tau_t := U, \theta := \frac{1}{2}, \delta' = \delta / \log_{4/3}{U}$
  \WHILE {$\tau_t \geq 1$}
    \STATE $s := \frac{W\log{2/(|J|^{-1}\delta')}}{\theta^2\tau_t}$
    \STATE $\Xtil_{i:i\in J} := \Xtil(\theta, s)$
    \IF {$\max_{i} \Xtil_{i}  < \frac{3}{2}\tau_t$}
        \STATE $\tau_{t+1} := \frac{3}{4}\tau_t $
        \STATE $t := t + 1$
    \ELSE 
        \STATE \textbf{Return} $\tau_t$
    \ENDIF
  \ENDWHILE
  \end{algorithmic}
\end{algorithm}

\begin{algorithm}[H]
  \caption{Maximum point-wise statistic estimation\\
    \textbf{Input:} $\Xtil, W, \theta \leq 1/2, \delta, \tau_t < \Xhat$\\
    \textbf{Output:}  $\max_{i\in J}\Xtil_i \in (1\pm \theta)\Xhat$
  }
  \label{alg:stat-estimation}
  \begin{algorithmic}[1]
  \STATE $\tau := \tau_t / 3$
  \STATE $s := \frac{W\log{2/(|J|^{-1}\delta)}}{\theta^2\tau}$
  \STATE $\Xtil_{i:i \in J} := \Xtil(\theta,s)$
  \STATE \textbf{Return}  $\max_{i:i\in J}\Xtil_i$ 
  
  \end{algorithmic}
\end{algorithm}

\begin{lemma}
\label{lemma:large-tau-lemma}
For any $\tau_t$ such that $\max_{i} X_{i} < \lambda \tau_t$ for some constant $0< \lambda$, there exists a constant $\lambda' > \lambda$, e.g., setting $\lambda': \frac{(\lambda'-\lambda)^2}{\lambda} \geq \theta^2$, such that with probability at least $1-\delta'$, $\max_{i\in J} \Xtil_i \leq \lambda' \tau_t$. 
\end{lemma}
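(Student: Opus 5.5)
We need to prove Lemma~\ref{lemma:large-tau-lemma}: if $\max_i X_i < \lambda\tau_t$, then with probability at least $1-\delta'$ every estimate satisfies $\Xtil_i \le \lambda'\tau_t$. The natural route is to work with the concrete estimators of Algorithms~\ref{alg:pair-clique-approx} and~\ref{alg:pair-nearclique-approx}, whose structure Definition~\ref{def:estimator-accurate} abstracts. For each $i$, $\Xtil_i = Z_i W/s$ (up to the constant in $W$), where $Z_i$ is a sum of $s$ independent Bernoulli variables with mean $X_i/W$, so $\E[\Xtil_i]=X_i$. The sample size is $s = W\log(2|J|/\delta')/(\theta^2\tau_t)$, exactly as set in Algorithm~\ref{alg:tau-estimation}. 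The plan is to prove a one-sided upper-tail bound for each fixed $i$ with failure probability $\delta'/|J|$, and then take a union bound over $i\in J$.

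\textbf{Fixed $i$.} Put $\mu_i=\E[Z_i]=sX_i/W$. We need $\Pr[Z_i > s\lambda'\tau_t/W]$ to be small. Items with small $X_i$ cannot be handled by a relative Chernoff bound around $\mu_i$, since $\mu_i$ may be tiny. Instead, dominate: $Z_i$ is stochastically dominated by a binomial with per-trial mean $\lambda\tau_t/W \ge X_i/W$, whose expectation is $\mu^\ast = s\lambda\tau_t/W = \lambda\log(2|J|/\delta')/\theta^2$. Then apply the Chernoff upper tail at deviation $a = s(\lambda'-\lambda)\tau_t/W$, in the form $\Pr[Z\ge \mu^\ast + a]\le \exp(-a^2/(2\mu^\ast + 2a/3))$ or, for a cruder version, $\exp(-a^2/(3\mu^\ast))$ when $a\le\mu^\ast$. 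Substituting gives an exponent of
\[
\frac{(\lambda'-\lambda)^2}{c\,\lambda}\cdot\frac{\log(2|J|/\delta')}{\theta^2}
\]
for an absolute constant $c$. Under the stated choice $(\lambda'-\lambda)^2/\lambda \ge \theta^2$, inflated by the constant $c$, this exponent is at least $\log(2|J|/\delta')$. So the failure probability per $i$ is at most $\delta'/(2|J|)$.

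\textbf{Union bound and main obstacle.} A union bound over the $|J|$ items gives total failure at most $\delta'$. The hardest step is the constant bookkeeping. The condition $(\lambda'-\lambda)^2/\lambda\ge\theta^2$ as literally stated is too weak by the Chernoff constant (3, or the Bernstein form when $a>\mu^\ast$). Since the lemma only asserts that some constant $\lambda'$ exists, with the displayed condition given as an example ("e.g."), I would first choose $\lambda'$ large enough that $(\lambda'-\lambda)^2/(3\lambda+ (\lambda'-\lambda))\ge \theta^2$. I would then verify that with $\theta=1/2$, $\lambda=1$ (or whatever value Lemma~\ref{lemma:binary-search-tau} needs), $\lambda'\le 3/2$ still works after absorbing the constant. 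If it does not, I would handle the slack through the $\log 2$ factor, or state the $\lambda'$ that is actually required. The stochastic domination step itself is routine: couple each Bernoulli$(X_i/W)$ below a Bernoulli$(\lambda\tau_t/W)$, which is valid once $\lambda\tau_t\le W$, and otherwise the bound holds trivially.
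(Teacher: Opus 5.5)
Your argument is correct, but it takes a different route from the paper's.

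\textbf{The paper's route.} The paper never opens up the estimator. For each fixed $j$ it invokes Definition~\ref{def:estimator-accurate} at an item-dependent accuracy level $\theta'$, defined by $(1+\theta')X_j=\lambda'\tau_t$. It lower-bounds $\theta'\ge \tau_t(\lambda'-\lambda)/X_j$ and checks that the $s$ of Algorithm~\ref{alg:tau-estimation} exceeds the threshold $W\log(2|J|/\delta')/(\theta'^2X_j)$. The condition $(\lambda'-\lambda)^2/\lambda\ge\theta^2$ is exactly what makes that comparison work, and the same union bound over $J$ follows.

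\textbf{What each route buys.} The paper's argument stays inside the abstract framework. However, it tacitly uses point-wise accuracy at every level $\theta'$. This includes $\theta'>1$, which occurs whenever $X_j<\lambda'\tau_t/2$, and $\theta'\to\infty$ as $X_j\to 0$; at $X_j=0$, $\theta'$ is undefined. Lemma~\ref{lemma:Kuv-whp-bound} only justifies the $\theta^2$-form of the Chernoff bound for $\theta\le 1$. Your stochastic domination by $\mathrm{Bin}(s,\lambda\tau_t/w)$ avoids all of this. Every item is reduced to the worst case $X_i=\lambda\tau_t$ with one fixed relative deviation $(\lambda'-\lambda)/\lambda$. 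The cost is that you use the binomial structure of the two concrete estimators rather than the abstract definition. Read for a single $\theta$, that definition gives no control over items with $X_i\ll\tau_t$ anyway.

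\textbf{Your constant worry.} It dissolves. The factor $3$ in $W=3w(\Ess_k)$ (resp.\ $3w(\Ess_{k-1})$) is precisely the Chernoff constant. The estimator is $\Xtil_i=Z_i\,w/s$ with $w=W/3$, not $Z_iW/s$. Writing $L=\log(2|J|/\delta')$, you therefore get $\mu^\ast=s\lambda\tau_t/w=3\lambda L/\theta^2$. For $\lambda'\le 2\lambda$, the bound $\exp(-\eta^2\mu^\ast/3)$ with $\eta=(\lambda'-\lambda)/\lambda$ has exponent $\frac{(\lambda'-\lambda)^2}{\lambda\theta^2}L\ge L$ under the stated condition, with no inflation needed. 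In the applications, $\lambda=1$, $\lambda'=3/2$ and $\theta\le 1/2$ (Corollary~\ref{cor:large-tau-cor} and Lemma~\ref{lemma:stat-estimation}). There the per-item failure is at most $\delta'/(2|J|)$, so neither the $\log 2$ slack nor a larger $\lambda'$ is required. The coupling threshold should also read $\lambda\tau_t\le w$ rather than $\le W$. The complementary case is still trivial, since $Z_i\le s$ gives $\Xtil_i\le w<\lambda'\tau_t$.
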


\begin{proof}
  Fix a point $j\in J$, since $\Xtil$ is a $(\theta,\delta)$-point-wise accurate, with the approximation $\theta'$ and failure probability $|J|^{-1}\delta'$, we have $\Pr[\Xtil_j \notin (1\pm\theta')X_j] < \delta'$ with appropriate $s$.
There exists $\theta'$ (with respect to the fixed $j$) such that $\lambda' = \frac{(1+\theta')X_j}{\tau_t}$, it yields: $1 + \theta' = \frac{\lambda'\tau_t}{X_j}$.
With this setting, we can lower bound $\theta'$ as follows:
\begin{align*}
    \theta' & = \frac{\lambda'\tau_t}{X_j} - 1 
              = \frac{\lambda'\tau_t - X_j}{X_j} 
     \geq \frac{\lambda'\tau_t - \lambda\tau_t}{X_j}
     = \frac{\tau_t(\lambda' - \lambda)}{X_j},
\end{align*}
in which the inequality is because $X_j < \lambda\tau_t$ as stated by the Lemma.
Now, we substitute $(1 + \theta')X_j$ by $\lambda'\tau_t$:
\begin{align*}
  \Pr[\Xtil_j > \lambda'\tau_t] &\leq  \Pr[\Xtil_j > (1 + \theta')X_j]
                                \overset{(a)}{\leq} |J|^{-1}\delta',
\end{align*}

where $(a)$ can be achieved by setting $s > \frac{W\log{2/(|J|^{-1}\delta')}}{\theta'^2X_j}$ for any $j$. In fact, by setting $\lambda': \frac{(\lambda'-\lambda)^2}{\lambda} \geq \theta^2$:

\begin{align*}
\frac{W\log{2/(|J|^{-1}\delta')}}{\theta'^2X_j} &\leq \frac{W\log{2/(|J|^{-1}\delta')}}{\frac{\tau_t^2(\lambda'-\lambda)^2}{X_j^2}X_j}\\
                                                &=\frac{X_jW\log{2/(|J|^{-1}\delta')}}{\tau_t^2(\lambda'-\lambda)^2} \\
                                                &\leq \frac{\lambda W\log{2/(|J|^{-1}\delta')}}{\tau_t(\lambda'-\lambda)^2}\\
                                                &\leq \frac{W\log{2/(|J|^{-1}\delta')}}{\theta^2\tau_t}\\
                                                &= s. 
\end{align*}

Taking the union bound over all $j\in J$, with probability at least $1-\delta'$, $\max_{i\in J}\Xtil_i \leq \lambda'\tau_t$.
\end{proof}

\begin{corollary}
\label{cor:large-tau-cor}
With $\theta = \frac{1}{2}, \lambda = 1, \lambda'=\frac{3}{2}$, with probability at least $1-\delta'$, if $\max_{j}\Xtil_j > \frac{3}{2}\tau$ then $\max_{j}X_j > \tau.$
\end{corollary}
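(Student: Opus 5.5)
The plan is to obtain Corollary~\ref{cor:large-tau-cor} by specializing Lemma~\ref{lemma:large-tau-lemma} and then taking its contrapositive. I will instantiate the lemma with $\lambda = 1$, $\lambda' = \frac{3}{2}$ and $\theta = \frac{1}{2}$, where $\theta$ is the approximation parameter used in Algorithm~\ref{alg:tau-estimation}. The lemma asks that $\lambda'$ satisfy $\frac{(\lambda'-\lambda)^2}{\lambda} \geq \theta^2$. With these values this becomes $\frac{(1/2)^2}{1} = \frac{1}{4} \geq \frac{1}{4} = \theta^2$, which holds with equality.

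The number of samples $s$ is the one computed in Algorithm~\ref{alg:tau-estimation}, namely $s = \frac{W\log{2/(|J|^{-1}\delta')}}{\theta^2\tau}$. This is exactly the quantity the proof of Lemma~\ref{lemma:large-tau-lemma} requires, so the lemma's hypotheses on the estimator are met.

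Next comes the contrapositive. Suppose $\max_j X_j \le \tau$. The lemma is stated under the strict hypothesis $\max_i X_i < \lambda\tau$. Its proof, however, only uses $X_j \le \lambda\tau$: this enters once in the lower bound on $\theta'$, and once in bounding $X_j/\tau^2 \le \lambda/\tau$. So I will note that the non-strict version $\max_j X_j \le 1\cdot\tau$ also works. Under it, the lemma gives that with probability at least $1-\delta'$ we have $\max_j \Xtil_j \le \frac{3}{2}\tau$. Equivalently, on that same event of probability at least $1-\delta'$, if $\max_j\Xtil_j > \frac{3}{2}\tau$ then $\max_j X_j > \tau$.

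Points $j$ with $X_j = 0$ have $\Xtil_j = 0$ deterministically for the counting estimators, and can be excluded from the union bound. The main subtlety is the probabilistic quantifier. The statement has to be read as ``with probability at least $1-\delta'$, the implication holds,'' with the good event being the lemma's union-bound event over all $j\in J$. The lemma controls that event whenever $\max_j X_j \le \tau$. When $\max_j X_j > \tau$, the conclusion of the implication is already true, so it holds trivially.
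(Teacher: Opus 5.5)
Your proposal is correct and is essentially the paper's argument: the corollary is Lemma~\ref{lemma:large-tau-lemma} instantiated at $\theta=\frac12$, $\lambda=1$, $\lambda'=\frac32$ (where $(\lambda'-\lambda)^2/\lambda=\frac14=\theta^2$), followed by taking the contrapositive. Your observation that the lemma's proof still works under the non-strict hypothesis $\max_j X_j\le\tau$ is exactly what is needed for the strict conclusion $\max_j X_j>\tau$, a detail the paper leaves implicit.
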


\begin{lemma}
  \label{lemma:small-tau-lemma}
For any $\tau_t$ such that $\max_{j} X_j > \lambda \tau_t$ for some constant $\lambda\geq 1$, then there exists a constant $\lambda'' \leq (1-\theta)\lambda$ such that with probability at least $1-\delta'$, $\max_j \Xtil_j \geq \lambda'' \tau_t$. 
\end{lemma}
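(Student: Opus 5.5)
The plan is to concentrate on the single index $j^* \in J$ that attains $X_{j^*} = \max_j X_j$. It suffices to show that $\Xtil_{j^*} \geq (1-\theta)X_{j^*}$ with probability at least $1-|J|^{-1}\delta' \geq 1-\delta'$. Once that holds, the hypothesis $X_{j^*} > \lambda\tau_t$ gives
\[
\max_j \Xtil_j \;\geq\; \Xtil_{j^*} \;\geq\; (1-\theta)X_{j^*} \;>\; (1-\theta)\lambda\tau_t,
\]
so $\lambda'' = (1-\theta)\lambda$ works, and any smaller constant works as well. Only a lower tail at one point is needed, so no union bound over $J$ is required. This makes the statement mirror Lemma~\ref{lemma:large-tau-lemma} but simpler.

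The key step is to check that the number of iterations $s$ used by the algorithms is large enough for point-wise accuracy (Definition~\ref{def:estimator-accurate}) to apply at $j^*$ with failure probability $|J|^{-1}\delta'$. In Algorithm~\ref{alg:tau-estimation}, and in the same form in Algorithm~\ref{alg:stat-estimation}, we have
\[
s = \frac{W\log\bigl(2/(|J|^{-1}\delta')\bigr)}{\theta^2\tau_t}.
\]
Since $\lambda \geq 1$ and $X_{j^*} > \lambda\tau_t \geq \tau_t$, it follows that
\[
s \;\geq\; \frac{W\log\bigl(2/(|J|^{-1}\delta')\bigr)}{\theta^2 X_{j^*}},
\]
which is exactly the threshold in Definition~\ref{def:estimator-accurate} at accuracy $\theta$ and failure probability $|J|^{-1}\delta'$. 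Therefore $\Pr[\Xtil_{j^*}\notin(1\pm\theta)X_{j^*}]\leq |J|^{-1}\delta'$, and in particular the lower-tail failure event $\Xtil_{j^*} < (1-\theta)X_{j^*}$ has at most this probability. For Algorithm~\ref{alg:stat-estimation} the denominator uses $\tau = \tau_t/3 \le \tau_t$, so $s$ is only larger and the same argument applies.

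The main subtlety is whether Definition~\ref{def:estimator-accurate} is monotone in $s$, that is, whether accuracy at the threshold number of iterations persists for larger $s$. The definition is phrased as "once $s \geq \ldots$", which I read as guaranteeing accuracy for all $s$ above the threshold. For the concrete estimators in this paper (Lemmas~\ref{lemma:Kuv-whp-bound} and~\ref{lemma:Koneuv-whp-bound}), this is also immediate from the Chernoff bound, since the exponent grows with $s$. The remaining points are minor: the case where $j^*$ is not unique (pick any maximizer), and noting that the lemma's constant $\lambda''$ can be taken as exactly $(1-\theta)\lambda$.
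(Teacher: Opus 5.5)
Your proposal is correct and follows essentially the same argument as the paper. You pick an index with $X_{j} > \lambda\tau_t$, apply point-wise accuracy to get the lower tail $\Xtil_{j} \geq (1-\theta)X_{j} > \lambda''\tau_t$, and note that $s$ suffices because $X_{j} > \lambda\tau_t \geq \tau_t$. The only difference is that the paper takes an unneeded union bound over all such indices, whereas you use a single maximizer $j^*$, which gives the slightly sharper failure probability $|J|^{-1}\delta'$.
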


\begin{proof}
Since $\max_j X_j > \lambda \tau_t$, there exists a point $j'$ such that $X_{j'}> \lambda\tau_t$. For such point $j'$, since the estimator $\Xtil$ is $(\theta,\delta)$-point-wise accurate, we have:

\begin{align*}
  \Pr\left[\Xtil_{j'}  < (1-\theta) X_{j'}\right] < |J|^{-1} \delta',
\end{align*}

with appropriate $s$.
Choosing some $\lambda'' \leq (1-\theta)\lambda$, we have  $\lambda'' \leq \frac{(1-\theta)X_{j'}}{\tau_t}$, or $\lambda''\tau_t \leq (1-\theta)X_{j'}$.
Substituting it to the left hand side of the above equation, we have:

\begin{align*}
  \Pr\left[\Xtil_{j'}  < \lambda''\tau_t\right] < |J|^{-1} \delta'.
\end{align*}

%NOTE: This doesn't need to invoke the union bound
Taking the union bound over all possible $j'$, we have with probability at least $1-\delta'$:

\begin{align*}
  \max_j \Xtil_{j}  \geq \lambda''\tau_t.
\end{align*}

For the rest of the proof, we argue that our choice of the number of iterations $s$ is sufficient, i.e., $s \geq \frac{W\log{2/(|J|^{-1}\delta')}}{\theta^2X_j}$ for any $j'$ such that $X_{j'} > \lambda\tau_t$.
In Algorithm~\ref{alg:tau-estimation}, we set $s = \frac{W\log{2/(|J|^{-1}\delta')}}{\theta^2\tau_t}$, hence:

\begin{align*}
  s &= \frac{W\log{2/(|J|^{-1}\delta')}}{\theta^2\tau_t}
  \geq \frac{W\log{2/(|J|^{-1}\delta')}}{\theta^2X_{j'}/\lambda}\\
  &\geq \frac{\lambda W\log{2/(|J|^{-1}\delta')}}{\theta^2X_{j'}}
  \geq \frac{W\log{2/(|J|^{-1}\delta')}}{\theta^2X_{j'}},
\end{align*}
and the Lemma follows.
%   & \leq \exp\left(-\frac{\theta^2sa_{i'j'}^2}{3\Vert W \Vert} \right)\\
%   & = \exp\left(-\frac{3\theta^2\Vert W \Vert a_{i'j'}^2c\log{n} }{3\Vert W \Vert \theta^2 \tau} \right)\\
%   &= \exp\left(-\frac{a_{i'j'}^2c\log{n} }{\tau} \right)\\
%   & \leq \exp\left(-\lambda c\log{n}\right)
%   = n^{-c\lambda}.
\end{proof}

\begin{corollary}
\label{cor:small-tau-cor}
With $\theta = \frac{1}{2}, \lambda = 3, \lambda''=\frac{3}{2}$, with probability at least $1-\delta'$, if $\max_{j}\Xtil_j < \frac{3}{2}\tau_t$ then $\max_{j}X_j< 3\tau_t.$
\end{corollary}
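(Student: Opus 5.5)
The plan is to derive Corollary~\ref{cor:small-tau-cor} as the contrapositive of Lemma~\ref{lemma:small-tau-lemma}, instantiated with the stated constants. Set $\theta=\frac{1}{2}$ (the value fixed in Algorithm~\ref{alg:tau-estimation}) and $\lambda=3$. This satisfies the lemma's requirement $\lambda\ge 1$. Then $(1-\theta)\lambda=\frac{3}{2}$, so the choice $\lambda''=\frac{3}{2}$ meets $\lambda''\le(1-\theta)\lambda$ with equality.

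Lemma~\ref{lemma:small-tau-lemma} then gives the following. If $\max_j X_j>3\tau_t$, then with probability at least $1-\delta'$ we have $\max_j\Xtil_j\ge\frac{3}{2}\tau_t$.

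Before applying it, I would check the sample-size hypothesis. With $\lambda=3$, any $j'$ with $X_{j'}>3\tau_t$ satisfies $X_{j'}>\tau_t$. Therefore the number of iterations set in Algorithm~\ref{alg:tau-estimation},
\[
s=\frac{W\log\bigl(2/(|J|^{-1}\delta')\bigr)}{\theta^2\tau_t},
\]
is at least the point-wise threshold $\frac{W\log(2/(|J|^{-1}\delta'))}{\theta^2X_{j'}}$ of Definition~\ref{def:estimator-accurate}. This is exactly the computation at the end of the lemma's proof, so it carries over.

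To conclude, I would phrase the statement as an inclusion of events. Let $\mathcal{E}$ be the event in which the point-wise guarantee holds for a maximizing index $j'$. On $\mathcal{E}$, the implication ``$\max_jX_j>3\tau_t\Rightarrow\max_j\Xtil_j\ge\frac{3}{2}\tau_t$'' holds. Its contrapositive is ``$\max_j\Xtil_j<\frac{3}{2}\tau_t\Rightarrow\max_jX_j\le 3\tau_t$''. This is the claimed statement, up to a non-strict versus strict inequality on $3\tau_t$.

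The main (mild) obstacle is that boundary case $\max_jX_j=3\tau_t$. I would handle it by applying the lemma with $\lambda=3-\epsilon'$ and $\lambda''=(1-\theta)\lambda$ slightly below $\frac{3}{2}$, or simply by accepting $\le$, which is all that Lemma~\ref{lemma:binary-search-tau} needs downstream. Since the deterministic part involves no further union bound, the probability is at least $1-\delta'$ as stated.
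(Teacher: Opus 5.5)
Your main line is the paper's argument. The corollary is given there without a separate proof; it is Lemma~\ref{lemma:small-tau-lemma} with $\theta=\frac{1}{2}$, $\lambda=3$, $\lambda''=(1-\theta)\lambda=\frac{3}{2}$, read contrapositively, and your sample-size check matches the lemma's.

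You are right that this literally yields only $\max_j X_j\le 3\tau_t$; the paper's implicit derivation has the same slip. However, your first patch does not work as stated. With $\lambda=3-\epsilon'$ you get $\lambda''=\frac{3-\epsilon'}{2}<\frac{3}{2}$. The contrapositive then needs the hypothesis $\max_j\Xtil_j<\frac{3-\epsilon'}{2}\tau_t$, which is stronger than the one you are given. It could be rescued only by letting $\epsilon'\downarrow 0$ and applying continuity of probability to the nested events $\{\max_j\Xtil_j\ge\frac{3-\epsilon'}{2}\tau_t\}$, which all live on the same run of the estimator.

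The direct fix is already contained in your event $\mathcal{E}$. The lemma's argument uses only two facts: $X_{j'}\ge\tau_t$ (for the sample size) and $\lambda''\tau_t\le(1-\theta)X_{j'}$. Both hold whenever $X_{j'}=\max_jX_j\ge 3\tau_t$, so on $\mathcal{E}$ you get $\max_j\Xtil_j\ge\Xtil_{j'}\ge\frac{1}{2}X_{j'}\ge\frac{3}{2}\tau_t$. Contraposing gives the strict bound $\max_jX_j<3\tau_t$ exactly as stated; if $\max_jX_j<3\tau_t$ there is nothing to prove.

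Falling back to $\le$ would also be harmless downstream. Lemma~\ref{lemma:binary-search-tau} is only ever used in the non-strict form $\Khat\le 4\tau_t$.
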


\begin{lemma}
  \label{lemma:binary-search-tau}
Let $\tau_t$ is the output of Algorithm~\ref{alg:tau-estimation}, with probability at least $1-\delta$, $\tau_t < \max_j X_j <4\tau_t$.
\end{lemma}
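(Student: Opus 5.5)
The plan is to analyze the loop of Algorithm~\ref{alg:tau-estimation} as a sequence of rounds $t=1,2,\ldots$. In round $t$ we run the estimator with $s$ calibrated to $\tau_t$ and failure parameter $|J|^{-1}\delta'$. We then apply Corollary~\ref{cor:large-tau-cor} and Corollary~\ref{cor:small-tau-cor} to each round, and union bound over rounds. The number of rounds is at most $\log_{4/3}U$, since $\tau_1 = U$, each round shrinks $\tau_t$ by a factor $3/4$, and the loop stops once $\tau_t<1$. Since $\delta' = \delta/\log_{4/3}U$ and each round contributes failure probability $O(\delta')$, the total failure probability is at most $\delta$, up to the constant from invoking the two corollaries per round. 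If needed, we absorb that constant by noting that in any round only one of the two events is relevant.

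\textbf{Upper bound.} First we show the returned value satisfies $\Xhat = \max_j X_j < 4\tau_t$. Initially $\tau_1 = U \geq \Xhat$, so the invariant $\Xhat \le \tfrac{4}{3}\cdot 3\tau_t$ is at least plausible at the start. The invariant we actually maintain is $\Xhat < 4\tau_t$ at every round reached. At $t=1$ it holds since $\Xhat \le U < 4U$. Suppose round $t$ does not return, i.e.\ $\max_i\Xtil_i < \tfrac32\tau_t$. By Corollary~\ref{cor:small-tau-cor}, with probability $1-\delta'$ this gives $\Xhat < 3\tau_t = 4\tau_{t+1}$, so the invariant passes to round $t+1$. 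Thus, at whichever round returns, $\Xhat<4\tau_t$.

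\textbf{Lower bound.} The returning round has $\max_i \Xtil_i \ge \tfrac32\tau_t$. By Corollary~\ref{cor:large-tau-cor}, with probability $1-\delta'$ this implies $\Xhat > \tau_t$. Here we must check that the corollary's hypothesis regime is correct. Lemma~\ref{lemma:large-tau-lemma}, with $\lambda=1$, $\lambda'=3/2$, $\theta=1/2$, needs $(\lambda'-\lambda)^2/\lambda = 1/4 \ge \theta^2$, which holds. Its contrapositive is: if $\Xhat \le \tau_t$ then w.h.p.\ $\max\Xtil \le \tfrac32\tau_t$. The boundary case of equality at $\tfrac32\tau_t$ is handled by the strict/non-strict conventions, and it occurs with probability zero or can be absorbed into the failure event.

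\textbf{Termination.} We also need the loop to actually return rather than exit with $\tau_t<1$. Suppose $\Xhat \geq 1$, which holds for integer counts whenever the maximum is nonzero. Then once $\tau_t \le \Xhat/3$, Lemma~\ref{lemma:small-tau-lemma} with $\lambda=3$ and $\lambda''=(1-\theta)\lambda = 3/2$ forces $\max\Xtil\ge\tfrac32\tau_t$ w.h.p. So the loop returns no later than the first round with $\tau_t\le \Xhat/3$, and since $\Xhat\ge1$ this happens before $\tau_t$ drops below $1$. This step is consistent with the invariant argument above.

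The main obstacle I expect is the bookkeeping of the union bound. Events across rounds are dependent through the adaptively chosen $\tau_t$. The fix is to condition on the deterministic sequence $U,\tfrac34U,\ldots$: the event ``the corollaries hold at the $t$-th grid value'' is defined for every grid point, independently of whether that round is reached. We then union bound over all $\le\log_{4/3}U$ grid points, each failing with probability at most $2\delta'$. This gives $2\delta$, which is tightened to $\delta$ by observing that at a given grid point only the corollary matching the branch taken is used.
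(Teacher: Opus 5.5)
Your core argument is the paper's. The upper bound $\Xhat<4\tau_t$ comes from Corollary~\ref{cor:small-tau-cor} at the last non-returning round, using $3\tau_{t-1}=4\tau_t$. The lower bound $\Xhat>\tau_t$ comes from Corollary~\ref{cor:large-tau-cor} at the returning round. The failure probability comes from a union bound over at most $\log_{4/3}U$ rounds with $\delta'=\delta/\log_{4/3}U$. Two of your additions are genuine improvements in care: the base case $\Xhat\le U<4U$ (the paper's ``step $t-1$'' does not exist when $t=1$), and the fixed-grid coupling that handles the adaptively reached rounds.

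There are two corrections. First, drop the Termination paragraph, which is false as written. From $\Xhat\ge 1$ you only get $\Xhat/3\ge 1/3$, while the loop exits once $\tau_t<1$. So a round with $1\le\tau_t\le\Xhat/3$ need not exist. It is guaranteed only when $\Xhat\ge 4$, in which case the first grid value $\le\Xhat/3$ lies in $(\Xhat/4,\Xhat/3]$. For instance, with $\Xhat=1$ every reached round has $\tau_t\ge\Xhat$, which is exactly the regime where Corollary~\ref{cor:large-tau-cor} says the return test should fail. The lemma is a statement about the returned value, and the paper's proof simply assumes the algorithm stops at some step $t$, so no termination claim is needed.

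Second, the reduction from $2\delta$ to $\delta$ is not justified by ``the branch taken,'' since the branch is random. The correct reason is that $\Xhat$ is deterministic. At each grid value at most one of the hypotheses $\Xhat\le\tau_t$ and $\Xhat>3\tau_t$ holds, so at most one of the two corollaries can fail there.
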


\begin{proof}
    Assume the Algorithm~\ref{alg:tau-estimation} stops at step $t$, hence outputs $\tau_t$. 
    At step $t-1$, $\max_{j}\Xtil_j < \frac{3}{2}\tau_{t-1}$, then by Corollary~\ref{cor:small-tau-cor} we have $\max_{j} X_j < 3\tau_{t-1} = 3\frac{4}{3}\tau_t = 4\tau_t$.
    At step $t$, $\max_{j}\Xtil_j > \frac{3}{2}\tau_{t}$, then by Corollary~\ref{cor:large-tau-cor} we have $\max_{j} X_j > \tau_{t}$. 
    Applying union bound on $t < \log_{4/3}U$ steps at most, with probability at least $1-\delta'\log_{4/3}U = 1-\delta$ we have $\tau_t < \max_{j}X_j <4\tau_t$, the Lemma follows.
\end{proof}

\begin{restatable}[]{lemma}{maxstat}
  \label{lemma:stat-estimation}
    For $\theta > 0, 1 > \delta > 0$, with probability at least $1-3\delta$ in  Algorithm~\ref{alg:stat-estimation},  $\max_{j}\Xtil_j\in (1\pm\theta)\max_{j}X_j$. 
\end{restatable}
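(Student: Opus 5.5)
The plan is to condition on the success of Algorithm~\ref{alg:tau-estimation} and then argue about the single run in Algorithm~\ref{alg:stat-estimation}. By Lemma~\ref{lemma:binary-search-tau}, with probability at least $1-\delta$ the input $\tau_t$ satisfies $\tau_t < \Xhat < 4\tau_t$. Algorithm~\ref{alg:stat-estimation} sets $\tau = \tau_t/3$, so on this event $\Xhat > 3\tau$. The sample count is
\[
s = \frac{W\log(2/(|J|^{-1}\delta))}{\theta^2\tau}.
\]
We then split the points $j\in J$ into two groups, the maximizer $j^*$ with $X_{j^*}=\Xhat$ and all other points, and bound each group's contribution to the failure probability by $\delta$.

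\emph{Lower bound.} Since $X_{j^*} = \Xhat > 3\tau \geq \tau$, we have $s \geq \frac{W\log(2/(|J|^{-1}\delta))}{\theta^2 X_{j^*}}$. By point-wise accuracy (Definition~\ref{def:estimator-accurate}) with failure parameter $|J|^{-1}\delta$, with probability at least $1-\delta$ we get $\Xtil_{j^*}\in(1\pm\theta)\Xhat$. This gives $\max_j \Xtil_j \geq (1-\theta)\Xhat$, and it is exactly the argument of Lemma~\ref{lemma:small-tau-lemma}.

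\emph{Upper bound.} We must show that no $j$ has $\Xtil_j > (1+\theta)\Xhat$. For points with $X_j \geq \tau$, the same sample bound applies, so $\Xtil_j \leq (1+\theta)X_j \leq (1+\theta)\Xhat$ except with probability $|J|^{-1}\delta$ each. For points with $X_j < \tau$, the estimator need not be $(1\pm\theta)$-accurate, so we reuse the device of Lemma~\ref{lemma:large-tau-lemma}. We define $\theta'_j$ by $(1+\theta'_j)X_j = (1+\theta)\Xhat$, which gives
\[
\theta'_j \geq \frac{(1+\theta)\Xhat - X_j}{X_j} \geq \frac{3\tau - \tau}{X_j} = \frac{2\tau}{X_j}.
\]
The sample size required for accuracy $\theta'_j$ at $j$ is
\[
\frac{W\log(\cdot)}{\theta_j'^2 X_j} \leq \frac{W\log(\cdot)\,X_j}{4\tau^2} \leq \frac{W\log(\cdot)}{4\tau} \leq s,
\]
where the last step uses $\theta \leq 1/2$. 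Hence $\Pr[\Xtil_j > (1+\theta)\Xhat] \leq |J|^{-1}\delta$. A union bound over all $j$ costs $\delta$ in total.

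Combining the three failure events (the $\tau$-search, the lower bound, and the upper bound) gives total failure probability at most $3\delta$. On the complementary event, $\max_j\Xtil_j\in(1\pm\theta)\Xhat$.

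The main obstacle is the upper tail for points with small $X_j$. There we are using point-wise accuracy with a $j$-dependent, possibly large accuracy parameter $\theta'_j$, which lies outside the literal definition's range. I would justify this the same way Lemma~\ref{lemma:large-tau-lemma} does: in the concrete estimators the guarantee comes from a Chernoff bound whose upper tail holds for any $\theta'$ once the sample condition is met. If needed, I would state that as the property being used.
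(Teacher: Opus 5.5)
Your proof is correct, at the same level of rigor as the paper's, but it is organized differently. The paper argues in two phases:
\begin{itemize}
\item Lemma~\ref{lemma:large-tau-lemma} with $\lambda=1,\lambda'=\frac32$ shows that every point with $X_j<\tau$ has $\Xtil_j<\frac32\tau$.
\item Lemma~\ref{lemma:small-tau-lemma} with $\lambda=3,\lambda''=\frac32$ shows that $\max_j\Xtil_j>\frac32\tau$.
\item Together these force the maximizer of $\Xtil$ to be a point with $X_j>\tau$, and point-wise accuracy of all such points then gives the sandwich.
\end{itemize}
You skip the separation step. You compare every point directly with the target $(1+\theta)\Xhat$, using the same variable-$\theta'$ device as Lemma~\ref{lemma:large-tau-lemma} but with threshold $(1+\theta)\Xhat$ instead of $\frac32\tau$, and you take the lower bound from $j^*$ alone. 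This is more direct. It also shows that the paper's middle step is redundant, since accuracy at $j^*$ already gives $\max_j\Xtil_j\ge(1-\theta)\Xhat\ge\frac12\Xhat>\frac32\tau$. The paper's organization gains only the reuse of its two lemmas as black boxes with fixed constants.

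Your probability accounting also differs. The paper treats $\tau_t<\Xhat$ as a precondition and charges the $\tau$-search's $\delta$ separately in Theorem~\ref{theorem:max-pair-clique-utility}. You fold that failure into the $3\delta$. Given the precondition, your argument fails with probability at most $2\delta$, so either reading of the statement is covered.

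On the step you flag: both proofs apply Definition~\ref{def:estimator-accurate} with a $j$-dependent $\theta'_j$ that exceeds $1$ (and is undefined when $X_j=0$). The definition does not literally cover this case. Your proposed justification needs a small correction. For $\theta'>1$ the multiplicative Chernoff upper tail is $\exp(-(\theta')^2\mu/(2+\theta'))$, not $\exp(-(\theta')^2\mu/3)$. So the sample condition $s\ge W\log(2/\delta)/((\theta')^2X_j)$ does not by itself suffice for large $\theta'$.

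The argument still goes through here because of slack:
\begin{itemize}
\item Since $\theta'_j\ge 2\tau/X_j>2$, you have $(\theta'_j)^2/(2+\theta'_j)\ge\theta'_j/2$.
\item For the concrete samplers, $W=3w$ and $\mu=sX_j/w$, so the exponent is at least $\theta'_j\mu/2\ge 3\log(2|J|/\delta)/\theta^2$, which is ample.
\item Points with $X_j=0$ have $\Xtil_j=0$ deterministically for these samplers.
\end{itemize}
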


\begin{proof}
  We split all points $j\in J$ into 2 sets $\{j': X_{j'} <\tau\}$ and the remaining point $\{\tilde{j}: X_{\tilde{j}} > \tau\}$.
  We will prove that the Algorithm will not output $X_{j'}$ for any point $j'$.
    
    For every $j'$ such that $X_{j'} < \tau$, applying Lemma~\ref{lemma:large-tau-lemma}, given $\theta \leq 1/2$, setting $\lambda = 1, \lambda' = \frac{3}{2}$,  we have $\max_{j'} \Xtil_{j'} < \frac{3}{2}\tau$ with probability at least $1 - \delta$ by taking the union bound on all $j'$.

    Because $\max_{j\in J} X_j > \tau_t = 3\tau$ (by the result of Lemma~\ref{lemma:binary-search-tau}), applying Lemma~\ref{lemma:small-tau-lemma}, given $\theta \leq 1/2$, setting $\lambda = 3, \lambda'' = \frac{3}{2}$, we have $\max_{j}\Xtil_j > \frac{3}{2}\tau$ with probability at least $1-\delta$ by taking the union bound on all $j$.

    By  taking the union bound on all $j$ and $j'$, with probability at least $1-2\delta$, Algorithm~\ref{alg:stat-estimation} will not output any $\Xtil_{j'}$.

    The remaining sets $\{\tilde{j}\}$ that $X_{\tilde{j}} > \tau$. Because $\Xtil$ is $(\theta,\delta)$-point-wise-accurate , $\forall \tilde{j}: \Xtil_{\tilde{j}} \in (1\pm\theta)X_{\tilde{j}}$, since the number of iterations $s=\frac{W\log{2/(|J|^{-1}\delta)}}{\theta^2\tau} > \frac{W\log{2/(|J|^{-1}\delta)}}{\theta^2X_{\tilde{j}}}$. Applying the union bound on all point $\tilde{j}$, the Lemma follows.
%    \dungnote{Check the constants}
\end{proof}

\begin{theorem}
  \label{theorem:max-stat-runtime}
  The maximum point-wise statistic estimation routine (Algorithm~\ref{alg:tau-estimation} for $\tau_t$ followed by Algorithm~\ref{alg:stat-estimation} to produce $\max_{j\in J} X_j$) takes $O\left(\frac{W\log{2/(|J|^{-1}\delta')}}{\theta^2\max_{j\in J} X_j}\right)$ iterations of the estimator $\Xtil$.
\end{theorem}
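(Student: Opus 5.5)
The plan is to count estimator invocations in the two phases separately: the $\tau$-search of Algorithm~\ref{alg:tau-estimation} and the single final call of Algorithm~\ref{alg:stat-estimation}. Throughout I condition on the high-probability event of Lemma~\ref{lemma:binary-search-tau}. Under that event, every $\tau_t$ visited before termination stays above $\Xhat/4$. The reason is that the loop stops at the first $t$ with $\max_i\Xtil_i\ge\frac32\tau_t$. By Lemma~\ref{lemma:small-tau-lemma} (with $\lambda=3$, $\theta=1/2$), once $\tau_t<\Xhat/3$ we have $\max_i \Xtil_i\ge \frac32\tau_t$ with probability $1-\delta'$, so the loop stops at that round. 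Hence every $\tau_t$ the algorithm uses satisfies $\tau_t\ge \frac{3}{4}\cdot\frac{\Xhat}{3}=\Xhat/4$, as in the proof of Lemma~\ref{lemma:binary-search-tau}.

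\textbf{Phase 1 (geometric sum).} At round $t$, $\tau_t=(3/4)^{t-1}U$ and $\theta=1/2$, so the round uses $s_t=\frac{4W\log(2|J|/\delta')}{\tau_t}$ samples. These costs grow geometrically by a factor $4/3$ per round, so the total $\sum_t s_t$ is at most a constant ($\sum_{i\ge0}(3/4)^i=4$) times the last term. The last term has $\tau_t\ge \Xhat/4$, so Phase 1 costs $O\big(W\log(2|J|/\delta')/\Xhat\big)$. This holds even though there can be up to $\log_{4/3}U$ rounds.

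\textbf{Phase 2.} Algorithm~\ref{alg:stat-estimation} uses $\tau=\tau_t/3\ge \Xhat/12$, so its single call costs $s=\frac{W\log(2|J|/\delta)}{\theta^2\tau}=O\big(W\log(2|J|/\delta)/(\theta^2\Xhat)\big)$.

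Adding the two phases, and using $\theta\le 1/2$ so that the Phase 1 term is absorbed into the $1/\theta^2$ bound, gives the claimed total of $O\big(W\log(2/(|J|^{-1}\delta'))/(\theta^2\Xhat)\big)$ iterations. Here $\delta'$ and $\delta$ differ only by the $\log_{4/3}U$ factor inside a logarithm, which I treat as lower order.

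The main obstacle is to make the lower bound $\tau_t\ge\Xhat/4$ on the last visited $\tau_t$ rigorous. That bound is what the geometric-sum argument needs, and it only holds with high probability, so the statement should be read as holding on the good event (or in expectation, if one also bounds the contribution of the failure event). I would first try the conditioning route: on failure, the algorithm runs at most $\log_{4/3}U$ additional rounds, each costing at most $O(W\log(2|J|/\delta'))$ when $\tau\ge1$. That contribution is weighted by probability $\delta$, so it is negligible for the stated bound.
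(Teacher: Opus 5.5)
Your proposal is correct and follows the paper's proof. On the event of Lemma~\ref{lemma:binary-search-tau}, every visited $\tau_t$ is at least $\Xhat/4$, so the $\tau$-search cost is a geometric sum dominated by its last term, and the single call of Algorithm~\ref{alg:stat-estimation} uses $\tau=\tau_t/3\ge\Xhat/12$. You are even a bit more careful than the paper about $\theta=1/2$ in the search phase, and since $\delta'\le\delta$ the Phase~2 term is bounded directly, with no ``lower order'' step needed.

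Your closing aside on expectation is not needed for the statement, and it is not right as argued. A failure-event cost of order $W\log(2|J|/\delta')$ weighted by $\delta$ can exceed $W\log(2|J|/\delta')/(\theta^2\Xhat)$ when $\Xhat\gg 1/(\delta\theta^2)$. An expected bound would instead use that each round past the first $\tau_t<\Xhat/3$ is reached with conditional probability at most $\delta'$ while costing only a factor $4/3$ more, which gives a convergent geometric series.
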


\begin{proof}
  In Algorithm~\ref{alg:tau-estimation}, the while statement loops at most $t_0$ times, with $\frac{U}{(4/3)^{t_0 - 1}} \geq \Xhat/4$ (Lemma~\ref{lemma:binary-search-tau}), or $t_0 < \log_{4/3}(4U/\Xhat)$, each invokes the estimator $\Xtil$ with $s=\frac{W\log{2/(|J|^{-1}\delta')}}{\theta^2\tau_t}$ iterations (note that $\tau_t$ depends on the iteration $t$).
  Therefore, the total number of iterations that $\Xtil$ has to run is at most $\sum_{t=1}^{t_0}\frac{W\log{2/(|J|^{-1}\delta')}}{\theta^2\tau_t} = \sum_{t=1}^{\log_{4/3}(4U/\Xhat)}\frac{W\log{2/(|J|^{-1}\delta')}(4/3)^{t-1}}{\theta^2U} = O\left(\frac{W\log{2/(|J|^{-1}\delta')}}{\theta^2\Xhat}\right)$, since $\sum_{t=1}^{\log_{4/3}{(4U/\Xhat)}} (4/3)^{t-1} = O(U/\Xhat)$ as the sum of a geometric series.
  Algorithm~\ref{alg:stat-estimation} takes $\frac{W\log{2/(|J|^{-1}\delta')}}{\theta^2\tau}$ iterations, with $\tau > \Xhat/12$ (as the result of Lemma~\ref{lemma:binary-search-tau}).
  Both algorithms require $O\left(\frac{W\log{2/(|J|^{-1}\delta')}}{\theta^2\Xhat}\right)$ iterations of $\Xtil$ in total.
\end{proof} 

%%% Local Variables:
%%% mode: latex
%%% TeX-master: "main"
%%% End:

% \input{pair_clique}
% \input{near_clique}

\end{document}